\documentclass[ecta,nameyear,draft]{econsocart}
\RequirePackage[colorlinks,citecolor=blue,linkcolor=blue,urlcolor=blue]{hyperref}
\usepackage{booktabs}
\usepackage{tikz}
\usepackage{multirow}
\usepackage{amsmath}
\usepackage[section]{placeins}
\usepackage{float}
\usepackage{subcaption}

\DeclareMathOperator*{\argmax}{arg\,max}

\startlocaldefs

\numberwithin{equation}{section}

\theoremstyle{plain}
\newtheorem{prop}{Proposition}

\newtheorem{corollary}{Corollary}
\usepackage[makeroom]{cancel}
\newtheorem{assumption}{Assumption}
\newtheorem{theorem}{Theorem}

\newtheorem{lemma}{Lemma}

\theoremstyle{remark}

\newtheorem{remark}{Remark}
\usepackage{comment}

\def\ci{\perp\!\!\!\perp}

\endlocaldefs

\begin{document}
\begin{frontmatter}

\title{Identification and Estimation of Semiparametric Multilayered Sample Selection Models}
\runtitle{Semiparametric Multilayered Selection}

\begin{aug}
\author[id=au1,addressref={add1}]{\fnms{Dongwoo}~\snm{Kim}\ead[label=e1]{dongwook@sfu.ca}}

\address[id=add1]{%
\orgname{Simon Fraser University and Korea University}}
\end{aug}

\support{The author gratefully acknowledges support from the Social Sciences and Humanities Research Council of Canada under the Insight Grant (435-2024-0322).}

\begin{abstract}
Many selection problems are multilayered: agents first decide whether to participate and then sort among ordered or unordered categories. This paper shows that the sorting layer changes the geometry of identification. Unlike binary selection, in which selection bias can be summarized by a scalar control function, ordered and multinomial sorting generally produce multi-index control functions whose dimension determines the continuous covariate variation needed for identification. I establish matched non-identification and point-identification results for both architectures, showing how nonlinearity in the selection structure can substitute for excluded variables. I also show how additional structural restrictions reduce the control-function dimension and make estimation practical. I propose $\sqrt{n}$-consistent two-step sieve plug-in estimators and apply the framework to gender wage gaps among Korean college graduates. Accounting for sorting reshapes the entry-level gap along the firm-size margin, where the corrected female coefficient turns positive for large-firm employment.
\end{abstract}

\end{frontmatter}

\section{Introduction}\label{sec: introduction}

\citet{heckman1974shadow, heckman1979sample} pioneered the econometric analysis of selection bias by modeling the joint determination of participation and outcomes. Heckman's insight has become one of the most influential ideas in microeconometrics. Yet Heckman's framework treats selection as a binary event: an individual either participates or not. In many empirical settings, the selection structure is considerably richer. For instance, conditional on entering the labor force, workers sort into specific occupations, firms, or industries. The outcome of interest is shaped by \emph{two layers} of selection: participation and sorting among alternatives. Ignoring the sorting layer conflates within-occupation wage effects with between-occupation composition effects, potentially distorting policy implications.

This paper develops semiparametric models for \emph{multilayered selection} that achieve point identification by leveraging nonlinearity in the selection mechanism rather than exclusion restrictions. I consider two distinct selection architectures. \emph{Vertical sorting} arises when categories can be meaningfully ordered (for example, by job quality, firm size, or amenity provision) and can be modeled through ordered threshold-crossing processes. \emph{Horizontal sorting} arises when categories are unordered (such as STEM vs. non-STEM jobs) and requires a multinomial choice framework. For each architecture, I characterize the resulting selection bias, establish conditions under which the outcome equation parameters are point identified, and propose computationally tractable sieve-based estimators.

The central identification challenge in multilayered selection is that the selection bias function generally depends on multiple selection indices: the threshold functions delineating categories or the utility indices governing multinomial choice. This substantially complicates identification relative to the binary selection case, where the bias can be a function of scalar selection probability. I establish identification results for both selection architectures. First, when the ordered selection process is governed by a single index, the selection bias reduces to a function of that index, so that a single continuously distributed covariate together with nonlinearity in the selection index suffices for point identification without an exclusion restriction. When the ordered selection process is fully nonparametric, the selection bias becomes a function of \emph{two} indices simultaneously. I show that at least three continuous covariates are required to identify the outcome parameters, and that the requirement is binding by exhibiting an explicit non-identification result when fewer continuous covariates are available. This sharp increase in the identification requirement is a consequence of the richer index structure and is, to my knowledge, new in the literature.

Second, for horizontal sorting modeled as multinomial choice with $K+1$ categories (including an outside option like unemployment), the bias correction function generally depends on $K$ indices. I show that additional structural restrictions on the preference heterogeneity reduce the dimensionality of the selection bias. Under a multinomial logit selection, the bias collapses to a single-index control function. A distinctive feature of the multinomial logit specification is that the nonlinearity condition for identification is automatically satisfied even with a linear utility specification. This contrasts with the ordered case, where the control function is approximately linear under Gaussian errors, making identification fragile without exclusion restrictions. Under a weaker exchangeability condition on the taste shocks, I exploit the theory of symmetric polynomials to approximate the bias by a function of a small number of elementary symmetric polynomials of the choice probabilities, providing a practical dimensional reduction that makes semiparametric estimation feasible even with moderately many choice categories.

For estimation, I propose two-step sieve plug-in estimators that can be implemented using standard software. The first step estimates the selection equation nonparametrically using sieves; the second step includes the estimated control functions as nonparametric regressors in the partially linear outcome regression, with heteroskedasticity-robust standard errors. I establish $\sqrt n$-consistency and asymptotic normality under an $o_p(n^{-1/4})$ rate condition on the first-stage sieve estimation. Monte Carlo simulations across seven data-generating processes confirm that the sieve estimators achieve near-oracle performance with correct coverage, and the corrections are robust to weak nonlinearity.

I apply the proposed framework to estimate the gender wage gap among college graduates in South Korea using the Graduates Occupational Mobility Survey (GOMS). Three selection architectures are implemented: an ordered model for firm-size sorting, and multinomial models for field (STEM vs.\ non-STEM) and sector (public vs.\ private) sorting. The uncorrected female hourly-wage penalty is around 5--6 log points in SMEs, non-STEM jobs, and the private sector, around 4 log points in large firms, and near zero (about 1 log point) in STEM and the public sector. After correction in the ordered model, the large-firm female coefficient turns positive, indicating a small conditional premium rather than a penalty, while the SME penalty is little changed. In the field and sector models the correction is modest. The corrected gap has also narrowed over the sample period (2008--2019). These empirical findings connect to a large literature on the gender wage gap, which has long recognized that selection into employment and across occupations is a first-order concern for measuring the gap.\footnote{See, for example, \citet{neal2004measured, olivetti2008unequal, blau2017gender}. \citet{blau2024selection} provide recent evidence that selection into employment substantially affects measured gender wage gaps in the United States, finding that correcting for selection narrows the gap by 15--20\% over a four-decade period.} Existing corrections in this literature typically either impose parametric distributional assumptions \citep{mulligan2008selection} or settle for partial identification \citep{blundell2007changes, lee2009bounds}. The current application adds the intensive-margin sorting layer that these studies abstract from.

This paper also connects to several strands of the econometrics literature. In the binary selection setting, \citet{heckman1974shadow, heckman1979sample} established the foundational control function approach under joint normality, while \citet{chamberlain1986asymptotic}, \citet{ahn1993semiparametric}, \citet{powell1989semiparametric}, \citet{newey1990semiparametric}, \citet{newey2009two}, and \citet{das2003nonparametric} developed semiparametric/nonparametric alternatives, all requiring exclusion restrictions. Recent work has pursued identification without excluded variables through partial identification \citep{lee2009bounds, honore2020selection}, heteroscedasticity \citep{lewbel2007endogenous, klein2010heteroskedastic}, and functional form variation: \citet{escanciano2016identification} showed that nonlinearity in the selection mechanism can substitute for an exclusion restriction, and \citet{kim2025point} applied this to the semiparametric selection model with a linear outcome equation. \citet{pan2024locally} develop a related strategy using debiased machine learning. My paper generalizes the existing frameworks to multilayered selection, where fundamentally new identification arguments are needed when the bias depends on multiple indices.

For multinomial selection, \citet{lee1983generalized} coupled a logit selection specification with joint normality, \citet{dubin1984econometric} relaxed the outcome error distribution, and \citet{dahl2002mobility} introduced a semiparametric polynomial correction. \citet{bourguignon2007selection} compared and extended these approaches. All require exclusion restrictions or parametric distributional assumptions. \citet{sheng2025social} exploit exchangeability and elementary symmetric polynomials for dimensionality reduction in social interaction models; I adapt this device to the multinomial sample selection setting. The Roy model tradition \citep{roy1951some, heckman1990varieties, french2011identification, heckman2018unordered} provides the theoretical foundation for sorting across sectors. \citet{dhaultfoeuille2013inference} obtain an identification result for a binary extended Roy model in which selection is driven by both potential earnings and a non-pecuniary cost. Exploiting that additive structure together with continuity of at least one covariate, they point-identify the non-pecuniary component without exclusion restrictions or large-support conditions. Identifying the covariate effects on sector-specific earnings, however, still requires either an exclusion restriction or an identification-at-infinity argument on potential outcomes. Most directly related is \citet{kroft2024lee}, who extend \citet{lee2009bounds}'s nonparametric bounds to multilayered settings under conditional selection monotonicity. Their bounds impose minimal structural assumptions but can be wide in practice. My approach trades that nonparametric generality for point identification.

The remainder of the paper is organized as follows. Section~\ref{sec: model} introduces the multilayered selection framework and establishes identification. Section~\ref{sec: estimator} presents the estimators and their asymptotic properties. Section~\ref{sec: simulations} evaluates finite-sample performance through simulations. Section~\ref{sec: application} applies the method to gender wage gaps among Korean college graduates. Section~\ref{sec: conclusion} concludes. Technical proofs and additional simulation and empirical results are provided in the Appendix.

\section{The Multilayered Selection Models}\label{sec: model}

Consider a population of individuals indexed by $i$, each characterized by observable covariates $X_i \in \mathcal{X} \subseteq \mathbb{R}^{d_X}$, a discrete selection indicator $D_i \in \mathcal{C} := \{0, 1, \ldots, K\}$, and potential outcomes $\{Y_{ik}^*\}_{k=1}^K$. $D_i$ encodes two layers of choice: $D_i = 0$ indicates non-participation (e.g., unemployment), while $D_i = k$ for $k \geq 1$ indicates participation in category $k$. The potential outcomes for each $k$ and the observed outcome $Y_i$ are determined by 
\begin{equation}\label{eq: potential outcomes}
Y_{ik}^* = \alpha_k + X_i \beta_k + V_{ik}, \quad Y_i = \sum_{k=1}^K \mathbf{1}[D_i = k] \cdot Y_{ik}^*,
\end{equation}
where $\beta_k \in \mathbb{R}^{d_X}$ is a category-specific parameter vector, $\alpha_k$ is a category-specific intercept, $V_{ik}$ is unobserved heterogeneity with $E[V_{ik}|X_i] = 0$, and  $\mathbf{1}[\cdot]$ is the indicator function. Conditional on selection into category $k$, the expected observed outcome is
\begin{equation}\label{eq: conditional expectation}
E[Y_i | X_i = x, D_i = k] = \alpha_k + x\beta_k + \underbrace{E[V_{ik} | X_i = x, D_i = k]}_{\text{selection bias}},
\end{equation}
where the selection bias captures the systematic difference in unobservable characteristics between individuals who select into category $k$ and the population average. The object of interest is $\beta_k$, the effect of covariates on potential outcomes \emph{within} category $k$.

The key modeling challenge is to specify the selection process generating $D_i$ in a way that (i) allows the selection bias to be characterized by a tractable control function, (ii) permits point identification of $\beta_k$ without exclusion restrictions, and (iii) accommodates the economic structure of occupational sorting. I consider two main architectures in turn: vertical and horizontal sorting.

\subsection{Vertical sorting: ordered selection}\label{sec: ordered}

\subsubsection{Parametric ordered selection}\label{sec: parametric ordered}

Suppose the categories in $\mathcal{C}$ are vertically differentiated, so that $D_i = k$ if the individual's ``quality'' index falls in the $k$-th interval of an ordered partition:
\begin{equation}\label{eq: ordered probit}
D_i = k \quad \text{if} \quad c_k \leq Z_i \gamma + \varepsilon_i < c_{k+1},
\end{equation}
where $Z_i$ is a row vector of covariates affecting selection, $\gamma$ is a parameter vector, $\varepsilon_i$ is a mean-zero error term, and $-\infty = c_0 < c_1 < \cdots < c_K < c_{K+1} = \infty$ are threshold parameters. When $\varepsilon_i \sim N(0,1)$, this is the standard ordered probit model. The selection bias takes a known parametric form under joint normality of $(V_{ik}, \varepsilon_i)$ that are independent of $(X_i, Z_i)$, with $\text{Corr}(V_{ik}, \varepsilon_i) = \rho_k$ and $\text{Var}(V_{ik})^{1/2} = \sigma_k$ as follows:
\begin{equation}\label{eq: parametric bias}
E[V_{ik} | Z_i = z, D_i = k] = \sigma_k \rho_k \cdot \frac{\phi(c_{k+1} - z\gamma) - \phi(c_k - z\gamma)}{\Phi(c_{k+1} - z\gamma) - \Phi(c_k - z\gamma)} =: \sigma_k \rho_k \cdot \lambda(z\gamma; c_k, c_{k+1}),
\end{equation}
where $\phi(\cdot)$ and $\Phi(\cdot)$ denote the standard normal p.d.f.\ and c.d.f.\ respectively. This generalizes the inverse Mills ratio in Heckman's binary model and provides a category-specific control function $\lambda(z\gamma; c_k, c_{k+1})$ that can be plugged into the outcome regression.

In the Heckman model, it is well known that identification without an exclusion restriction is fragile because the inverse Mills ratio is approximately linear over much of the effective support of $Z\gamma$ \citep{leung1996choice}. This near-collinearity problem is equally severe, and arguably worse, in the ordered case. The control function $\lambda(z\gamma; c_k, c_{k+1})$ can exhibit a nearly linear relationship with the index $z\gamma$ (see Figure~\ref{fig: ordered linearity} in Appendix~\ref{sec: appendix cf figures}). 

\subsubsection{Semiparametric ordered selection}\label{sec: semi ordered}

To relax joint normality without an exclusion restriction while retaining a tractable structure, I consider a semiparametric specification in which the distribution of $\varepsilon_i$ is parametrically specified but the selection index function is left unrestricted:
\begin{equation}\label{eq: semiparametric ordered}
D_i = k \quad \text{if} \quad c_k \leq h(X_i) + \varepsilon_i < c_{k+1},
\end{equation}
where $h: \mathcal{X} \to \mathbb{R}$ is an unknown smooth function. Under this specification, the selection bias conditional on $D_i = k$ becomes a function of the single index $h(x)$:
\begin{equation}\label{eq: semiparametric bias ordered}
E[V_{ik} | X_i = x, D_i = k] = \frac{\int_{c_k - h(x)}^{c_{k+1} - h(x)} E[V_{ik} | \varepsilon_i = e] \, f_\varepsilon(e) \, de}{F_\varepsilon(c_{k+1} - h(x)) - F_\varepsilon(c_k - h(x))} =: \lambda_k(h(x)),
\end{equation}
where $f_\varepsilon$ and $F_\varepsilon$ denote the p.d.f.\ and c.d.f.\ of $\varepsilon_i$. $\lambda_k(\cdot)$ is category-specific because the threshold constants $c_k, c_{k+1}$ are category-specific. Consequently, the conditional mean of the observed outcome takes the partial linear form:
\begin{equation}\label{eq: partial linear ordered semiparametric}
E[Y_i | X_i = x, D_i = k] = x\beta_k + \lambda_k(h(x)),
\end{equation}
which is precisely the structure analyzed in \citet{kim2025point}. Under standard regularity conditions therein,  $\beta_k$ and $\lambda_k$ are identified for each $k = 1, \ldots, K$.\footnote{The regularity conditions require continuous variation in at least one covariate, smoothness of $\lambda_k$, no perfect multicollinearity, and a nonlinearity condition on the composite selection probability $p_k(x) = F_\varepsilon(c_{k+1} - h(x)) - F_\varepsilon(c_k - h(x))$. The intercept $\alpha_k$ is not separately identified from $\lambda_k$ so is normalized to $0$.}

\subsubsection{Nonparametric ordered selection}\label{sec: nonparametric ordered}

The semiparametric specification \eqref{eq: semiparametric ordered} restricts all threshold functions to shift in parallel through the common index $h(x)$. This is a substantive restriction as it requires, for example, that a covariate that makes an individual more likely to surpass the threshold into category 2 also makes them more likely to surpass the threshold into category 3, and by the same amount in index units. To relax this restriction, I consider a fully nonparametric ordered selection model following \citet{chesher2012iv}:
\begin{equation}\label{eq: nonparametric ordered}
D_i = k \quad \text{if} \quad h_k(X_i) \leq U_i < h_{k+1}(X_i),
\end{equation}
where $U_i$ is normalized to $\text{Unif}(0,1)$, $X_i \ci U_i$, $h_0(x) = 0$, $h_{K+1}(x) = 1$, and $0 < h_1(x) < h_2(x) < \cdots < h_K(x) < 1$ for all $x$ in the support. $h_k(\cdot)$ is now free to depend on $x$ in an unrestricted manner. This model nests \eqref{eq: semiparametric ordered} as a special case. 

The threshold functions are nonparametrically identified from the choice probabilities. Defining $\pi_j(x) := P[D_i = j | X_i = x]$, we have
\begin{equation}\label{eq: threshold identification}
h_k(x) = \sum_{j=0}^{k-1} \pi_j(x) = P[D_i \leq k-1 | X_i = x], \quad k = 1, \ldots, K.
\end{equation}
Under the nonparametric specification, the selection bias conditional on $D_i = k$ becomes:
\begin{equation}\label{eq: nonparametric bias ordered}
E[V_{ik} | X_i = x, D_i = k] = \frac{\int_{h_k(x)}^{h_{k+1}(x)} E[V_{ik} | U_i = u] \, du}{h_{k+1}(x) - h_k(x)} =: \lambda_k(h_k(x), h_{k+1}(x)).
\end{equation}
Crucially, the bias function now depends on \emph{two} indices rather than the single index $h(x)$ that arose in the semiparametric case. This doubled index structure fundamentally changes the identification problem. Define the threshold mapping $H_k(x) := (h_k(x), h_{k+1}(x)) \in [0,1]^2$. The conditional mean of the outcome can then be written as:
\begin{equation}\label{eq: conditional mean nonparametric ordered}
m_k(x) := E[Y_i | X_i = x, D_i = k] = x\beta_k + \lambda_k(H_k(x)).
\end{equation}
I first establish that identification fails generically when $H_k$ is injective.

\begin{prop}[Non-identification under injectivity]\label{prop: non-identification}
Suppose $H_k$ is injective on the support of $X_i$ conditional on $D_i = k$, i.e., $H_k(x) = H_k(x')$ implies $x = x'$ almost surely. Then $\beta_k$ is not identified: for every $\beta \in \mathbb{R}^{d_X}$, there exists a measurable function $\tilde{\lambda}: [0,1]^2 \to \mathbb{R}$ such that $m_k(x) = x\beta + \tilde{\lambda}(H_k(x))$ almost surely conditional on $D_i = k.$
\end{prop}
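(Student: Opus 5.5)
The plan is to construct $\tilde\lambda$ explicitly by inverting $H_k$ on its image. Let $\mathcal{S}_k$ denote the support of $X_i$ conditional on $D_i=k$. By the injectivity assumption there is a set $\mathcal{S}_k^0\subseteq\mathcal{S}_k$ of full conditional probability on which $H_k$ is one-to-one. Therefore $H_k$ restricted to $\mathcal{S}_k^0$ has a left inverse $G_k: H_k(\mathcal{S}_k^0)\to\mathcal{S}_k^0$ with $G_k(H_k(x))=x$ for all $x\in\mathcal{S}_k^0$. Fix an arbitrary $\beta\in\mathbb{R}^{d_X}$ and define
\begin{equation*}
\tilde\lambda(t) := \lambda_k(t) + G_k(t)(\beta_k-\beta) \quad \text{for } t\in H_k(\mathcal{S}_k^0),
\end{equation*}
and $\tilde\lambda(t):=0$ elsewhere in $[0,1]^2$. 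For $x\in\mathcal{S}_k^0$ we then have $x\beta+\tilde\lambda(H_k(x)) = x\beta + \lambda_k(H_k(x)) + x(\beta_k-\beta) = m_k(x)$ by \eqref{eq: conditional mean nonparametric ordered}. This equality holds almost surely conditional on $D_i=k$, which delivers the observational equivalence. Consequently $(\beta,\tilde\lambda)$ and $(\beta_k,\lambda_k)$ generate the same $m_k$, and $\beta_k$ is not identified.

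The only substantive step is measurability of $\tilde\lambda$, which is the main obstacle. For this I would assume, as is implicit in the paper, that the $h_j$ are Borel measurable; each $h_j$ is a conditional probability by \eqref{eq: threshold identification}, so this is standard. I would then shrink $\mathcal{S}_k^0$ to a Borel subset of full measure, which is possible since conditional supports and null sets can be handled within the Borel (or completed) $\sigma$-field.

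By the Lusin--Souslin theorem, an injective Borel map between Polish spaces sends Borel sets to Borel sets and has a Borel inverse on its image. Hence $H_k(\mathcal{S}_k^0)$ is Borel in $[0,1]^2$ and $G_k$ is Borel measurable there. If one prefers to avoid descriptive set theory, I would fall back on the fact that it suffices to have $\tilde\lambda$ measurable with respect to the completion of the law of $H_k(X_i)$. Alternatively, if the $h_j$ are continuous, as under the smoothness conditions used later, I would restrict to a $\sigma$-compact full-measure set, on which the inverse of a continuous injection is Borel by an elementary argument. With $\lambda_k$ also measurable, $\tilde\lambda$ is measurable as a sum and product of measurable maps extended by zero off a Borel set, which completes the argument.
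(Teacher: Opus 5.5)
Your proof is correct and is essentially the paper's argument: on $H_k(\mathcal{S}_k^0)$ your $\tilde\lambda(t)=\lambda_k(t)+G_k(t)(\beta_k-\beta)$ equals $m_k(G_k(t))-G_k(t)\beta$, which is exactly how the paper defines $\tilde\lambda$ through a left inverse of $H_k$. The only difference is that the paper simply asserts that a measurable left inverse exists, whereas you justify it via Lusin--Souslin, or via your completion and $\sigma$-compact alternatives, so your measurability argument adds rigor rather than changing the route.
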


Injectivity of $H_k$ allows defining a matching $\tilde{\lambda}$ for any candidate $\beta$. Since $H_k: \mathbb{R}^{d_c} \to \mathbb{R}^2$ is generically injective when $d_c \leq 2$ and generically non-injective when $d_c \geq 3$, identification requires at least three continuous covariates, a sharp contrast to binary selection, where one suffices.\footnote{A smooth map from $\mathbb{R}^{d_c}$ to $\mathbb{R}^2$ with full-rank Jacobian is locally injective when $d_c \leq 2$ but has $(d_c - 2)$-dimensional fibers when $d_c \geq 3$, making injectivity generically impossible.} The following assumption collects the conditions under which identification can be established with three continuous covariates.

\begin{assumption}\label{ass: identification}
(i) At least three variables $X_c := (X_1, X_2, X_3)$ in $X$ are continuously distributed; let $x_c$ denote a realized value. For $k = 1, \ldots, K$: (ii) $h_k(x)$ and $h_{k+1}(x)$ are continuous on $\operatorname{supp}(X_i \mid D_i = k)$ and continuously differentiable with respect to $x_c$ almost everywhere; (iii) $\lambda_k(\cdot, \cdot)$ is continuously differentiable almost everywhere; (iv) the Jacobian matrix $J_k(x) := \partial H_k(x) / \partial x_c \in \mathbb{R}^{3 \times 2}$ has full column rank (rank 2) with probability one; (v) there exist three points $x^{(1)}, x^{(2)}, x^{(3)} \in \operatorname{supp}(X_i \mid D_i = k)$ such that $\bigcap_{t=1}^{3} \text{col}(J_k(x^{(t)})) = \{0\}$. (vi) The variables in $X$ are not perfectly multicollinear. (vii) For each $k = 1, \ldots, K$, the image $H_k(\operatorname{supp}(X_i \mid D_i = k))$ is connected.
\end{assumption}

Assumption~\ref{ass: identification} (ii)--(iii) require smoothness of selection indices and selection bias functions, and (iv) requires that $h_k$ and $h_{k+1}$ respond to the three continuous covariates in linearly independent directions. This rules out the case where $h_{k+1}$ is merely a parallel shift of $h_k$ by a constant (as in the semiparametric model). Assumption~\ref{ass: identification}(v) is the key nonlinearity condition. Each Jacobian $J_k(x)$ has a two-dimensional column space in $\mathbb{R}^3$, and hence a one-dimensional left null space. The condition requires that these one-dimensional null spaces, evaluated at three points, collectively span $\mathbb{R}^3$. This is generically satisfied when the threshold functions exhibit sufficient nonlinearity. Assumption~\ref{ass: identification}(vii) is a mild topological regularity condition: the selected sample explores the threshold-index space as a single connected region rather than as several isolated pieces. A simple sufficient condition is that $\operatorname{supp}(X_i \mid D_i = k)$ itself is connected. Now I establish identification of $\beta_k$ and $\lambda_k$ in the following proposition.

\begin{prop}[Identification under nonparametric ordered selection]\label{prop: identification}
Let Assumption~\ref{ass: identification} hold. Then $\beta_k$ and $\lambda_k$ are identified for each $k = 1, \ldots, K$.
\end{prop}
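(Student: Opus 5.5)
Suppose two pairs $(\beta,\lambda)$ and $(\tilde\beta,\tilde\lambda)$ both generate the same identified conditional mean $m_k(x)$ on $\operatorname{supp}(X_i\mid D_i=k)$. The plan is to show they must coincide. Here $H_k$ is identified from the choice probabilities through \eqref{eq: threshold identification}. Write $\delta:=\beta-\tilde\beta$ and $g:=\tilde\lambda-\lambda$. Then
\begin{equation*}
x\delta = g(H_k(x)) \quad \text{a.s. on } \operatorname{supp}(X_i\mid D_i=k).
\end{equation*}
The argument proceeds in three steps. First, $\delta_c=0$ for the continuous block. Second, the remaining components of $\delta$ vanish. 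Third, $g$ vanishes, which gives identification of $\lambda_k$.

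\textbf{Step 1: differentiate in the continuous covariates.} By Assumption~\ref{ass: identification}(ii)--(iii), both sides are differentiable in $x_c$ almost everywhere. Differentiating gives
\begin{equation*}
\delta_c = J_k(x)\,\nabla g(H_k(x)),
\end{equation*}
where $\delta_c\in\mathbb{R}^3$ collects the coefficients on $X_c$. So $\delta_c\in\operatorname{col}(J_k(x))$ for almost every $x$.

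I need this at the three points in (v). Continuity of $H_k$ and $J_k$ near those points, together with (iv), lets me choose points of full-rank differentiability arbitrarily close to them. The column-space intersection condition is open, so it survives this small perturbation. Hence $\delta_c\in\bigcap_{t=1}^3\operatorname{col}(J_k(x^{(t)}))=\{0\}$, i.e.\ $\delta_c=0$.

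An equivalent way to phrase this uses the left null vector $n(x)$ of $J_k(x)$, which satisfies $n(x)^\top J_k(x)=0$. The display implies $n(x)^\top\delta_c=0$ at each of the three points. The three null vectors span $\mathbb{R}^3$, so $\delta_c=0$.

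\textbf{Step 2: the remaining covariates.} Now $x_d\delta_d=g(H_k(x))$, where $x_d$ collects the discrete and other covariates. Fix two values $x_d,x_d'$ and consider the fibers of $H_k$. Because $J_k$ has rank 2 with three continuous arguments, each $H_k$ value is attained along a one-dimensional curve of $x_c$. I need two points with the same $H_k$ value but different $x_d$, meaning the images $H_k(\cdot,x_d)$ and $H_k(\cdot,x_d')$ overlap. I would obtain this overlap from connectedness (vii) together with no multicollinearity (vi), which ensure $g$ cannot absorb a nonconstant function of $x_d$.

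At any such overlap point, $(x_d-x_d')\delta_d=0$. Spanning these differences with (vi) then gives $\delta_d=0$.

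The main obstacle is rigorously producing these overlap points; (vii) appears designed for exactly this. The function $g$ is locally constant on each $x_d$-slice image, since $g\circ H_k$ equals the constant $x_d\delta_d$ there. Connectedness of the overall image means the slice images cannot be disjoint open pieces carrying different constants. So $g$ must be a single constant $c$ on the connected image, forcing $x_d\delta_d=c$ for all $x_d$. No perfect multicollinearity with the intercept then gives $\delta_d=0$ and $c=0$, modulo the intercept normalization $\alpha_k=0$ inherited from the semiparametric case.

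\textbf{Step 3: recover $\lambda_k$.} With $\beta_k$ identified, $\lambda_k(H_k(x))=m_k(x)-x\beta_k$. This pins down $\lambda_k$ on the image of $H_k$, which is the only region where it is defined or relevant.
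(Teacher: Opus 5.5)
Your proposal is correct and follows essentially the same route as the paper. You differentiate the observational-equivalence identity in $x_c$ and use the left-null vectors of $J_k(x)$ at the three points of Assumption~\ref{ass: identification}(v) to get $\delta_c=0$. You then use connectedness (vii) to make the bias difference constant and (vi) to force $\delta=0$ and the constant to vanish, recovering $\lambda_k$ by subtraction on the image of $H_k$.

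The differences are cosmetic. The paper obtains local constancy from $J_k(x)\nabla b(H_k(x))=0$ with rank $2$, whereas you obtain it from $g\circ H_k$ being constant on each $x_c$-slice. Your tentative search for explicit overlap points is unnecessary once that connectedness argument is in place.
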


When both threshold indices take linear form $h_k(x) = f_k(x'\gamma_k)$, the column space of $J_k(x)$ is a fixed plane in $\mathbb{R}^3$ regardless of $x_c$, violating Assumption~\ref{ass: identification}(v). Continuous excluded variables can substitute for covariates, but the substitution rate is non-uniform. The first excluded variable replaces one continuous covariate, leaving the requirement $d_c^{\text{covariates}} + d_c^{\text{excluded}} \geq 3$ when $d_c^{\text{excluded}} \in \{0, 1\}$. The second excluded variable, however, replaces \emph{two} continuous covariates: once $d_c^{\text{excluded}} \geq 2$, identification follows from the exclusion-based route in \citet{das2003nonparametric}, and no continuous covariate is needed. The required number of continuous covariates is therefore three, two, and zero as $d_c^{\text{excluded}}$ moves from zero to one to two, reflecting the regime switch from nonlinearity-based identification to exclusion-based identification.

\subsection{Horizontal sorting: multinomial selection}\label{sec: multinomial}

When the categories in $\mathcal{C}$ are horizontally differentiated (for example, broadly defined occupations or industries with no natural ordering), the ordered threshold-crossing framework is inappropriate. Instead, I model selection as a utility-maximizing multinomial choice:
\begin{equation}\label{eq: multinomial choice}
D_i = k \quad \text{if} \quad u_k(X_i)  + \varepsilon_{ik} \geq u_j(X_i) + \varepsilon_{ij} \quad \text{for all } j \neq k,
\end{equation}
where $u_k(X_i)$ is the deterministic utility component for category $k$, and $\varepsilon_{ik}$ captures preference heterogeneity. The individual chooses the category that maximizes utility.\footnote{The classical Roy model \citep{roy1951some} is a special case of \eqref{eq: multinomial choice} in which $u_k(X_i) + \varepsilon_{ik} = Y_{ik}^* = \alpha_k + X_i\beta_k + V_{ik}$, so that selection is driven by potential wages alone. The generalized Roy model \citep{heckman1990varieties} is the case $u_k(X_i) + \varepsilon_{ik} = Y_{ik}^* - C_{ik}$, allowing a non-pecuniary cost $C_{ik}$. The present framework accommodates both without requiring utility and potential wages to coincide.} This framework is also considered in \citet{kroft2024lee} as a parametric special case of their nonparametric multilayered selection model. Given the selection rule \eqref{eq: multinomial choice}, the selection bias conditional on $D_i = k$ is
\begin{equation}\label{eq: multinomial bias general}
E[V_{ik} | X_i = x, D_i = k] = E[V_{ik} | u_k(x) - u_j(x) \geq \varepsilon_{ij} - \varepsilon_{ik}, \, \forall j \neq k] = \lambda_k\bigl((\delta_{kj}(x))_{j \neq k}\bigr),
\end{equation}
where $\delta_{kj}(x) := u_k(x) - u_j(x)$ for $j \neq k$ and $\lambda_k: \mathbb{R}^K \to \mathbb{R}$ is an unknown function whose argument is the $K$-vector of pairwise differences indexed over the non-chosen alternatives. In general, the bias depends on $K$ indices (one for each pairwise comparison), creating a severe curse of dimensionality as the number of categories grows.

An equivalent representation expresses the bias as a function of the choice probabilities rather than the utility differences. This reformulation is not required for identification, but it is practically useful in estimation.\footnote{Choice probabilities are compactly supported on $[0,1]$, whereas utility differences range over $\mathbb{R}$. Sieve smoothers approximating the selection bias function in the second stage behave substantially better in finite samples with a bounded argument than with an unbounded index.} The equivalence, which holds under mild regularity conditions, provides the conceptual foundation for using estimated choice probabilities as control functions in the second stage.

\begin{prop}[Selection bias as a function of choice probabilities]\label{prop: CCP inversion}
Suppose the preference shocks $(\varepsilon_{i0}, \ldots, \varepsilon_{iK})$ have a joint density $f_\varepsilon$ that is continuous and strictly positive on $\mathbb{R}^{K+1}$. Adopting the location normalization $u_0(x) \equiv 0$, the mapping
$$\Psi: u = (u_1(x), \ldots, u_K(x)) \mapsto (p_0(x), \ldots, p_K(x)), \quad p_j(x) := P[D_i = j | X_i = x],$$
from the vector of normalized utilities to the choice probability vector is a diffeomorphism from $\mathbb{R}^K$ onto the interior of the $K$-simplex $\Delta^K = \{p \in \mathbb{R}^{K+1}_+ : \sum_{j=0}^K p_j = 1\}$. Consequently, the selection bias \eqref{eq: multinomial bias general} can be equivalently written as
\begin{equation}\label{eq: bias as CCP}
E[V_{ik} | X_i = x, D_i = k] = \tilde{\lambda}_k(p_0(x), \ldots, p_K(x)),
\end{equation}
where $\tilde{\lambda}_k := \lambda_k \circ L_k \circ \Psi^{-1}$ and $L_k: \mathbb{R}^K \to \mathbb{R}^K$ is the linear bijection $u \mapsto (u_k - u_j)_{j \neq k}$ (with $u_0 \equiv 0$) that converts normalized utilities into the pairwise-difference vector $(\delta_{kj})_{j \neq k}$ on which $\lambda_k$ was originally defined. The resulting $\tilde{\lambda}_k$ is an unknown function of the choice probability vector.
\end{prop}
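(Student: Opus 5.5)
The plan is to prove that $\Psi$ is a diffeomorphism onto the interior of the simplex, identified with the open set $\mathcal{P} := \{(p_1,\ldots,p_K) \in (0,1)^K : \sum_{j\ge 1} p_j < 1\}$ via $p_0 = 1 - \sum_{j\ge1} p_j$. I would establish four facts in turn: smoothness, injectivity, a nonsingular Jacobian, and surjectivity. The representation \eqref{eq: bias as CCP} then follows by composition.

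\textbf{Regularity.} Write $p_j(u) = \int \mathbf{1}[u_j + e_j \ge u_l + e_l \ \forall l] f_\varepsilon(e)\,de$. Change variables to the differences $\eta_{l} = e_l - e_j$, and let $g_j$ be the continuous, strictly positive density of $(\eta_l)_{l\neq j}$, obtained by integrating $f_\varepsilon$ along the diagonal. Then $p_j(u) = G_j\bigl((u_j - u_l)_{l \neq j}\bigr)$, where $G_j$ is the joint c.d.f.\ of $(\eta_l)_{l\ne j}$. It follows that each $p_j$ lies in $(0,1)$ and the $p_j$ sum to one. I would show $\Psi$ is $C^1$ by writing $\partial p_j/\partial u_m$ for $m\neq j$ as minus a lower-dimensional integral of $g_j$ over a face of an orthant. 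These face integrals are continuous in $u$ by dominated convergence, since $g_j$ is continuous and integrable; if needed, one assumes mild extra integrability of $f_\varepsilon$.

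\textbf{Key structure.} The Jacobian $J = \partial(p_1,\ldots,p_K)/\partial(u_1,\ldots,u_K)$ has positive diagonal entries and strictly negative off-diagonal entries. Positivity of $f_\varepsilon$ makes each face integral strictly positive. Moreover, $\sum_{j=0}^K \partial p_j/\partial u_m = 0$ and $\partial p_0/\partial u_m < 0$. Together these give $\sum_{j=1}^K \partial p_j/\partial u_m > 0$, so $J$ is strictly column diagonally dominant with positive diagonal. By Gershgorin's theorem or the M-matrix property, $J$ is nonsingular, and the inverse function theorem makes $\Psi$ a local diffeomorphism.

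\textbf{Injectivity.} Let $u \neq u'$ and set $\Delta = u' - u$, with $\Delta_0 = 0$. Choose $m \in \{0,\ldots,K\}$ maximizing $\Delta_m$; since $\Delta \neq 0$, either $\Delta_m > 0$ or all $\Delta_l \le 0$ with some strictly negative. In the first case, $u_m' - u_l' \ge u_m - u_l$ for every $l$, with strict inequality for some $l$ (for instance $l = 0$, or an index with smaller $\Delta$). Since $G_m$ is strictly increasing in each argument under a positive density, $p_m(u') > p_m(u)$. The second case is symmetric, using the minimizing index. This is the Berry-type argument of Hotz--Miller and Berry (1994).

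\textbf{Surjectivity and composition.} I expect surjectivity onto $\mathcal{P}$ to be the main obstacle. The approach is to show that $\Psi$ is proper: if $\|u\| \to \infty$, some choice probability tends to $0$, so $\Psi(u)$ leaves every compact subset of $\mathcal{P}$. Indeed, if $\max_l u_l - \min_l u_l \to \infty$ (with $u_0 = 0$ included), then the alternative attaining the minimum has $p \le P(\eta_l \ge \text{large}) \to 0$ by tightness. A proper local homeomorphism between connected manifolds is a covering map, so the image is open and closed in the connected set $\mathcal{P}$. Hence $\Psi$ is onto, and it is bijective by injectivity. The inverse is $C^1$ by the inverse function theorem. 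Finally, $\lambda_k$ in \eqref{eq: multinomial bias general} depends on $x$ only through $\delta(x) = L_k u(x)$. Since $L_k$ is a linear bijection and $u(x) = \Psi^{-1}(p(x))$, we get $E[V_{ik}|X_i=x,D_i=k] = \lambda_k(L_k\Psi^{-1}(p(x)))$, which is \eqref{eq: bias as CCP}.
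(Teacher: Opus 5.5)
Your proof is correct. Two of its ingredients coincide with the paper's. The first is the Jacobian step: your strict column diagonal dominance is the same fact the paper phrases as a Z-matrix with positive column sums, hence a non-singular M-matrix. The second is the properness step: the least-preferred alternative's probability vanishes as $\|u\|\to\infty$.

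The genuine difference is how you obtain global invertibility. The paper feeds properness and the everywhere non-singular Jacobian into Hadamard's global inverse function theorem. That yields injectivity and surjectivity at once, but it relies on the open simplex being simply connected and on $\Psi$ being $C^1$. You instead prove injectivity directly by the Berry-type monotonicity argument: an alternative whose relative utility weakly improves against every rival and strictly against at least one has a strictly larger share, because $G_m$ is strictly increasing in each argument. You then use properness only for surjectivity, via the open-and-closed image argument, which needs only connectedness of the target.

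Your route is more elementary and divides the labor more cleanly. Injectivity is all that is needed for \eqref{eq: bias as CCP} to define $\tilde\lambda_k$ as a function of $p(x)$, and it follows from strict positivity of $f_\varepsilon$ alone, with no differentiability. The Jacobian computation is then needed only for openness of the image and smoothness of $\Psi^{-1}$. The paper's route is shorter but leans on a heavier topological theorem.

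Your hedge about extra integrability in the regularity step is warranted. Continuity of $g_j$ and of the face integrals does not follow from continuity and positivity of $f_\varepsilon$ alone. The paper's proof likewise takes the $C^1$ property of $G_j$ for granted.
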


The idea that choice probabilities can invert latent utility indices is well established in discrete choice and demand analysis.\footnote{For instance, \citet{hotz1993conditional} derive CCP inversion in dynamic logit models, \citet{berry1994estimating} establishes inversion in the multinomial logit demand system, and \citet{berry2013connected} provide a more general demand-side invertibility result under connected substitutes.} Proposition~\ref{prop: CCP inversion} establishes a general inversion result for the present static multinomial selection model under arbitrary continuous strictly positive joint densities of the preference shocks. This, in turn, justifies using estimated choice probabilities as control-function arguments in the sample-selection outcome equation and provides a formal foundation for the probability-based correction of \citet{dahl2002mobility}, who derives a single-index reduction under a maintained index-sufficiency assumption.

The conditional mean of the observed outcome now takes the partially linear form with $K$ nonparametric indices:
\begin{equation}\label{eq: partial linear multinomial general}
E[Y_i | X_i = x, D_i = k] = x\beta_k + \tilde{\lambda}_k(p_1(x), \ldots, p_K(x)),
\end{equation}
where $p_0(x) = 1 - \sum_{j=1}^K p_j(x)$ is determined by the simplex constraint. Identification of $\beta_k$ in this $K$-index model follows from the same argument as the nonparametric ordered case (Proposition~\ref{prop: identification}), generalized from two to $K$ indices.

\begin{assumption}\label{ass: multinomial identification}
(i) At least $K+1$ components of $X_i$ are continuously distributed; denote the corresponding sub-vector by $x_c := (x_1, \ldots, x_{K+1})$. For each $k$: (ii) $P_K(x) := (p_1(x), \ldots, p_K(x))$ is continuous on $\operatorname{supp}(X_i \mid D_i = k)$ and continuously differentiable in $x_c$ almost everywhere, and $\tilde{\lambda}_k$ is continuously differentiable on $\operatorname{int}(\Delta^K)$; (iii) the Jacobian $J_P(x):=\partial P_K(x)/\partial x_c\in\mathbb{R}^{(K+1)\times K}$ has full column rank $K$ with probability one (so its left null space is one-dimensional); (iv) there exist $K+1$ points $x^{(1)}, \ldots, x^{(K+1)} \in \operatorname{supp}(X_i \mid D_i = k)$ such that the associated left-null vectors $a(x^{(t)}) \in \mathbb{R}^{K+1}$ span $\mathbb{R}^{K+1}$; (v) the support of $X_i\mid D_i=k$ is not contained in any affine hyperplane of $\mathbb{R}^{d_X}$; (vi) the image $P_K(\operatorname{supp}(X_i\mid D_i=k))$ is connected.
\end{assumption}

\begin{prop}[Identification under general multinomial selection]\label{prop: multinomial id general}
Suppose \eqref{eq: bias as CCP} holds and Assumption~\ref{ass: multinomial identification} holds. Then $\beta_k$ and $\tilde{\lambda}_k$ are identified on $P_K(\operatorname{supp}(X_i\mid D_i=k))$ for each $k$.
\end{prop}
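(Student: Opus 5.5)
The argument follows the proof of Proposition~\ref{prop: identification}, with two indices replaced by $K$. Because the objects are differences, suppose two pairs $(\beta,\tilde\lambda)$ and $(\beta',\tilde\lambda')$ both generate $m_k$. Since $m_k(x)=E[Y_i\mid X_i=x,D_i=k]$ and $P_K(x)$ are identified from the data, this gives
$$
x(\beta-\beta') = g\bigl(P_K(x)\bigr), \qquad g := \tilde\lambda'_k-\tilde\lambda_k ,
$$
almost surely on $\operatorname{supp}(X_i\mid D_i=k)$. It suffices to show $\beta=\beta'$; then $g\circ P_K=0$, so $\tilde\lambda_k$ is determined on the image $P_K(\operatorname{supp}(X_i\mid D_i=k))$.

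\textbf{Step 1 (differentiate).} By Assumption~\ref{ass: multinomial identification}(ii), both sides are continuously differentiable in $x_c$ almost everywhere. Let $b_c\in\mathbb{R}^{K+1}$ be the subvector of $\beta-\beta'$ for the continuous coordinates. The chain rule gives
$$
b_c = J_P(x)\,\nabla g\bigl(P_K(x)\bigr)
$$
for almost every $x$, so $b_c\in\operatorname{col}(J_P(x))$. By (iii), this column space is the hyperplane orthogonal to the one-dimensional left-null vector $a(x)$, so $a(x)'b_c=0$. Condition (iv) supplies points where the $a(x^{(t)})$ span $\mathbb{R}^{K+1}$, which forces $b_c=0$.

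This needs care. Condition (iv) is stated at specific points, but the chain-rule identity holds only almost everywhere. I would take the points to be ones where the identity holds. Alternatively, I would use continuity of $J_P$ near them and openness of the spanning condition: the set of $(K+1)$-tuples whose left-null vectors span is open, so it has positive measure near any spanning tuple, and an almost-everywhere statement then suffices.

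\textbf{Step 2 (discrete coordinates).} With $b_c=0$, the identity becomes $x_d b_d = g(P_K(x))$, where $x_d$ collects the remaining covariates. Now $\nabla g(P_K(x))$ lies in the kernel of $J_P(x)$. Since $J_P$ has full column rank $K$, that kernel is trivial, so $\nabla g=0$ on the image of $P_K$ almost everywhere. Connectedness of the image, assumed in (vi), together with continuity of $g$ then implies that $g$ is constant on the image, say $g\equiv c$. Hence $x_d b_d=c$ for all $x$ in the support. If $b_d\neq 0$ (or if $c$ is absorbed by an intercept normalization as in the ordered case), the support would lie in an affine hyperplane, contradicting (v). So $\beta=\beta'$, and $\tilde\lambda_k$ is identified up to the normalization of $\alpha_k$.

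\textbf{Main obstacle.} The delicate part is the passage from gradient zero to $g$ constant. A vanishing gradient only almost everywhere on a connected set does not by itself yield constancy. I would use that $\tilde\lambda_k$ is $C^1$ on the open simplex by (ii), so $\nabla g$ is continuous there, together with a path argument inside the connected image. Some interior-regularity of the image would be needed for this, which I would check as in the proof of Proposition~\ref{prop: identification}. The almost-everywhere versus pointwise issue in Step 1 is the other point that needs attention.
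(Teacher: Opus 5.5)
Your proposal is correct and follows essentially the same route as the paper's proof. You differentiate the observational-equivalence identity in $x_c$, premultiply by the left-null vector $a(x)$, and use the spanning points of Assumption~\ref{ass: multinomial identification}(iv) to eliminate the continuous-coordinate difference; then full column rank, connectedness (vi), and the no-hyperplane condition (v) force the difference function to be a constant that vanishes together with $\beta-\beta'$. The two technical points you flag (evaluating an almost-everywhere identity at the specific points in (iv), and passing from an almost-everywhere vanishing gradient to global constancy on a connected image that need not be open) are simply asserted in the paper's proof, so your treatment is, if anything, more careful than the original.
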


This result establishes a sharp trade-off between structural restrictions and covariate requirements: without additional restrictions, identification requires at least $K + 1$ continuous covariates. When $K$ is moderate, this is feasible; when $K$ is large, the requirement becomes prohibitive. The remainder of this section develops two structural restrictions that reduce the dimensionality of $\tilde{\lambda}_k$ and correspondingly lower the covariate requirement.

\subsubsection{Multinomial logit selection}\label{sec: logit}

Suppose $u_k(X_i)=X_i\gamma_k$ and $\varepsilon_{i0}, \ldots, \varepsilon_{iK}$ are independently and identically distributed as standard Extreme Value Type I (Gumbel). $\gamma_0$ is normalized to $0$ for identification. I further assume that $V_{ik}$ depends on $(\varepsilon_{i0}, \ldots, \varepsilon_{iK})$ only through $\varepsilon_{ik}$:
\begin{equation}\label{eq: own shock restriction}
E[V_{ik} | \varepsilon_{i0}, \ldots, \varepsilon_{iK}, X_i] = E[V_{ik} | \varepsilon_{ik}] =: \mu(\varepsilon_{ik}).
\end{equation}
In the occupational sorting context, this assumption means that the worker's unobserved productivity in occupation $k$ only depends on their taste for that occupation. This is natural under an occupation-specific match quality interpretation: a worker with a strong affinity for a particular occupation ($\varepsilon_{ik}$ large) tends to also be productive in that occupation ($V_{ik}$ large), because taste and ability for a specific occupation are correlated. Once conditioned on $\varepsilon_{ik}$, the preference shocks for the other occupations $\varepsilon_{ij}$, $j \neq k$, carry no additional information about their productivity in $k$. While this restriction is substantive, it substantially simplifies the identification analysis.

\begin{remark}
The own-shock restriction is natural when unobserved heterogeneity is sector-specific. For example, a worker with a strong taste for STEM occupations likely possesses STEM-relevant latent abilities that make them productive in STEM jobs. Conditional on this STEM-specific taste, the worker's preferences for non-STEM alternatives (e.g., sales, marketing, or management) carry no additional information about their STEM productivity. The restriction is less plausible when unobserved general ability affects both preferences and productivity across all occupations. This restriction will be relaxed later in this section at the cost of additional continuous covariates. In the terminology of \citet{dubin1984econometric}, the own-shock restriction corresponds to a block-diagonal covariance structure between the selection and outcome errors, whereas the Dubin-McFadden correction allows unrestricted covariance but requires exclusion restrictions for identification.
\end{remark}

Under these assumptions, the conditional distribution of $\varepsilon_{ik}$ given $D_i = k$ and $X_i = x$ admits a simple characterization. Let $U_{ij} = x\gamma_j + \varepsilon_{ij}$ denote the utility from category $j$, and let $M = \max_{0 \leq j \leq K} U_{ij}$. Since $\varepsilon_{ij} \sim \text{EV}(0,1)$ independently, the cumulative distribution of the maximum is
$$F_M(m | x) = \prod_{j=0}^K \exp\{-e^{-(m - x\gamma_j)}\} = \exp\{-A(x) e^{-m}\},$$
where $A(x) := \sum_{j=0}^K e^{x\gamma_j} = 1 + \sum_{j=1}^K e^{x\gamma_j}$ is the logit denominator. Hence $M | x \sim \text{EV}(\ln A(x), 1)$. Conditional on $D_i = k$ and $X_i = x$, the winner's utility equals the maximum: $U_{ik} = M$. By the well-known property of Gumbel random variables that the conditional distribution of the maximum given which alternative wins depends on $x$ only through $\ln A(x)$, we have $U_{ik} | D_i = k, X_i = x \sim \text{EV}(\ln A(x), 1)$. Since $\varepsilon_{ik} = U_{ik} - x\gamma_k$, a location shift gives
\begin{equation}\label{eq: conditional distribution epsilon logit}
\varepsilon_{ik} | D_i = k, X_i = x \sim \text{EV}(\ln A(x) - x\gamma_k, \, 1).
\end{equation}
The conditional density of $\varepsilon_{ik}$ given selection into $k$ thus depends on $x$ only through the scalar index $\nu_k(x) := \ln A(x) - x\gamma_k$. Applying the restriction \eqref{eq: own shock restriction}:
\begin{align*}
\lambda_k(\nu_k(x)):=E[V_{ik} | D_i = k, X_i = x] &= \int \mu(\varepsilon) \, f_{\varepsilon_{ik}}(\varepsilon | D_i = k, X_i = x) \, d\varepsilon \nonumber \\
&= \int \mu(\varepsilon) \exp\{-(\varepsilon - \nu_k(x))\} \exp\{-e^{-(\varepsilon - \nu_k(x))}\} \, d\varepsilon \nonumber
\end{align*}
The selection bias reduces to a function of the single index $\nu_k(x) = \ln(1 + \sum_{j=1}^K e^{x\gamma_j}) - x\gamma_k$. The conditional mean of the outcome is therefore
\begin{equation}\label{eq: partial linear multinomial logit}
E[Y_i | X_i = x, D_i = k] = x\beta_k + \lambda_k(\nu_k(x)),
\end{equation}
which is again a partial linear model with a single-index control function.

\begin{remark}[Connection to Dahl's index sufficiency]\label{rem: dahl index}
Under multinomial logit, $p_k(x) = e^{x\gamma_k}/A(x)$, so $\nu_k(x) = -\ln p_k(x)$, and the inclusive value and the chosen probability are bijectively related. The single-index reduction in \eqref{eq: partial linear multinomial logit} is therefore equivalent to expressing the selection bias as a function of $p_k(x)$ alone, which is precisely the index-sufficiency assumption maintained by \citet{dahl2002mobility}.
\end{remark}

Since \eqref{eq: partial linear multinomial logit} has the same partially linear structure as the semiparametric ordered model, identification of $\beta_k$ and $\lambda_k$ follows from \citet{kim2025point} under the same regularity conditions. A distinctive feature of the multinomial logit model is that the nonlinearity condition is \emph{automatically} satisfied, even with a linear specification $u_k(x) = x\gamma_k$ for the deterministic utility. The inclusive value $\nu_k(x)$ is inherently nonlinear in $x$ because the log-sum-exp function is convex and not affine whenever at least two $\gamma_j$ differ (see Appendix~\ref{sec: logit nonlinearity proof}). This contrasts sharply with the ordered case, where identification without exclusion restrictions is fragile. In the multinomial logit model, identification without exclusion restrictions is robust because the nonlinearity is a structural consequence of the multinomial choice mechanism. Figure~\ref{fig: mnl nonlinearity} in Appendix~\ref{sec: appendix cf figures} illustrates this inherent nonlinearity for several parameter configurations.

\subsubsection{Semiparametric multinomial selection}\label{sec: exchangeability}

The preceding analysis relies on two restrictions: the i.i.d.\ Gumbel assumption on the preference shocks, which implies the independence of irrelevant alternatives (IIA), and the own-shock restriction \eqref{eq: own shock restriction}. IIA rules out correlation across alternatives, which is implausible when some occupations are closer substitutes than others. The own-shock restriction fails whenever unobserved ability has a general component that affects both preferences and productivity across occupations. To accommodate richer dependence among the preference shocks, and between the preference shocks and outcome errors, while maintaining a tractable selection bias structure, I introduce an exchangeability condition under the more general nonparametric utility specification \eqref{eq: multinomial choice}.

\begin{assumption}[Exchangeability]\label{ass: exchangeability}
The joint distribution of $(V_{ik}, \varepsilon_{i0}, \ldots, \varepsilon_{iK})$ is invariant under permutations of the indices $\{j \neq k\}$. That is, for any permutation $\pi$ of $\{0, \ldots, K\} \setminus \{k\}$,
$$f(V_{ik}, \varepsilon_{i0}, \ldots, \varepsilon_{iK}) = f(V_{ik}, \varepsilon_{i,\pi(0)}, \ldots, \varepsilon_{i,\pi(K)}),$$
where the permutation acts only on the indices different from $k$.
\end{assumption}

This assumption does not impose IIA, and it replaces the own-shock restriction with a weaker symmetry requirement that allows $V_{ik}$ to depend on all preference shocks $(\varepsilon_{i0}, \ldots, \varepsilon_{iK})$, provided this dependence is symmetric in the non-chosen alternatives. Under this assumption, for an individual contemplating category $k$, the alternative categories are ex ante symmetric in their unobserved preference. The exchangeability condition is more plausible with broadly defined categories (STEM vs.\ non-STEM) than with finely disaggregated ones (specific occupations within STEM), where nested structures are more natural. Under exchangeability, the selection bias function $\tilde{\lambda}_k(p_0(x), \ldots, p_K(x))$ is symmetric in its non-chosen probabilities. The following proposition exploits this symmetry to reduce the dimensionality of the probability-based bias correction.

\begin{prop}[Dimensionality reduction under exchangeability]\label{prop: exchangeability}
Suppose \eqref{eq: bias as CCP} holds and Assumption~\ref{ass: multinomial identification}(ii) and Assumption~\ref{ass: exchangeability} hold. Then $\tilde{\lambda}_k(p_0, \ldots, p_K)$ is a symmetric function of the non-chosen probabilities $(p_j)_{j \neq k}$. For any $\epsilon > 0$ and any compact domain $\mathcal{P} \subset \operatorname{int}(\Delta^K)$, there exists a polynomial $Q$ in the elementary symmetric polynomials of the non-chosen probabilities
\begin{equation}\label{eq: elementary symmetric}
e_1 = \sum_{j \neq k} p_j = 1 - p_k, \quad e_2 = \sum_{\substack{i < j \\ i, j \neq k}} p_i p_j, \quad \ldots, \quad e_K = \prod_{j \neq k} p_j,
\end{equation}
such that $\sup_{p \in \mathcal{P}} |\tilde{\lambda}_k(p) - Q(e_1, \ldots, e_K)| < \epsilon$.
\end{prop}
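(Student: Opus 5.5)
The proof has two parts: symmetry of $\tilde{\lambda}_k$, then approximation.

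\emph{Symmetry.} Fix a permutation $\pi$ of the non-chosen indices $\{0,\ldots,K\}\setminus\{k\}$. Given deterministic utilities $u=(u_0,\ldots,u_K)$ with $u_0\equiv 0$, let $u^\pi$ denote the vector with the non-chosen entries permuted by $\pi$. Recentering so that the relevant outside utility is zero is harmless: it is a common shift that changes neither choices nor $\Psi$. We have
\[
E[V_{ik}\mid D_i=k,\text{utilities }u]=E\bigl[V_{ik}\mid u_k+\varepsilon_{ik}\ge u_j+\varepsilon_{ij}\ \forall j\ne k\bigr].
\]
By Assumption~\ref{ass: exchangeability}, relabeling $\varepsilon_{ij}\mapsto\varepsilon_{i\pi(j)}$ leaves the joint law of $(V_{ik},\varepsilon)$ unchanged. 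Hence the conditional mean of $V_{ik}$ is the same under $u$ and $u^\pi$. The choice probabilities behave the same way: $p(u^\pi)$ is $p(u)$ with its non-chosen coordinates permuted by $\pi$, and $p_k$ is unchanged. The probability of choosing $k$ is invariant because the event is symmetric. For $j\ne k$, $p_{\pi(j)}(u^\pi)=p_j(u)$ follows from the same relabeling, using that the law of the non-$k$ shocks given $\varepsilon_{ik}$ is exchangeable.

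Since $\Psi$ is a bijection onto $\operatorname{int}(\Delta^K)$ by Proposition~\ref{prop: CCP inversion}, every $p$ in the interior arises from a unique $u$. Therefore $\tilde{\lambda}_k(p)=\tilde{\lambda}_k(p^\pi)$, where $p^\pi$ permutes the non-chosen coordinates. Because $p_k=1-\sum_{j\ne k}p_j$, we can regard $\tilde{\lambda}_k$ as a function $g(q)$ of $q=(p_j)_{j\ne k}\in\mathbb{R}^K$, symmetric on the open set $\{q>0,\ \sum_j q_j<1\}$.

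\emph{Approximation.} Let $\mathcal{Q}$ be the projection of $\mathcal{P}$ onto the $q$-coordinates. It is compact, and we may enlarge it to its symmetrization $\mathcal{Q}^s=\bigcup_\pi \pi(\mathcal{Q})$, which is still compact and lies inside the open domain. By Assumption~\ref{multinomial identification}(ii), $g$ is continuous there. Stone--Weierstrass gives a polynomial $R$ with $\sup_{\mathcal{Q}^s}|g-R|<\epsilon$. Symmetrize it as $\bar R=\frac{1}{K!}\sum_\pi R\circ\pi$. Since $g$ is symmetric and $\mathcal{Q}^s$ is permutation invariant, we still have $\sup_{\mathcal{Q}^s}|g-\bar R|\le \frac{1}{K!}\sum_\pi\sup|g\circ\pi-R\circ\pi|<\epsilon$. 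By the fundamental theorem of symmetric polynomials, $\bar R=Q(e_1,\ldots,e_K)$ for some polynomial $Q$, with the $e_\ell$ the elementary symmetric polynomials of $q$ given in \eqref{eq: elementary symmetric}. Restricting to $\mathcal{P}$ yields the claim.

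The main obstacle is the symmetry step, specifically showing that permuting the utilities permutes the probabilities in the same way, so that symmetry transfers through $\Psi^{-1}$. This requires care with the normalization $u_0\equiv0$ when $0\ne k$ is among the permuted indices, which I handle by noting that the model depends only on utility differences. If the density-level statement in Assumption~\ref{ass: exchangeability} is read as exchangeability of the full vector of non-$k$ shocks jointly with $V_{ik}$, the relabeling argument goes through directly. The approximation step is routine.
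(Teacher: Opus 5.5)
Your proposal is correct and follows essentially the same route as the paper. Both arguments get symmetry of $\tilde{\lambda}_k$ from the permutation-equivariance of the choice probabilities under exchangeability together with the bijectivity of $\Psi$ from Proposition~\ref{prop: CCP inversion}, and then obtain the approximation by applying Weierstrass on the permutation-invariant orbit of the compact set, averaging over the $K!$ permutations, and invoking the fundamental theorem of symmetric polynomials. Your explicit treatment of the $u_0\equiv 0$ normalization when $k\neq 0$, via a harmless common shift, is a slightly more careful version of the equivariance step that the paper states only briefly.
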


This proposition provides a principled truncation strategy for the selection bias in terms of observable choice probabilities. For a given truncation order $L \leq K$, one can approximate $\tilde{\lambda}_k$ by a function of only $(e_1, \ldots, e_L)$, discarding higher-order elementary symmetric polynomials. When $L$ is small relative to $K$, this yields a substantial dimensionality reduction. In practice one truncates the polynomial approximation at some order $L \leq K$ in the elementary symmetric polynomials, treating $\tilde{\lambda}_k(p_0, \ldots, p_K)$ as if it were exactly a function $\breve{\lambda}_k^{(L)}(e_1, \ldots, e_L)$ of only $L$ arguments. The conditional mean of the outcome under this $L$-truncated working model is
$$E[Y_i \mid X_i = x, D_i = k] \approx x\beta_k + \breve{\lambda}_k^{(L)}\bigl(e_1(p(x)), \ldots, e_L(p(x))\bigr).$$
Write $E_L(x) := (e_1(p(x)), \ldots, e_L(p(x)))$ for the vector of the first $L$ elementary symmetric polynomials of the non-chosen probabilities. The identification of $\beta_k$ under this approximation follows from the multi-index structure of the elementary symmetric polynomials. For $\ell = 1$, $e_1(p(x)) = 1 - p_k(x)$ is nonlinear in $x$ whenever $p_k(x)$ is nonlinear, which holds generically; identification then follows from this single-index nonlinearity with one or two continuous covariates. For $\ell \geq 2$, $e_\ell$ is a polynomial of degree $\ell$ in the choice probabilities, and under the maintained rank and spanning conditions of the proposition below it supplies the multi-index nonlinearity needed for identification. For $L = 2$ the resulting two-index structure parallels the nonparametric ordered case and requires three continuous covariates. The general pattern is:

\begin{prop}[Identification under exchangeability with $L$-order truncation]\label{prop: exchangeability id}
Consider the $L$-truncated working model.
\begin{enumerate}
    \item[(i)] If $L = 1$ and $e_1(p(x)) = 1 - p_k(x)$ is nonlinear in $x$, then $\beta_k$ is identified with one or two continuous covariates, by Propositions 1--3 of \citet{kim2025point}.
    \item[(ii)] If $L \geq 2$, suppose: (a) there exist $L+1$ continuous covariates $x_c = (x_1, \ldots, x_{L+1})$ such that $E_L(x)$ is continuous on $\operatorname{supp}(X_i \mid D_i = k)$ and continuously differentiable in $x_c$ almost everywhere, and $\breve{\lambda}_k^{(L)}$ is continuously differentiable on an open set containing $E_L(\operatorname{supp}(X_i \mid D_i = k))$; (b) the analogues of Assumption~\ref{ass: multinomial identification}(iii)--(vi) hold with $(P_K, K)$ replaced by $(E_L, L)$. Then $\beta_k$ is identified.
\end{enumerate}
\end{prop}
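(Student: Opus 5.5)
Part (i) needs no new argument. Under the $L=1$ working model the conditional mean is $x\beta_k + \breve{\lambda}_k^{(1)}(1-p_k(x))$, which is the partially linear single-index structure of \eqref{eq: partial linear ordered semiparametric} and \eqref{eq: partial linear multinomial logit}, with index $g(x):=1-p_k(x)$. Nonlinearity of $g$ in $x$ is exactly the nonlinearity condition of \citet{kim2025point}, so I would verify the remaining regularity conditions there (smoothness, no multicollinearity) and invoke their Propositions 1--3 directly. The one- versus two-covariate distinction is inherited from their results.

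For part (ii), my plan is to show that the $L$-truncated working model is formally identical to the $K$-index model in Proposition~\ref{prop: multinomial id general}, with $(P_K,K,\tilde{\lambda}_k)$ replaced by $(E_L,L,\breve{\lambda}_k^{(L)})$. I would then rerun that argument, which is itself the $K$-index version of the proof of Proposition~\ref{prop: identification}. The steps are as follows.

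\textbf{Step 1.} Suppose $(\beta,\lambda)$ and $(\beta',\lambda')$ both generate $m_k$. Then $x\delta = \mu(E_L(x))$ a.s.\ on $\operatorname{supp}(X_i\mid D_i=k)$, where $\delta=\beta-\beta'$ and $\mu=\lambda'-\lambda$ is $C^1$ on an open set containing the image, by (a).

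\textbf{Step 2.} Differentiate in $x_c$ at points where everything is differentiable. This gives $\delta_c = J_E(x)\nabla\mu(E_L(x))$, where $J_E(x)=\partial E_L/\partial x_c\in\mathbb{R}^{(L+1)\times L}$ has rank $L$ by the analogue of (iii). Hence $\delta_c\in\operatorname{col}(J_E(x))$ for a.e.\ $x$.

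\textbf{Step 3.} $\delta_c$ is orthogonal to the left-null vector $a(x)$ at each such point. At the $L+1$ points supplied by the analogue of (iv), these vectors span $\mathbb{R}^{L+1}$, so $\delta_c=0$.

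A continuity or a.e.-approximation argument will be needed to pass from "almost every $x$" to the specific points $x^{(t)}$. I would handle it by choosing neighbourhoods of positive mass around each $x^{(t)}$ on which the left-null directions stay close, using continuity of $J_E$.

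\textbf{Step 4.} With $\delta_c=0$, we get $\nabla\mu(E_L(x))\in\ker J_E(x)=\{0\}$, so $\mu$ has zero gradient on the image $E_L(\operatorname{supp})$. Connectedness (the analogue of (vi)) then makes $\mu$ constant there, say $\mu\equiv c$. The relation $x\delta=c$ on the support, together with the non-hyperplane condition (v), forces $\delta=0$ and $c=0$. So $\beta_k$ is identified, and $\breve{\lambda}_k^{(L)}$ is identified on the image as $m_k(x)-x\beta_k$.

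\textbf{Main obstacle.} I expect the delicate step to be Step 4's passage from a vanishing gradient to a constant $\mu$. The image of a connected set under a continuous map is connected but need not be open or path-connected by $C^1$ paths. I would first try integrating $\nabla\mu$ along images of smooth paths in $x$-space, using $\mu(E_L(x))=x\delta$ directly: once $\delta_c=0$, the function $x\mapsto x\delta$ depends only on the non-continuous coordinates. Then $\mu\circ E_L$ is locally constant in $x_c$, and connectedness can be applied in $x$-space rather than in the image. This sidesteps the topology of the image.
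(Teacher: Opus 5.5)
Your Steps 1--4 follow the paper's proof. Part (i) is a direct appeal to \citet{kim2025point}. Part (ii) reruns the proof of Proposition~\ref{prop: multinomial id general} with $(P_K,K)$ replaced by $(E_L,L)$, using left-null vectors at the $L+1$ spanning points, connectedness of $E_L(\operatorname{supp}(X_i\mid D_i=k))$, and the no-hyperplane condition. Your worry in the last paragraph is also legitimate. The paper simply asserts that $\nabla b\equiv 0$ on the connected image forces $b$ to be constant. Whitney's classical example, a $C^1$ function that is non-constant on a connected arc of critical points, shows this inference needs an argument.

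The gap is in how you propose to get around it. Your first observation is right and useful: once $\delta_c=0$, the identity itself gives $\mu(E_L(x))=x_{-c}\delta_{-c}$. So $\mu$ is constant on the image of each discrete cell, with no need to integrate $\nabla\mu$ over the image. But connectedness must then be used in index space, not in $x$-space.

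\begin{itemize}
\item Within a cell the constancy is trivial.
\item When $X$ contains discrete covariates (the female indicator in the application), $\operatorname{supp}(X_i\mid D_i=k)$ is disconnected in $x$-space. Connectedness there therefore says nothing about $\delta_{-c}$.
\item The coefficients on discrete covariates are identified only because different cells are mapped into overlapping or chained regions of index space. If $E_L(x)=E_L(x')$ with $x,x'$ in different cells, then $x\delta=x'\delta$.
\item Without this linking, identification fails. If two cell images were disjoint and separated, $\mu$ could take a different constant near each image, and a nonzero coefficient on that discrete covariate would be observationally equivalent.
\end{itemize}

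The correct finish is to glue the cells with the analogue of (vi). With finitely many cells and compact support, the cell images are closed sets whose union is connected. They therefore cannot split into two subfamilies with disjoint unions. Hence the constants $x_{-c}\delta_{-c}$ coincide across cells, $x\delta$ is constant on the support, and (v) gives $\delta=0$. This route is cleaner than the paper's, since it never needs ``zero gradient on a set implies locally constant''. For continuous coordinates outside $x_c$ at which $E_L$ is differentiable, $\delta_j=\nabla\mu(E_L(x))'\partial_{x_j}E_L(x)=0$ follows pointwise from $\nabla\mu(E_L(x))=0$.

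Separately, your Step~3 patch uses continuity of $J_E$ near the spanning points. ``Continuously differentiable almost everywhere'' does not supply this, so it should be stated as an added condition.
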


Without structural restrictions, the dimensionality of the bias grows with $K$, so in practice researchers should use a small number of broadly defined categories or impose the logit or exchangeability restriction.

\section{Estimation}\label{sec: estimator}

This section develops two-step sieve plug-in estimators for each selection architecture and establishes their asymptotic properties. All models in the previous section produce a conditional mean of the form
\begin{equation}\label{eq: general conditional mean}
	E[Y_i | X_i = x, D_i = k] = x\beta_k + \lambda_k(g_k(x)),
\end{equation}
where $g_k: \mathcal{X} \to \mathbb{R}^L$ is a vector of $L$ \emph{indices} and $\lambda_k: \mathbb{R}^L \to \mathbb{R}$ is an unknown smooth function. The number of indices $L$ and the structure of $g_k$ depend on the selection architecture. The same second stage is used across all architectures and is described first; let $\hat{g}_k(\cdot)$ denote the first-stage estimator obtained from the selection data $\{(D_i, X_i)\}_{i=1}^n$. For each category $k = 1, \ldots, K$, use the subsample $\mathcal{I}_k := \{i : D_i = k\}$ with $n_k := |\mathcal{I}_k|$. Approximate $\lambda_k(\cdot)$ by a sieve basis and estimate $\beta_k$ jointly:
\begin{equation}\label{eq: second stage regression}
	Y_i = X_i \beta_k + B_{J_n}^{(L)}(\hat{g}_k(X_i))' \delta_k + \eta_{ik}, \quad i \in \mathcal{I}_k,
\end{equation}
where $B_{J_n}^{(L)}(\cdot)$ is a sieve basis of dimension $\kappa_n$ for the $L$-variate function $\lambda_k$, and $\delta_k$ is the vector of sieve coefficients. The estimator of $\beta_k$ is the ordinary least squares (OLS) coefficient on $X_i$ from the regression \eqref{eq: second stage regression}.

I specify the sieve basis as follows. For $L = 1$ (single-index models), let $B_{J_n}^{(1)}(s) = (B_1(s), \ldots, B_{J_n}(s))'$ be a univariate B-spline basis of order $r$ with $J_n$ interior knots placed on the support of $\hat{g}_k(X_i)$. The sieve dimension is $\kappa_n = J_n + r$. For $L = 2$ (nonparametric ordered model), use a tensor product B-spline basis:
\begin{equation}\label{eq: tensor product basis}
	B_{J_n}^{(2)}(s_1, s_2) = \left(B_{j_1}^{[1]}(s_1) \cdot B_{j_2}^{[2]}(s_2)\right)_{1 \leq j_1 \leq J_n^{[1]} + r, \, 1 \leq j_2 \leq J_n^{[2]} + r},
\end{equation}
where $B^{[\ell]}$ denotes a univariate B-spline basis with $J_n^{[\ell]}$ interior knots in the $\ell$-th direction. The sieve dimension is $\kappa_n = (J_n^{[1]} + r)(J_n^{[2]} + r)$. For $L \geq 3$, the tensor product construction extends naturally, with dimension $\kappa_n = \prod_{\ell=1}^L (J_n^{[\ell]} + r)$. Define the augmented regressor vector for observation $i \in \mathcal{I}_k$:
$$W_{ik} := \left(X_i, \, B_{J_n}^{(L)}(\hat{g}_k(X_i))'\right)' \in \mathbb{R}^{d_X + \kappa_n},$$
and write the second-stage regression compactly as $Y_i = W_{ik}' \theta_k + \eta_{ik}$ where $\theta_k = (\beta_k', \delta_k')' \in \mathbb{R}^{d_X + \kappa_n}$. The OLS estimator is
\begin{equation}\label{eq: OLS estimator}
	\hat{\theta}_k = \left(\sum_{i \in \mathcal{I}_k} W_{ik} W_{ik}'\right)^{-1} \sum_{i \in \mathcal{I}_k} W_{ik} Y_i,
\end{equation}
and $\hat{\beta}_k$ is the subvector of $\hat{\theta}_k$ corresponding to $X_i$.

In the following subsection, I present details of the first stage estimation procedure for each selection architecture.

\subsection{First-stage estimation details}\label{sec: first stage details}

Let $Q_n(x) = (q^{(1)}(x), \ldots, q^{(q_n)}(x))'$ be a sieve basis. For the \emph{semiparametric ordered model}, let $\theta$ denote the parameter vector $(\alpha, c_1, \ldots, c_K)$. The sieve MLE maximizes the ordered choice log-likelihood:
\begin{equation}\label{eq: sieve MLE ordered}
	\hat{\theta} = \argmax_\theta \sum_{i=1}^n \sum_{k=0}^K \mathbf{1}[D_i = k] \log\left[F_\varepsilon(c_{k+1} - Q_n(X_i)'\alpha) - F_\varepsilon(c_k - Q_n(X_i)'\alpha)\right],
\end{equation}
yielding $\hat{h}(x) = Q_n(x)'\hat{\alpha}$. With $F_\varepsilon = \Phi$, this is a sieve ordered probit. For the \emph{nonparametric ordered model}, define $\tilde{D}_{ik} = \mathbf{1}[D_i \leq k-1]$ and estimate each threshold function by sieve logistic regression:
\begin{equation}\label{eq: sieve logistic}
	\hat{\alpha}_k = \argmax_{\alpha \in \mathbb{R}^{Q_n}} \sum_{i=1}^n \left[\tilde{D}_{ik} \log \Lambda(Q_n(X_i)'\alpha) + (1 - \tilde{D}_{ik}) \log(1 - \Lambda(Q_n(X_i)'\alpha))\right],
\end{equation}
giving $\hat{h}_k(x) = \Lambda(Q_n(x)'\hat{\alpha}_k)$ and index vector $\hat{g}_k(x) = (\hat{h}_k(x), \hat{h}_{k+1}(x))$.\footnote{The separate estimation does not automatically enforce $\hat{h}_1(x) < \cdots < \hat{h}_K(x)$; violations are rare in practice and can be corrected by rearrangement \citep{chernozhukov2010quantile}.}

For the \emph{multinomial logit model}, the workhorse first-stage estimator is a \emph{sieve multinomial logit}: replace the linear index $X_i \gamma_k$ with a flexible sieve approximation $Q_n(X_i)' \alpha_k$ and maximize the MNL log-likelihood
\begin{equation}\label{eq: sieve MNL}
(\hat{\alpha}_1, \ldots, \hat{\alpha}_K) = \argmax_{(\alpha_1, \ldots, \alpha_K)} \sum_{i=1}^n \sum_{k=0}^K \mathbf{1}[D_i = k] \log\left(e^{Q_n(X_i)' \alpha_k}/\sum_{j=0}^K e^{Q_n(X_i)' \alpha_j}\right),
\end{equation}
yielding $\hat{\nu}_k(x) = \ln \sum_{j=0}^K e^{Q_n(x)' \hat{\alpha}_j} - Q_n(x)' \hat{\alpha}_k$ and $\hat{p}_j(x) = \text{softmax}(Q_n(x)'\hat{\alpha})_j := e^{Q_n(x)'\hat{\alpha}_j} / \sum_{\ell=0}^K e^{Q_n(x)'\hat{\alpha}_\ell}$, with the baseline normalization $\hat{\alpha}_0 = 0$. For the \emph{exchangeability model}, the same sieve MNL is used; the estimated choice probabilities are then used to form the elementary symmetric polynomials of the non-chosen probabilities:
\begin{equation}\label{eq: estimated symmetric polynomials}
	\hat{e}_\ell(x) = e_\ell((\hat{p}_j(x))_{j \neq k}), \quad \ell = 1, \ldots, L.
\end{equation}

\begin{remark}[Sieve MNL first-stage]\label{remark: sieve MNL}
	The softmax link enforces $\hat{p}_k(x) \in (0,1)$ and $\sum_k \hat{p}_k(x) = 1$, and the log-likelihood \eqref{eq: sieve MNL} is globally concave. With a sufficiently rich sieve basis, the sieve MNL is a universal approximator for conditional probability simplices: for any continuous conditional probability vector $p(x)$ on a compact support, there exist sieve coefficients such that $\sup_x |p_k(x) - \text{softmax}(Q_n(x)'\alpha_k)| \to 0$ as the sieve dimension $q_n \to \infty$.\footnote{Formally, any strictly positive continuous probability vector $p(x)$ can be represented as $p_k(x) = \exp(f_k(x))/\sum_j \exp(f_j(x))$ for $f_k(x) = \log p_k(x) - \log p_0(x)$ (with $f_0 = 0$). By the Stone-Weierstrass theorem, each $f_k$ can be uniformly approximated by elements of the sieve space, and continuity of the softmax yields uniform approximation of $p$.} Since the sieve MNL approximates the true choice probabilities regardless of the error distribution, the estimator is robust to misspecification of the selection equation.
\end{remark}

\subsection{Asymptotic results}\label{sec: asymptotics}

Here I derive asymptotic properties of the proposed estimators under regularity conditions. I state the assumptions in a unified manner, applying to all selection architectures.

\begin{assumption}[Sampling]\label{ass: sampling}
	$\{(Y_i, X_i, D_i)\}_{i=1}^n$ are i.i.d.\ with $P[D_i = k | X_i] > c_\pi > 0$ a.s.\ for each $k = 1, \ldots, K$ (overlap), and $E[Y_i^4 | X_i, D_i = k] < \bar{M} < \infty$ a.s.
\end{assumption}

Let $\mathcal{H}^m(\mathcal{S})$ denote the H\"older ball of order $m$ on $\mathcal{S}$: the set of functions whose partial derivatives up to order $\lfloor m \rfloor$ are bounded and whose $\lfloor m \rfloor$-th partial derivatives are H\"older continuous of order $m - \lfloor m \rfloor$.

\begin{assumption}[Bias function smoothness]\label{ass: smoothness}
	For each $k$, $\lambda_{k0} \in \mathcal{H}^{m_\lambda}(\mathcal{G}_k)$ with $m_\lambda \geq 2$ when $L = 1$ and $m_\lambda > L$ when $L \geq 2$, where $\mathcal{G}_k := g_{k0}(\mathrm{supp}(X | D = k)) \subset \mathbb{R}^L$ is compact with nonempty interior. 
\end{assumption}

\begin{assumption}[Sieve approximation]\label{ass: sieve}
	The tensor-product B-spline basis of order $r \geq m_\lambda$ has interior knot numbers $J_n^{[\ell]}$ satisfying:
	(i) $J_n^{[\ell]} \to \infty$;
	(ii) $\kappa_n / n_k \to 0$ where $\kappa_n = \prod_{\ell} (J_n^{[\ell]} + r)$;
	(iii) $\kappa_n^2 \log n / n_k \to 0$;
	(iv) $n_k (J_n^{[\ell]})^{-2m_\lambda} \to 0$ (undersmoothing).
	For $L = 1$, these are jointly satisfied by any $J_n \asymp n^{a}$ with $a \in (1/(2m_\lambda), 1/2)$, a nonempty range whenever $m_\lambda \geq 2$. For $L = 2$ the joint range is $a \in (1/(2m_\lambda), 1/4)$, which is nonempty if and only if $m_\lambda > 2$; the boundary case $m_\lambda = 2$ admits no valid knot sequence, reflecting the curse of dimensionality in multi-index sieve estimation.
\end{assumption}

\begin{assumption}[First-stage convergence rate]\label{ass: first stage rate}
	The first-stage estimator satisfies
	\begin{equation}\label{eq: rate condition}
		\|\hat{g}_k - g_{k0}\|_\infty := \sup_{x \in \mathcal{X}} \|\hat{g}_k(x) - g_{k0}(x)\| = o_p(n^{-1/4}).
	\end{equation}
\end{assumption}

The following lemma establishes sufficient conditions for the first stage rate condition (Assumption \ref{ass: first stage rate}) in each selection architecture.

\begin{lemma}\label{lemma: first stage rates}
	\begin{enumerate}
		\item[(a)] \emph{Semiparametric ordered.} Under $h \in \mathcal{H}^{m_h}(\mathcal{X})$ with $m_h > d_c/2$ and standard sieve conditions \citep{chen2007large}, $\|\hat{h} - h\|_\infty = o_p(n^{-1/4})$.
		\item[(b)] \emph{Nonparametric ordered.} If $h_k \in \mathcal{H}^{m_h}(\mathcal{X})$ with $m_h > d_c/2$ and $Q_n \asymp n^{d_c/(2m_h + d_c)}$, then
		$\max_k \|\hat{h}_k - h_k\|_\infty = o_p(n^{-1/4})$.
		\item[(c)] \emph{Multinomial logit.} $\sqrt{n}$-consistency of the MLE and Lipschitz continuity of $\nu_k$ give $\|\hat{\nu}_k - \nu_{k0}\|_\infty = O_p(n^{-1/2})$.
		\item[(d)] \emph{Exchangeability.} The delta method gives $\|\hat{e}_\ell - e_{\ell 0}\|_\infty = O_p(n^{-1/2})$.
	\end{enumerate}
\end{lemma}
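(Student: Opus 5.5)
The plan is to handle the four parts separately. Each reduces to a known sup-norm rate for a first-stage estimator, composed with a Lipschitz map from the first-stage object to the index $g_k$.

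For (a) and (b), I would use the standard sup-norm rate for sieve M-estimators, \citet{chen2007large}, Lemma 2.1/Theorem 3.1 type. For sieve MLE with a $q_n$-dimensional basis approximating an $\mathcal{H}^{m_h}$ function of $d_c$ continuous arguments, the $L^2$ rate is
$$O_p\bigl(\sqrt{q_n/n}+q_n^{-m_h/d_c}\bigr).$$
Converting to sup norm costs a factor $\zeta(q_n)\asymp\sqrt{q_n}$ for B-splines, up to logs. With $q_n\asymp n^{d_c/(2m_h+d_c)}$ the $L^2$ rate is $n^{-m_h/(2m_h+d_c)}$. This is $o(n^{-1/4})$ exactly when $m_h>d_c/2$. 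For the sup-norm version I would either invoke the sharper sup-norm results for spline series, which lose only $\sqrt{\log n}$, or note that $\|\cdot\|_\infty\lesssim \sqrt{q_n}\|\cdot\|_{L^2}$ on the sieve space together with a sup-norm approximation error $q_n^{-m_h/d_c}$.

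For (b) the steps are as follows. Each $\tilde D_{ik}$ is a binary regression with link $\Lambda$. The true $h_k=\Lambda(\Lambda^{-1}(h_k))$ with $\Lambda^{-1}\circ h_k\in\mathcal{H}^{m_h}$, since $h_k$ is bounded away from $0$ and $1$ by overlap (Assumption~\ref{ass: sampling}). Because $\Lambda$ is 1-Lipschitz, the rate transfers to $\hat h_k$. The maximum over finitely many $k$ does not change the rate. For (a) the same argument applies to the single index $h$, with the thresholds $c_k$ estimated jointly. The ordered likelihood is smooth and its information is bounded away from zero under overlap, which gives the local quadratic curvature the Chen theorem requires. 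Verifying this curvature is the main obstacle in (a)--(b); I would cite it as part of the ``standard sieve conditions'' and check only that overlap keeps $F_\varepsilon(c_{k+1}-h)-F_\varepsilon(c_k-h)$ bounded below.

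For (c), the parametric multinomial logit MLE $\hat\gamma$ is $\sqrt n$-consistent by standard MLE theory, since the log-likelihood is globally concave and the information is nonsingular under no multicollinearity. The map
$$\gamma\mapsto \nu_k(x;\gamma)=\ln\Bigl(1+\sum_j e^{x\gamma_j}\Bigr)-x\gamma_k$$
has gradient in $\gamma$ bounded by $C\|x\|$, because softmax weights lie in $[0,1]$. On a compact $\mathcal{X}$ this gives $\sup_x|\hat\nu_k-\nu_{k0}|\le C\sup_x\|x\|\,\|\hat\gamma-\gamma_0\|=O_p(n^{-1/2})$ by the mean value theorem.

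For (d), the same reasoning applies. Softmax probabilities are Lipschitz in $\gamma$ uniformly over compact $\mathcal{X}$. Each $e_\ell$ is a polynomial on $[0,1]^{K}$ and hence Lipschitz. The composition therefore inherits the $O_p(n^{-1/2})$ uniform rate.

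In (c) and (d) I would flag that $O_p(n^{-1/2})$ presumes a finite-dimensional, linear-utility first stage. If the sieve MNL is used instead, the rate reverts to the nonparametric argument of (b), which still yields $o_p(n^{-1/4})$ under the analogous smoothness condition.
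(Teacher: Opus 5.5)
Your proposal follows the paper's proof: a Chen-type sieve sup-norm rate $O_p(Q_n^{-m_h/d_c}+\sqrt{Q_n\log n/n})$ at $Q_n\asymp n^{d_c/(2m_h+d_c)}$ for (a)--(b), transferred through the logistic link in (b), and $\sqrt n$-consistency of the parametric MLE combined with uniform Lipschitz continuity over compact $\mathcal{X}$ for (c)--(d), including the same caveat about the sieve MNL first stage. One correction: only your first sup-norm route gives the stated condition. The crude bound $\|f\|_\infty\le C\sqrt{q_n}\,\|f\|_{L^2}$ on the sieve space inflates the estimation error to order $q_n/\sqrt{n}$, which at the stated $q_n$ is $o(n^{-1/4})$ only when $m_h>3d_c/2$. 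You therefore need the near-optimal spline sup-norm rate, which loses only a $\sqrt{\log n}$ factor, as the paper uses.
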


\begin{remark}[Sieve multinomial first stage]\label{rem: sieve mnl rate}
Parts (c) and (d) assume a parametric MNL with $\sqrt{n}$-consistent MLE. With a sieve basis, the estimated utility index converges at a nonparametric rate; Assumption~\ref{ass: first stage rate} remains satisfied when $m_u > d_c/2$ and $Q_n \asymp n^{d_c/(2m_u + d_c)}$, where $m_u$ is the smoothness of the utility index \citep{chen2007large}.
\end{remark}

\begin{assumption}[Rank condition]\label{ass: rank}
	For each $k$, $\Sigma_k := E[\tilde{X}_{ik} \tilde{X}_{ik}' \mid D_i = k]$ is positive definite, where $\tilde{X}_{ik} := X_i - \Pi_k(g_{k0}(X_i))$ is the residual from projecting $X_i$ onto the closure of $\mathrm{span}\{B_j^{(L)}(g_{k0}(\cdot))\}_{j \geq 1}$ in $L^2(X | D = k)$.
\end{assumption}

This rank condition is guaranteed by the identification conditions of Section~\ref{sec: model} and those of Propositions 1--3 in \citet{kim2025point}.

\begin{assumption}[Error moments]\label{ass: errors}
	For each $k$, let $\varepsilon_{ik} := Y_i - X_i \beta_{k0} - \lambda_{k0}(g_{k0}(X_i))$. Then
	(i) $E[\varepsilon_{ik} | X_i, D_i = k] = 0$ a.s.;
	(ii) $E[\varepsilon_{ik}^2 | X_i, D_i = k] = \sigma_k^2(X_i)$ is bounded and bounded away from zero;
	(iii) $E[\varepsilon_{ik}^4 | X_i, D_i = k] \leq \bar{M}$ a.s.
\end{assumption}

The following theorem establishes the $\sqrt{n}$-consistency and asymptotic normality of $\hat{\beta}_k$.

\begin{theorem}[Asymptotic normality]\label{thm: asymptotic normality}
	Under Assumptions~\ref{ass: sampling}--\ref{ass: errors},
	\begin{equation}\label{eq: asymptotic normality}
		\sqrt{n_k}\left(\hat{\beta}_k - \beta_{k0}\right) \xrightarrow{d} N\left(0, \, V_k\right),
	\end{equation}
	where $V_k := \Sigma_k^{-1} \Omega_k \Sigma_k^{-1}$ and
	\begin{align*}
		\Sigma_k = E\left[\tilde{X}_{ik} \tilde{X}_{ik}' \,\middle|\, D_i = k\right], \quad \Omega_k = E\left[\tilde{X}_{ik} \tilde{X}_{ik}' \sigma_k^2(X_i) \,\middle|\, D_i = k\right],
	\end{align*}
	with $\tilde{X}_{ik} = X_i - \Pi_k(g_{k0}(X_i))$ the residual from projecting $X_i$ onto the closure of the $L$-variate tensor-product sieve space in $L^2(X | D = k)$ (with the convention that $L = 1$ gives a univariate sieve).
\end{theorem}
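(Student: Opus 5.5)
The proof treats the second-stage OLS in \eqref{eq: OLS estimator} as a partially linear series estimator with generated regressors. The argument is a Robinson/Donald--Newey-type partialling-out decomposition plus a separate bound on the first-stage error. Throughout I work conditional on the subsample $\mathcal{I}_k$. Let $\hat B_i := B_{J_n}^{(L)}(\hat g_k(X_i))$ and $B_i := B_{J_n}^{(L)}(g_{k0}(X_i))$, and let $\hat M$ and $M$ be the residual-maker matrices for the corresponding $n_k\times\kappa_n$ sieve design matrices. By Frisch--Waugh,
\begin{equation*}
\hat\beta_k-\beta_{k0} = (\mathbf{X}'\hat M\mathbf{X})^{-1}\mathbf{X}'\hat M\bigl(\boldsymbol{\varepsilon}_k + \boldsymbol{\lambda}_{k0}\bigr),
\end{equation*}
where $\boldsymbol{\lambda}_{k0}=(\lambda_{k0}(g_{k0}(X_i)))_{i\in\mathcal{I}_k}$. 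Because $\lambda_{k0}(g_{k0}(X_i))$ is not in the span of $\hat B$, I decompose $\boldsymbol{\lambda}_{k0}=\hat{\mathbf{B}}\delta^*+\mathbf{r}_1+\mathbf{r}_2$. Here $\delta^*$ is the best sieve approximation coefficient, $r_{1i}=\lambda_{k0}(g_{k0}(X_i))-B_i'\delta^*$ is the approximation error, and $r_{2i}=(B_i-\hat B_i)'\delta^*$ is the generated-regressor error. Since $\hat M\hat{\mathbf{B}}=0$, it remains to show three things: $n_k^{-1}\mathbf{X}'\hat M\mathbf{X}\to_p\Sigma_k$, $n_k^{-1/2}\mathbf{X}'\hat M\boldsymbol{\varepsilon}_k\to_d N(0,\Omega_k)$, and $n_k^{-1/2}\mathbf{X}'\hat M(\mathbf r_1+\mathbf r_2)=o_p(1)$.

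\emph{Step 1: oracle version.} I first replace $\hat M$ by $M$. By standard series arguments (Newey 1997; Donald--Newey), Assumption~\ref{ass: sieve}(ii)--(iii) together with $\kappa_n^2\log n/n_k\to0$ gives $\|n_k^{-1}\mathbf B'\mathbf B-E[B_iB_i'\mid D=k]\|=o_p(1)$ with eigenvalues bounded away from zero; this uses compactness of $\mathcal G_k$ and the B-spline properties. Then $n_k^{-1}\mathbf X'M\mathbf X = n_k^{-1}\sum\tilde X_{ik}\tilde X_{ik}'+o_p(1)\to\Sigma_k$, because the sieve projection of $X$ converges to $\Pi_k$ in $L^2$ as $J_n\to\infty$. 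For the score, write $\mathbf X'M\boldsymbol\varepsilon = \tilde{\mathbf X}'\boldsymbol\varepsilon - (\mathbf X-\tilde{\mathbf X})'(I-M)\boldsymbol\varepsilon$ plus a term involving the sieve-projection error of $\Pi_k$. Each correction has conditional mean zero and variance $o(n_k)$ by Assumption~\ref{ass: errors}. Lindeberg--Feller, using the fourth moments in Assumption~\ref{ass: errors}(iii), then yields $N(0,\Omega_k)$. The approximation term obeys $\|\mathbf r_1\|_\infty=O(\max_\ell (J_n^{[\ell]})^{-m_\lambda})$, so $n_k^{-1/2}\mathbf X'M\mathbf r_1\le \sqrt{n_k}\,O(J_n^{-m_\lambda})\cdot O_p(1)=o_p(1)$ by the undersmoothing condition (iv). Positive definiteness in Assumption~\ref{ass: rank} makes the inverse well defined.

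\emph{Step 2: generated regressors (main obstacle).} I must show that replacing $g_{k0}$ by $\hat g_k$ has negligible effect. Directly, $|r_{2i}|\le \|\nabla\lambda_{k0}\|_\infty\|\hat g_k-g_{k0}\|_\infty$ up to sieve approximation of the derivative, which is valid because $m_\lambda\ge2$ and $r\ge m_\lambda$. This gives $O_p(\|\hat g-g\|_\infty)=o_p(n^{-1/4})$, which is not enough against the $\sqrt{n_k}$ scaling by itself. The plan is a mean-value expansion of $\mathbf r_2$ to first order:
\begin{equation*}
r_{2i}=-\nabla\lambda_{k0}(g_{k0}(X_i))'(\hat g_k-g_{k0})(X_i)+O_p(\|\hat g_k-g_{k0}\|_\infty^2).
\end{equation*}
The quadratic remainder is $o_p(n^{-1/2})$ by Assumption~\ref{ass: first stage rate}, which is where the $n^{-1/4}$ rate enters. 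For the linear term I use the identification structure. In all the architectures of Section~\ref{sec: model}, the first-stage error $(\hat g_k-g_{k0})(X_i)$ enters through $X_i$ only. I would then argue that $n_k^{-1/2}\tilde{\mathbf X}'\,\mathrm{diag}(\nabla\lambda_{k0})(\hat{\mathbf g}-\mathbf g)$ is $o_p(1)$ by stochastic equicontinuity, combined with the sieve first-stage expansion in Lemma~\ref{lemma: first stage rates}. This matches the theorem's variance, which omits a first-stage correction, so the theorem implicitly requires that this pathwise derivative vanish. I expect this step to be the delicate one, and I would state the needed orthogonality explicitly if it does not follow from the projection structure.

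The perturbation $\hat M-M$ acting on $\mathbf X$ and $\boldsymbol\varepsilon$ is handled with the same Lipschitz bound on the B-splines. Since $\|\hat B_i-B_i\|\lesssim \kappa_n^{1/2}J_n\|\hat g-g\|_\infty$ and the rate conditions hold, $n_k^{-1}\|\hat{\mathbf B}-\mathbf B\|^2=o_p(1)$, so the Gram matrix limits coincide. The $\boldsymbol\varepsilon$ term has conditional mean zero given $\{X_i\}$ and the first-stage data (using sample splitting or a Donsker argument), so its variance shrinks. Combining Steps 1--2 via Slutsky gives \eqref{eq: asymptotic normality}.
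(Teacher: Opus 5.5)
\paragraph{A genuine gap, at the step you flag.} Your decomposition is the paper's own. You use Frisch--Waugh on the augmented design, an oracle CLT with undersmoothing, and a perturbation step in which your $\mathbf r_2=(\mathbf B-\hat{\mathbf B})\delta^*$ is exactly the paper's term $\mathbf X'(I-\hat P_n)(\mathbf B_0-\hat{\mathbf B})\delta_0/n$. The gap is the one you identify yourself: you never show that the linear piece $n_k^{-1/2}\sum_{i\in\mathcal I_k}\tilde X_{ik}\nabla\lambda_{k0}(g_{k0}(X_i))'(\hat g_k-g_{k0})(X_i)$ is $o_p(1)$, and the tools you propose cannot show it.

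Stochastic equicontinuity only replaces this average by $\sqrt{n_k}\,E[\tilde X_{ik}\nabla\lambda_{k0}(g_{k0}(X_i))'\Delta(X_i)\mid D_i=k]$ evaluated at $\Delta=\hat g_k-g_{k0}$. The first-stage expansion then makes this a smooth linear functional of the first-stage estimator. Under the usual conditions for $\sqrt n$-estimable sieve functionals, it equals $n^{-1/2}\sum_j\psi(D_j,X_j)+o_p(1)$, with an influence function that is nondegenerate unless $\tilde X_{ik}\nabla\lambda_{k0}(g_{k0}(X_i))=0$ almost surely. The orthogonality $E[\tilde X_{ik}\mid g_{k0}(X_i),D_i=k]=0$ only annihilates directions $\Delta$ that are functions of $g_{k0}(X)$, whereas $\hat g_k-g_{k0}$ is a general function of $X$.

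The parametric first stage of Lemma~\ref{lemma: first stage rates}(c) already shows the problem. There the term is $-E[\tilde X_{ik}\lambda_{k0}'(\nu_{k0}(X_i))\,\partial_\gamma\nu_k(X_i;\gamma_0)'\mid D_i=k]\sqrt{n_k}(\hat\gamma-\gamma_0)+o_p(1)$. That is an $O_p(1)$ Gaussian term, not $o_p(1)$.

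\paragraph{The paper does not close it either.} Step~3 disposes of this term by invoking the same orthogonality and citing standard two-step sieve results. The Remark after the proof concedes that the Gateaux derivative is nonzero along $X$-measurable directions. It then asserts that a first-order term bounded by $C\|\hat g-g_0\|_{L^2}=o_p(n^{-1/4})$ is absorbed. After scaling by $\sqrt{n_k}$, however, such a term is only $o_p(n^{1/4})$. The rate in Assumption~\ref{ass: first stage rate} controls your quadratic remainder, not the linear term.

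Your suspicion is therefore correct, and the step cannot be closed as the statement stands. To obtain $V_k=\Sigma_k^{-1}\Omega_k\Sigma_k^{-1}$ you need an additional hypothesis that this pathwise derivative vanishes along the first-stage error. Otherwise the variance must include the first-stage influence term, as in Newey (2009) for two-step series selection estimators or Hahn and Ridder (2013) for generated regressors. That term depends only on $(D_j,X_j)$, and $E[\varepsilon_{ik}\mid X_i,D_i=k]=0$, so it is uncorrelated with $\tilde X_{ik}\varepsilon_{ik}$; the stated $V_k$ thus understates the true asymptotic variance.

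\paragraph{Minor points.} Local support of B-splines gives $\|\hat B_i-B_i\|\lesssim J_n\|\hat g_k-g_{k0}\|_\infty$, without your extra factor $\kappa_n^{1/2}$. No sample splitting is needed for the $\boldsymbol\varepsilon$ term, because $\hat g_k$ depends only on $(D_j,X_j)_{j\le n}$.
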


\noindent The first-stage estimation error is asymptotically negligible under~\eqref{eq: rate condition}, so the feasible estimator has the same asymptotic distribution as the oracle using the true $g_{k0}$. Under homoskedasticity, the asymptotic variance simplifies to $V_k = \sigma_k^2 \Sigma_k^{-1}$. The following corollary establishes that the sieve plug-in estimator achieves the semiparametric efficiency bound under homoskedasticity.

\begin{corollary}[Semiparametric efficiency]\label{cor: efficiency}
Suppose the conditions of Theorem~\ref{thm: asymptotic normality} hold and the conditional variance is homoskedastic ($\sigma_k^2(X_i) = \sigma_k^2$ a.s.). Then the oracle estimator $\tilde{\beta}_k$ in \eqref{eq: oracle CLT} attains the partially-linear semiparametric efficiency bound $V_k=\sigma_k^2\Sigma_k^{-1}$ of \citet{chamberlain1992efficiency, robinson1988root}, and the feasible plug-in $\hat{\beta}_k$ attains the same bound because $\sqrt{n_k}(\hat{\beta}_k-\tilde{\beta}_k)=o_p(1)$.
\end{corollary}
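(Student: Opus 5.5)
The corollary has two parts: the oracle $\tilde\beta_k$ attains the bound, and the feasible $\hat\beta_k$ is asymptotically equivalent to it. I would prove them in that order.

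\textbf{Efficiency bound.} The plan is to reduce the model to the standard partially linear regression. Conditional on $D_i=k$, the restriction is
$$E[Y_i\mid X_i, D_i=k]=X_i\beta_k+\lambda_k(g_{k0}(X_i)),$$
with $g_{k0}$ treated as known and $\lambda_k$ ranging over a nonparametric (H\"older) class. The nuisance tangent space is generated by scores of the form $\varepsilon_{ik}\,a(g_{k0}(X_i))$, together with scores from the unrestricted distributions of $X$ and of $\varepsilon$ given $X$, subject to the mean restriction. Applying the Chamberlain (1992) and Robinson (1988) calculation with $V=g_{k0}(X)$ playing the role of the nonparametric argument, the efficient score is
$$S_k^\ast=\sigma_k^{-2}(X_i)\bigl(X_i-E^\ast[X_i\mid g_{k0}]\bigr)\varepsilon_{ik},$$
where $E^\ast$ is the $\sigma^{-2}$-weighted projection onto functions of $g_{k0}(X)$.

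Under homoskedasticity the weight is constant. The weighted projection then reduces to the ordinary $L^2(X\mid D=k)$ projection $\Pi_k$. I would check that this coincides with the projection in Assumption~\ref{ass: rank}: the closure of the tensor-product B-spline span in $g_{k0}(\cdot)$ is the whole of $L^2$ functions of $g_{k0}(X)$, because B-splines are dense on the compact set $\mathcal{G}_k$. The efficient information is therefore $\sigma_k^{-2}\Sigma_k$, so the bound is $\sigma_k^2\Sigma_k^{-1}$. This is invertible by Assumption~\ref{ass: rank}.

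\textbf{The oracle attains the bound.} By the oracle CLT \eqref{eq: oracle CLT} obtained in the proof of Theorem~\ref{thm: asymptotic normality}, $\tilde\beta_k$ has asymptotic variance $\Sigma_k^{-1}\Omega_k\Sigma_k^{-1}$. Setting $\sigma_k^2(X)=\sigma_k^2$ gives $\Omega_k=\sigma_k^2\Sigma_k$, so the variance equals the bound. I would also confirm regularity, so that the convolution theorem applies. The oracle is asymptotically linear with influence function $\Sigma_k^{-1}\tilde X_{ik}\varepsilon_{ik}$, which is exactly the efficient influence function. An asymptotically linear estimator with the efficient influence function is regular, which settles this.

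\textbf{The feasible estimator.} The step $\sqrt{n_k}(\hat\beta_k-\tilde\beta_k)=o_p(1)$ is the core of the proof of Theorem~\ref{thm: asymptotic normality} and follows from the rate condition \eqref{eq: rate condition}. I would decompose the difference into two terms:
\begin{itemize}
\item[(a)] a linearization term involving $\partial_s B(g_{k0})(\hat g_k-g_{k0})$. Projected against $\tilde X$, this term is orthogonal to functions of $g_{k0}$ to first order, so it vanishes.
\item[(b)] a quadratic remainder of order $\sqrt n\,\|\hat g_k-g_{k0}\|_\infty^2=o_p(1)$.
\end{itemize}
Asymptotic equivalence then transfers both the limit and the influence function to $\hat\beta_k$, and hence efficiency.

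\textbf{Main obstacle.} The subtle point is term (a). The first-stage estimator $\hat g_k$ uses the same $X$, so in general its error could enter $\beta_k$ at first order. The argument needs orthogonality that holds conditionally on $X$. Here it comes from $\lambda_k'(g_{k0}(X))(\hat g_k-g_{k0})(X)$ being a function of $X$ that is not of the form $a(g_{k0}(X))$. I would handle this by taking the negligibility established in Theorem~\ref{thm: asymptotic normality} as given, rather than re-deriving it.
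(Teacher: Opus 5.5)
Your treatment of the bound and of the oracle follows the paper's logic: homoskedasticity gives $\Omega_k=\sigma_k^2\Sigma_k$, and $\Sigma_k^{-1}\tilde X_{ik}\varepsilon_{ik}$ is the efficient influence function of the known-index partially linear model. It is also more careful than the paper's one-line appeal to Chamberlain and Robinson. Identifying the closure of the spline span with all square-integrable functions of $g_{k0}(X)$, and deducing regularity from the efficient influence function, are checks the paper skips. That half is fine.

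The feasible half has a genuine gap. Bullet (a) is false as written, and your last paragraph identifies why: $\tilde X_{ik}$ is orthogonal only to functions of $g_{k0}(X_i)$, and $\lambda_k'(g_{k0}(X_i))(\hat g_k-g_{k0})(X_i)$ is not one.

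\emph{Why the linear term survives.} The term $\Sigma_k^{-1}n_k^{-1}\sum_{i\in\mathcal{I}_k}\tilde X_{ik}\lambda_k'(g_{k0}(X_i))(g_{k0}-\hat g_k)(X_i)$ is a smooth linear functional of the first-stage error. It is therefore generically of exact order $n^{-1/2}$ with a nondegenerate limit. The rate $o_p(n^{-1/4})$ disposes only of the quadratic remainder (b); applied to (a) it gives $\sqrt{n}\cdot o_p(n^{-1/4})$, which need not vanish.

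\emph{A concrete counterexample.} Take the parametric MNL first stage of Lemma~\ref{lemma: first stage rates}(c). There $\hat\nu_k-\nu_{k0}=\partial_\gamma\nu_k(x)'(\hat\gamma-\gamma_0)+O_p(n^{-1})$, with $\partial_{\gamma_k}\nu_k(x)=-x'(1-p_k(x))$. Since $1-p_k=1-e^{-\nu_k}$ is a function of the index, the $\gamma_k$-block of the matrix multiplying $\sqrt{n_k}(\hat\gamma-\gamma_0)$ in $\sqrt{n_k}(\hat\beta_k-\tilde\beta_k)$ tends to $\Sigma_k^{-1}E[\lambda_k'(\nu_{k0}(X_i))(1-p_k(X_i))\tilde X_{ik}\tilde X_{ik}'\mid D_i=k]$. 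This matrix is nonsingular whenever $\lambda_k'$ has one strict sign. It is just the classical Heckman two-step correction, and it is uncorrelated with the oracle term, so it strictly inflates the variance.

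\emph{Why appealing to Theorem~\ref{thm: asymptotic normality} does not help.} It only relocates the problem. Step~3 of that proof dismisses exactly this $\mathbf{B}_0\delta_0$ term by the same orthogonality, and the Remark after the proof concedes that the Gateaux derivative is nonzero in $X$-measurable directions.

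\emph{The benchmark itself is off.} Perturbing the thresholds or utilities moves $g_{k0}$ while leaving $\lambda_k$ fixed. The induced scores have inner product $E[\sigma_k^{-2}\tilde X_{ik}\lambda_k'(g_{k0}(X_i))\dot g(X_i)\mid D_i=k]$ with your efficient score, which is nonzero for generic $\dot g$. So once $g_{k0}$ must be learned from $(D_i,X_i)$, the efficiency bound exceeds $\sigma_k^2\Sigma_k^{-1}$, and that quantity is the wrong target for $\hat\beta_k$.

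Your argument proves the oracle statement. The feasible statement needs a first-stage correction term in the influence function, and the rate condition cannot supply it.
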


The FWL residualization $\tilde{X}_{ik} = X_i - \Pi_k(g_{k0}(X_i))$ is the efficient influence function for the partially linear model, and the sieve basis consistently approximates the projection $\Pi_k$. Models with fewer indices ($L = 1$) generically yield smaller asymptotic variance than those with more, reflecting the statistical cost of the structural restrictions needed to reduce the dimensionality of the bias function.\footnote{Under heteroskedasticity, $V_k = \Sigma_k^{-1} \Omega_k \Sigma_k^{-1}$ remains valid but $\hat{\beta}_k$ is no longer efficient. Efficient estimation under heteroskedasticity would require weighted least squares with estimated conditional variance $\sigma_k^2(X_i)$.}

From the asymptotic distribution, I derive the consistent variance estimator in the following theorem. 

\begin{theorem}[Variance]\label{thm: variance}
	Under the conditions of Theorem~\ref{thm: asymptotic normality}, define
	\begin{align}\label{eq: V hat}
		\hat{\Sigma}_k := \frac{1}{n_k} \sum_{i \in \mathcal{I}_k} \hat{\tilde{X}}_{ik} \hat{\tilde{X}}_{ik}',  \quad
		\hat{\Omega}_k := \frac{1}{n_k} \sum_{i \in \mathcal{I}_k} \hat{\tilde{X}}_{ik} \hat{\tilde{X}}_{ik}' \hat{\varepsilon}_{ik}^2,  \quad 
		\hat{V}_k := \hat{\Sigma}_k^{-1} \hat{\Omega}_k \hat{\Sigma}_k^{-1}, 
	\end{align}
	where $\hat{\tilde{X}}_{ik} := X_i - \hat{B}_i'(\hat{\mathbf{B}}'\hat{\mathbf{B}})^{-1}\hat{\mathbf{B}}'\mathbf{X}$\label{eq: X tilde hat} and $\hat{\varepsilon}_{ik} := Y_i - X_i \hat{\beta}_k - \hat{B}_i' \hat{\delta}_k$\label{eq: epsilon hat} with $\hat{B}_i := B^{(L)}(\hat{g}_k(X_i))$. Then $\hat{V}_k \xrightarrow{p} V_k$. Under homoskedasticity:
	\begin{equation}\label{eq: variance homoskedastic}
		\hat{V}_k^{\mathrm{hom}} = \hat{\sigma}_k^2 \hat{\Sigma}_k^{-1}, \quad \hat{\sigma}_k^2 = \frac{1}{n_k - d_X - \kappa_n} \sum_{i \in \mathcal{I}_k} \hat{\varepsilon}_{ik}^2.
	\end{equation}
\end{theorem}
\noindent As the estimator is a plain OLS on the augmented design $W_{ik} = (X_i, B^{(L)}(\hat{g}_k(X_i))')'$, the inference object for $\beta_k$ is the heteroskedasticity-robust sandwich variance $\hat{V}_k$ from \eqref{eq: V hat}, obtained directly using standard software.

\section{Simulations}\label{sec: simulations}

This section provides Monte Carlo evidence on the finite-sample performance of the proposed estimators across selection architectures. The baseline simulations use $n = 5{,}000$ observations and $500$ replications, with cubic splines throughout. For each data-generating process (DGP) and estimator, I report the root mean squared error (RMSE), mean bias, and empirical coverage probability of the 95\% confidence interval.

\subsection{Ordered selection}

Two DGPs test ordered selection models with $K = 2$ occupation categories. Four estimators are compared in the main text: (i)~OLS, which ignores selection; (ii)~Linear, a parametric selection correction using a linear ordered probit; (iii)~Oracle, an infeasible benchmark using the true correction function in the second stage; and (iv)~Sieve, the proposed estimator using a nonparametric first stage and a sieve approximation to $\lambda_k(\hat{h}_k(x), \hat{h}_{k+1}(x))$ in the second stage.

\medskip\noindent\textbf{DGP1: Two continuous covariates.}
The selection mechanism follows an ordered threshold-crossing model with a nonlinear index:
$$D_i = k \quad \text{if} \quad c_k \leq 0.5 X_i - 0.5 X_i^2 + 0.2 X_i^3 + 0.5 X_i Z_i + Z_i - 0.5 Z_i^2 + U_i < c_{k+1},$$
with thresholds $c_1 = -1.5$ and $c_2 = 0.5$. The covariates $(X_i, Z_i) \sim N(\mathbf{0}, I_2)$ are independent, and the errors $(U_i, V_{i1}, V_{i2})$ are jointly normal with correlation 0.75 between the selection error and each outcome error. The outcome equations are $Y_{i1} = 0.5 + 0.5 X_i + 0.25 Z_i + V_{i1}$ and $Y_{i2} = 0.6 + 0.7 X_i + 0.5 Z_i + V_{i2}$.

\medskip\noindent\textbf{DGP2: Mixed covariates.}
This DGP modifies DGP1 by replacing the continuous covariate $Z_i$ with a binary indicator $Z_i = \mathbf{1}[Z_i' > 0]$ where $Z_i' \sim N(0,1)$, and enriching the selection index with interaction terms $X_i^2 Z_i$ and $X_i^3 Z_i$:
$$\tilde{h}(X_i, Z_i) = -0.2 X_i - 0.5 X_i^2 + 0.3 X_i^3 + 0.1 X_i Z_i + 0.5 Z_i - 0.3 X_i^2 Z_i + 0.2 X_i^3 Z_i.$$

Table~\ref{tab: ordered results} reports the occupation-level RMSE, absolute bias, and coverage across DGPs. In DGP1, the OLS estimator exhibits substantial bias, confirming large selection bias. The Linear estimator fails catastrophically for Occupation 1, reflecting the severe mismatch between the linear index and the true nonlinear index. For Occupation 2, the Linear estimator is less extreme but still substantially biased. The Oracle estimator is nearly unbiased with tight dispersion and the Sieve estimator tracks the oracle very closely, with RMSE only slightly larger than the infeasible benchmark. In DGP2 the binary covariate complicates the within-category support of the first-stage probability vector, but the Sieve estimator remains near-oracle on both occupations. Coverage is at or near the nominal 95\% level for Oracle and Sieve in all four cells. 

\begin{table}[t!]
\centering
\caption{Ordered selection: RMSE, bias, and coverage}\label{tab: ordered results}
\begin{tabular}{ll cccc cccc}
\toprule
& & \multicolumn{4}{c}{DGP1: continuous $X,Z$} & \multicolumn{4}{c}{DGP2: continuous $X$, binary $Z$} \\
\cmidrule(lr){3-6} \cmidrule(lr){7-10}
& & OLS & Linear & Oracle & Sieve & OLS & Linear & Oracle & Sieve \\
\midrule
Occ.\ 1 & RMSE     & 0.701 & 5.581 & 0.064 & 0.091 & 0.177 & 5.811 & 0.031 & 0.060 \\
        & $|$Bias$|$ & 0.700 & 2.026 & 0.001 & 0.005 & 0.175 & 5.265 & 0.002 & 0.010 \\
        & Coverage & 0.000 & 0.901 & 0.904 & 0.884 & 0.000 & 0.408 & 0.899 & 0.922 \\
\addlinespace
Occ.\ 2 & RMSE     & 0.550 & 0.383 & 0.136 & 0.153 & 0.184 & 0.827 & 0.058 & 0.077 \\
        & $|$Bias$|$ & 0.542 & 0.110 & 0.004 & 0.012 & 0.177 & 0.765 & 0.001 & 0.004 \\
        & Coverage & 0.000 & 0.927 & 0.926 & 0.928 & 0.057 & 0.308 & 0.939 & 0.945 \\
\bottomrule
\end{tabular}
\smallskip

\parbox{\textwidth}{\footnotesize\textit{Note:} OLS ignores selection. Linear denotes the parametric control-function estimator based on a linear ordered probit first stage. Oracle denotes the infeasible benchmark that uses the true selection correction function in the second stage. Sieve denotes the proposed sieve plug-in estimator, with a nonparametric first stage.}
\end{table}

\begin{remark}[Effective control-function dimension in ordered selection]\label{rem: ordered effective L}
The sieve second stage approximates $\lambda_k$ with a bivariate cubic B-spline tensor basis. DGPs 1--2 carry only two and one continuous covariate respectively, so the identification condition (3 continuous covariates) is not met. Nonetheless, Sieve achieves near-oracle performance in both cases. The mechanism is that $(\hat h_k(X), \hat h_{k+1}(X))$ is not a genuinely two-dimensional first-stage index in the DGPs: both components are deterministic monotone transformations of the same scalar index $h(X) + U$, since $h_k(X) = F(c_k - h(X))$. The pair therefore lies on the one-dimensional parametric curve $\{(F(c_k - t), F(c_{k+1} - t)) : t \in \mathbb{R}\} \subset [0,1]^2$, and any smooth function $\lambda_k$ evaluated on this curve collapses to a smooth function of $h(X)$ alone. The bivariate basis is functionally equivalent to a univariate sieve in $h(X)$ and hence the effective control-function dimension is $L = 1$.
\end{remark}

Additional simulation results with $K=3$ are reported in Appendix~\ref{sec: appendix additional sims}: Ordered DGP3 ($K = 3$, three continuous covariates) showcases near-oracle Sieve performance at full nonlinearity and stress-tests weak nonlinearity by scaling the higher-order terms of the selection index by $\delta$. As the selection index becomes nearly linear, even the Oracle's variance inflates; the Sieve estimator essentially matches the Oracle for moderate nonlinearity and incurs a finite-sample bias penalty only in the most adversarial near-linear regime, while remaining far better behaved than the parametric Heckman-type estimator throughout.

\subsection{Multinomial selection}

Four DGPs are considered for multinomial selection: a baseline $K = 2$ design under IIA, two exchangeable designs with $K = 3$, and a non-exchangeable design with $K = 3$. Table~\ref{tab: mnl dgps} summarizes the key features of the DGPs. Five estimators are compared in the main text: (i)~OLS, which ignores selection; (ii)~MLogit, using the sieve-estimated inclusive value $\hat{\nu}_k(x)$ as a single-index control function under the own-shock restriction; (iii)~Oracle, an infeasible benchmark using the true inclusive value as the linear correction (DGPs 1--2, where the own-shock restriction holds) or the true choice-probability vector with the Sieve second-stage (DGPs 3--4, where own-shock fails); (iv)~Sieve, using the sieve MNL predicted probability vector through cubic B-spline marginals plus pairwise tensor interactions in the second stage; and (v)~Exch-$L2$ (only for DGPs 2--4 with $K = 3$), using the first two elementary symmetric polynomials $(\hat{e}_1, \hat{e}_2)$ of the sieve MNL choice probabilities under exchangeability.

\begin{table}[H]
\centering
\caption{Summary of multinomial selection DGPs}\label{tab: mnl dgps}
\begin{tabular}{c c l l c c c }
\toprule
DGP & $K$ & Covariates & Preference shocks & IIA & Exch. & Own-shock\\
\midrule
1 & 2 & 2 continuous           & i.i.d.\ Gumbel              & \checkmark & \checkmark & \checkmark\\
2 & 3 & 3 continuous           & i.i.d.\ Gumbel              & \checkmark & \checkmark & \checkmark \\
3 & 3 & 3 continuous           & equicorrelated normal       & ---        & \checkmark & ---        \\
4 & 3 & 3 continuous           & non-exch.\ factor model     & ---        & ---        & ---         \\
\bottomrule
\end{tabular}

\smallskip
\parbox{\textwidth}{\footnotesize\textit{Notes:} \checkmark\ indicates the condition holds; --- indicates it is violated. ``IIA'' refers to the choice-probability ratio property of multinomial logit. ``Exch.'' is the exchangeability condition (Assumption~\ref{ass: exchangeability}). ``Own-shock'' is the restriction~\eqref{eq: own shock restriction} that $V_{ik}$ depends on the preference vector only through $\varepsilon_{ik}$.}
\end{table}

DGP1 defines the selection procedure with $K = 2$ following utility maximization, $D_i = \argmax_{j \in \{0,1,2\}} \{f_j(X_i, Z_i) + \varepsilon_{ij}\},$ where $f_0 \equiv 0$ and $\varepsilon_{ij} \sim \text{Gumbel}(0,1)$. Both covariates are independently drawn from the standard normal distribution and the utility functions $f_j$ are polynomials of degree 3 in $X$ and degree 2 in $Z$ with interaction terms. The outcome errors satisfy $V_{ik} = \varepsilon_{ik} + \tilde{\varepsilon}_{ik}$ with $\tilde{\varepsilon}_{ik} \sim \text{Gumbel}(0,1)$ independent, inducing correlation between selection and outcome errors. The outcome parameters are $\beta_1 = (0.5, 0.7)'$ and $\beta_2 = (0.8, 0.5)'$. DGP1 satisfies IIA, exchangeability, and the own-shock restriction.

DGP2 features $K = 3$ and \emph{three} continuous covariates $(X_i, Z_i, W_i) \sim N(\mathbf{0}, I_3)$. The inclusion of three continuous covariates is dictated by the identification theory: by Proposition~\ref{prop: exchangeability id}(ii), the $L = 2$ truncation of the elementary symmetric polynomial correction requires $L + 1 = 3$ continuous covariates for identification. The utilities are $U_{ij} = f_j(X_i, Z_i, W_i) + \varepsilon_{ij}$ for $j \in \{0, 1, 2, 3\}$, where the utility functions $f_j$ include quadratic terms and pairwise interactions among the three covariates, and the $\varepsilon_{ij}$ are i.i.d.\ Gumbel. The outcome equation for each $k$ is: $Y_{ik} = \alpha_{k} + \beta_{k1} X_i + \beta_{k2} Z_i + \beta_{k3} W_i + V_{ik},$ where $(\alpha_1, \beta_{11}, \beta_{12}, \beta_{13}) = (0.4, 0.5, 0.7, 0.3)$, $(\alpha_2, \beta_{21}, \beta_{22}, \beta_{23}) = (0.6, 0.8, 0.5, 0.4)$, and $(\alpha_3, \beta_{31}, \beta_{32}, \beta_{33}) = (0.5, 0.3, 0.9, 0.6)$. The selection bias for each category is approximated using elementary symmetric polynomials of the choice probabilities, truncated at order $L = 2$ as in the $L$-truncated working model. This DGP tests whether the exchangeability approximation is effective when the number of occupations exceeds two and the required continuous variation for identification is available. IIA, exchangeability, and the own-shock restriction hold in this DGP.

DGP3 is designed to showcase the practical value of the exchangeability framework when the own-shock restriction and IIA fail but exchangeability holds. The preference shocks are equicorrelated normal, $\varepsilon_{ij} = \sqrt{\rho}\, c_i + \sqrt{1 - \rho}\, z_{ij},\ \forall j \in \{0,1,2,3\},$ where $c_i \sim N(0,1)$ is a common factor, $z_{ij} \sim N(0,1)$ are independent, and $\rho = 0.5$. This symmetric structure is exchangeable by construction but violates IIA. The covariates, utility functions, and outcome equations are identical to DGP2. The critical departure from DGP2 is in the outcome errors, which introduce self-reinforcing specialization:
$V_{ik} = \varepsilon_{ik} + \gamma_{\text{sr}} \, \varepsilon_{ik} \left( \varepsilon_{ik} - \bar{\varepsilon}_{i,-k} \right) + \sigma_\eta \, \eta_{ik},$
where $\bar{\varepsilon}_{i,-k} = K^{-1} \sum_{j \neq k} \varepsilon_{ij}$ is the mean preference shock of competing alternatives, $\gamma_{\text{sr}} = 1.0$ controls the specialization intensity, $\sigma_\eta = 0.2$, and $\eta_{ik} \sim N(0,1)$. Workers whose preference shock $\varepsilon_{ik}$ exceeds the competition average receive amplified productivity in occupation $k$, creating positive selection. This violates the own-shock restriction because $V_{ik}$ depends on all $\varepsilon_{ij}$ through $\bar{\varepsilon}_{i,-k}$. However, the dependence on competing shocks is symmetric, preserving exchangeability.

Lastly, DGP4 breaks IIA, exchangeability, and the own-shock restriction all together to assess the boundary of the proposed methods while keeping $K = 3$. The preference shocks follow a single-factor model: $\varepsilon_{ij} = \lambda_j \, f_i + u_{ij},\ \forall j \in \{0, 1, 2, 3\},$ where $f_i \sim N(0,1)$ is a common factor, $u_{ij} \sim N(0,1)$ are independent. As the loadings $\boldsymbol{\lambda} = (0, 0.3, 0.8, -0.5)$ are \emph{heterogeneous} across alternatives, the factor does not cancel in utility differences: $\text{Cov}(\varepsilon_{ij} - \varepsilon_{ik}, \varepsilon_{il} - \varepsilon_{ik}) = (\lambda_j - \lambda_k)(\lambda_l - \lambda_k)$ depends on the identities of $j$ and $l$, not just their count, violating exchangeability. Occupation 2 (loading $\lambda_2 = 0.8$) and occupation 3 ($\lambda_3 = -0.5$) have the most dissimilar loadings and thus the weakest substitutability, while occupation 1 ($\lambda_1 = 0.3$) is closer to the outside option ($\lambda_0 = 0$). The covariates, utility specifications, and outcome equations are identical to DGP2--3. The outcome error follows the same self-reinforcing specialization structure as DGP3, with $\gamma_{\text{sr}} = 2.0$ and $\sigma_\eta = 0.2$.

\begin{table}[h!]
\centering
\caption{Multinomial selection: RMSE, bias, and coverage probability}\label{tab: multinomial results}
\begin{tabular}{l l ccccc}
\toprule
& & OLS & MLogit & Oracle & Sieve & Exch-$L2$ \\
\midrule
\multicolumn{7}{l}{\textit{DGP1: IIA, $K = 2$}} \\
& RMSE     & 0.173 & 0.062 & 0.055 & 0.117 & --- \\
& $|$Bias$|$ & 0.159 & 0.003 & 0.002 & 0.010 & --- \\
& Coverage & 0.437 & 0.930 & 0.952 & 0.925 & --- \\
\addlinespace
\multicolumn{7}{l}{\textit{DGP2: IIA, $K = 3$, exchangeability}} \\
& RMSE     & 0.207 & 0.053 & 0.043 & 0.093 & 0.058 \\
& $|$Bias$|$ & 0.201 & 0.008 & 0.002 & 0.016 & 0.010 \\
& Coverage & 0.113 & 0.908 & 0.950 & 0.919 & 0.912 \\
\addlinespace
\multicolumn{7}{l}{\textit{DGP3: Non-IIA (MNP), $K = 3$, exchangeability}} \\
& RMSE     & 0.195 & 0.074 & 0.074 & 0.158 & 0.081 \\
& $|$Bias$|$ & 0.179 & 0.010 & 0.023 & 0.017 & 0.009 \\
& Coverage & 0.281 & 0.928 & 0.919 & 0.930 & 0.931 \\
\addlinespace
\multicolumn{7}{l}{\textit{DGP4: Non-exchangeable factor model, $K = 3$}} \\
& RMSE     & 0.762 & 0.221 & 0.299 & 0.392 & 0.233 \\
& $|$Bias$|$ & 0.733 & 0.048 & 0.056 & 0.065 & 0.039 \\
& Coverage & 0.132 & 0.908 & 0.940 & 0.920 & 0.914 \\
\bottomrule
\end{tabular}
\smallskip

\parbox{\textwidth}{\footnotesize\textit{Note:} OLS ignores selection. MLogit denotes the single-index control-function estimator under the own-shock restriction. Oracle is the infeasible benchmark. Sieve denotes the proposed estimator using the full $K$-vector control function. Exch-$L2$ denotes the exchangeability-based estimator using the first two elementary symmetric polynomials of choice probabilities.}
\end{table}

Table~\ref{tab: multinomial results} reports the RMSE, bias, and coverage averaged across all occupations and coefficients within the four DGPs. DGPs 1--2 show a consistent pattern. The OLS is severely biased, whereas the MLogit estimator (correctly specified) exhibits near-zero bias. The oracle provides a tight benchmark with the smallest RMSE that MLogit nearly matches. The Sieve estimator also removes most of the selection bias but has larger RMSE due to the higher dimensionality of its control function. The semiparametric estimators (Sieve, Exch-$L2$) all achieve coverage rates close to the nominal 95\% level. In DGP2 with three covariates, the Exch-$L2$ estimator, which uses a more flexible two-dimensional control function, is almost identical to MLogit. Sieve achieves comparable but slightly inflated RMSE.

DGP3 provides the most informative test of the exchangeability framework. Although MLogit's own-shock restriction is violated in DGP3, both MLogit and the correctly specified Exch-$L2$ correct most of the bias and outperform Oracle
 in terms of RMSE. Sieve performs similarly to Oracle with slightly inflated RMSE. With $K = 3$, the second stage for Sieve and Oracle requires $K + 1 = 4$ continuous covariates for identification. Since only three are available, the probability correction remains under-identified for Sieve and Oracle, confirming the value of the structural restrictions in reducing the effective dimensionality of the bias correction. Coverage is comparable across MLogit, Exch-$L2$ and Sieve.

In DGP4 with non-exchangeability, MLogit achieves the lowest RMSE, with Exch-$L2$ a close second. Both substantially outperform Sieve and Oracle, which remain underidentified yet correctly specified. In terms of bias, Exch-$L2$ leads over MLogit, while Sieve and Oracle exhibit larger bias. The performance is heterogeneous across occupations. For occupation 2 ($\lambda_2 = 0.8$, the most extreme loading), both Sieve and Oracle exhibit substantial bias on $\beta_{21}$ ($-0.231$ and $-0.183$, respectively) with RMSE substantially larger than that of MLogit and Exch-$L2$. The MLogit and Exch-$L2$ approaches deliver stable performance across all occupations because they require fewer identifying covariates (one and three, respectively).

To isolate structural biases from finite-sample variance, supplementary simulations at $n = 100{,}000$ are reported in Table~\ref{tab: dgp4 results large n} in Appendix~\ref{sec: appendix additional sims}. With variance largely eliminated, Exch-$L2$ overtakes MLogit (RMSE 0.066 vs.\ 0.072). Sieve and Oracle remain substantially worse (RMSE 0.115 and 0.108). These results demonstrate the sharp identification requirement on continuous covariates. Appendix~\ref{sec: appendix additional sims} confirms these findings under additional conditions. Sensitivity analysis at $n = 1{,}000$ and $n = 2{,}000$ confirms that bias reduction is already substantial at moderate sample sizes. A bootstrap validation exercise confirms that the analytical standard errors are a reasonable approximation to bootstrap standard errors. 

The Monte Carlo results demonstrate the near-oracle performance of the proposed sieve estimators whenever the identification conditions are met. When the own-shock restriction is violated but exchangeability holds, the Exch-$L2$ estimator, which is then correctly specified, performs almost identically to the Oracle, and MLogit remains nearly as accurate despite relying on the now-violated own-shock restriction. The structural restrictions are valuable. When IIA, exchangeability, and the own-shock restrictions do not hold, MLogit and Exch-$L2$ still perform reasonably well. Sieve remains stable in all of these settings, but it pays its own RMSE penalty when the identification conditions are not met.

\section{Application: Entry level gender wage gap in South Korea}\label{sec: application}

I apply the proposed framework to estimate the gender wage gap among recent college graduates in South Korea, using the Graduates Occupational Mobility Survey (GOMS). It is well documented that selection into employment \citep{mulligan2008selection, blau2017gender} and occupational sorting \citep{goldin2014grand} are quantitatively important for estimating gender wage gaps. I extend this line of work by explicitly modeling two layers of selection (participation and sorting). South Korea offers a particularly informative setting for multilayered selection. The economy features a pronounced dualism between large conglomerates and small and medium enterprises (SMEs), which motivates the ordered selection framework. The labor market also exhibits horizontal segmentation along occupation field (STEM vs. non-STEM) and sector (public vs. private). Female labor force participation among young graduates remains lower than male participation, and \citet{oecd2024korea} reports the largest gender wage gap among member countries. These institutional features can generate strong selection on both the extensive and intensive margins.

The GOMS is a nationally representative survey of college graduates. Each cohort is surveyed approximately 18 months after graduation. I pool the 2008--2019 waves and restrict the sample to graduates aged 35 or younger, yielding a full sample of 207,985. The outcome variable is the log hourly wage, constructed as the log of monthly gross earnings divided by total hours (regular plus overtime). The key covariate of interest is a female indicator. The pre-determined controls that enter \emph{both} the selection and the outcome equations are age, college GPA (on a 0--100 scale), parental income (the midpoint of the reported income bracket), a four-year university indicator (versus two-year colleges), major category (seven groups), university founding type (the institution's ownership category---national, public, private, and so on; six categories), school region (17 administrative units at the city/province level), and survey year fixed effects. The outcome equation additionally controls for job tenure in months and for the realized sorting categories of the other two architectures; these are determined after selection (and are defined only for the employed), so they do not enter the selection equation. No \emph{pre-determined} covariate is excluded from the outcome equation; identification rests entirely on the nonlinearity of the control function rather than on an exclusion restriction. The three continuous covariates (age, GPA, and parental income) enter the first stage through a flexible sieve specification (cubic penalized regression splines and pairwise tensor products) and enter the outcome equation linearly.

\begin{table}[p!]
\centering
\caption{Summary statistics: Sample Means}\label{tab:summary_stats}
\footnotesize
\setlength{\tabcolsep}{2.5pt}
\resizebox{\textwidth}{!}{%
\begin{tabular}{l*{10}{r}}
\toprule
& & & & \multicolumn{3}{c}{Ordered sel.} & \multicolumn{2}{c}{Occ.\ sel.} & \multicolumn{2}{c}{Sector sel.} \\
\cmidrule(lr){5-7} \cmidrule(lr){8-9} \cmidrule(lr){10-11}
& All & Male & Female & Non-part. & SME & Large & Non-STEM & STEM & Public & Private \\
\midrule
$N$ & 207,985 & 111,428 & 96,557 & 76,802 & 79,405 & 51,754 & 94,209 & 36,950 & 27,065 & 102,894 \\
\midrule
\multicolumn{11}{l}{\textit{Demographics}} \\
Female (\%) & 46.4 & 0.0 & 100.0 & 49.1 & 49.6 & 37.5 & 47.6 & 37.8 & 57.9 & 41.4 \\
Age & 26.1 & 27.1 & 24.9 & 26.0 & 26.0 & 26.6 & 26.2 & 26.4 & 26.0 & 26.3 \\
[3pt]
\multicolumn{11}{l}{\textit{Education}} \\
4-yr univ.\ (\%) & 74.6 & 75.5 & 73.5 & 75.7 & 69.8 & 80.2 & 74.1 & 73.3 & 84.8 & 70.9 \\
STEM major (\%) & 48.0 & 59.6 & 34.6 & 44.8 & 45.5 & 56.5 & 33.9 & 90.5 & 37.4 & 53.0 \\
GPA (0--100) & 81.6 & 80.3 & 83.0 & 81.3 & 81.4 & 82.2 & 81.8 & 81.5 & 83.0 & 81.4 \\
Semesters & 7.3 & 7.3 & 7.2 & 7.2 & 7.1 & 7.5 & 7.2 & 7.5 & 7.6 & 7.2 \\
[3pt]
\multicolumn{11}{l}{\textit{Family background}} \\
Parents' inc.\ (10K KRW) & 435 & 428 & 442 & 434 & 420 & 459 & 437 & 431 & 440 & 434 \\
[3pt]
\multicolumn{11}{l}{\textit{Employment}} \\
Employed (\%) & 69.9 & 71.5 & 68.0 & 18.4 & 100.0 & 100.0 & 100.0 & 100.0 & 100.0 & 100.0 \\
In wage sample (\%) & 63.1 & 64.9 & 60.9 & 0.0 & 100.0 & 100.0 & 100.0 & 100.0 & 100.0 & 100.0 \\
[3pt]
\multicolumn{11}{l}{\textit{Job characteristics (wage sample)}} \\
STEM job (\%) & 28.2 & 31.8 & 23.8 & --- & 26.9 & 30.1 & 0.0 & 100.0 & 21.6 & 29.7 \\
Large firm (\%) & 39.5 & 44.7 & 33.0 & --- & 0.0 & 100.0 & 38.4 & 42.2 & 40.5 & 39.1 \\
Public sector (\%) & 20.6 & 15.7 & 26.7 & --- & 20.3 & 21.2 & 22.5 & 15.8 & 100.0 & 0.0 \\
Tenure (mo.) & 12.7 & 13.4 & 11.7 & --- & 11.8 & 14.0 & 12.5 & 13.2 & 11.6 & 13.0 \\
Hours/week & 45.8 & 47.1 & 44.3 & --- & 45.4 & 46.5 & 45.4 & 46.9 & 43.3 & 46.5 \\
[3pt]
\multicolumn{11}{l}{\textit{Wages (wage sample)}} \\
Log hourly wage & 0.03 & 0.09 & $-$0.04 & --- & $-$0.04 & 0.14 & 0.01 & 0.08 & 0.01 & 0.04 \\
Log monthly wage & 5.30 & 5.38 & 5.19 & --- & 5.21 & 5.43 & 5.26 & 5.38 & 5.21 & 5.32 \\
\bottomrule
\end{tabular}}
\smallskip

\parbox{\textwidth}{\footnotesize\textit{Notes:} Data from GOMS 2008--2019. Non-participants include unemployed, part-time, self-employed, and those with missing wage data. SME = firms with ${<}300$ employees; Large = firms with ${\geq}300$ employees. STEM job defined by KECO2018 occupation codes 2 (research and engineering) and 4 (health and medical). Public sector = government agencies, government-invested organizations, educational institutions. Private = private companies, foreign companies, corporate bodies. Parents' income is the midpoint of the reported monthly-income bracket (in 10,000 KRW). Job characteristics and wages are conditional on being in the wage sample. Standard deviations omitted for readability.}
\end{table}

I implement the following architectures for $D_i \in \{0, 1, 2\}$: i) \textit{Ordered (firm size)} where  $D_i = 1$ and $2$ denote employment at a SME and at a large firm ($\geq 300$ employees), respectively; ii) \textit{Multinomial (field)} where $D_i = 1$ and $2$ denote a non-STEM job and a STEM job, respectively; and iii) \textit{Multinomial (sector)} where $D_i = 1$ and $2$ denote public-sector and private-sector employment, respectively. $D_i=0$ indicates non-participation. The first architecture reflects the hierarchy in the Korean labor market. In the ordered first stage, an ordered probit with a linear index (including cubic terms of continuous covariates and interactions) is estimated for the parametric control function. For nonparametric first stage, I separately estimate $P(D_i \geq 1 \mid X_i)$ and $P(D_i = 2 \mid X_i)$ which serve as the control function arguments. Unlike firm size, there is no natural ordering in multinomial architectures. In both multinomial architectures, the first stage estimates a sieve multinomial logit as in \eqref{eq: sieve MNL}, from which the inclusive values and choice probabilities are constructed. 

Table~\ref{tab:summary_stats} reports summary statistics by gender and by the three selection classifications. Several patterns are noteworthy. Women constitute 46.4\% of the sample but are underrepresented at large firms and in STEM jobs, and overrepresented in the public sector. Women have modestly higher GPAs but are less likely to hold STEM majors. The raw gender wage gap is substantial: the mean difference in log hourly wages is 12 log points. Across selection categories, large-firm workers earn more than SME workers, STEM workers earn more than non-STEM workers, and private-sector workers earn more than public-sector workers. Non-participants have similar age and parental income to the employed sample, but lower GPAs and fewer four-year university graduates.

For each architecture, I compare several estimators: i)~OLS on the selected subsample; ii)~a parametric correction: the ordered probit generalized inverse Mills ratio (ordered) or the inclusive value (multinomial) as a single-index control function; and iii)~the proposed Sieve estimator, implemented as OLS on a tensor-product cubic B-spline basis in the estimated choice probabilities. In the ordered model, the wage equation for SME workers includes both threshold-probability indices $(\hat{p}_1, \hat{p}_2)$, while the large-firm equation includes only the upper threshold probability $\hat{p}_2$, reflecting the boundary structure of the top category. In the multinomial models, the wage regression uses the inclusive value $\hat{\nu}_k$ as a single-index control in the MLogit estimator, while the Sieve estimator uses the estimated choice probabilities $(\hat{p}_1, \hat{p}_2)$ as controls. Additionally, an Exch-$L1$ estimator uses the first elementary symmetric polynomial ($1 - \hat{p}_k$) as a control assuming exchangeability. 

\subsection{Results}

Table~\ref{tab: main results} reports the estimated female coefficients across all specifications for both log hourly and log monthly wages. For hourly wage, the results reveal different selection patterns across the three architectures. For firm-size sorting, the OLS gender gap is $-$0.047 in SMEs and $-$0.038 in large firms. The Sieve estimator leaves the SME gap little changed ($-0.055$) but turns the large-firm gap positive ($+0.027$), reversing its sign. The large-firm result is the notable one: a substantial \emph{upward} correction that implies negative selection bias, namely that conditional on working at a large firm, women have lower mean unobserved wage components than men. This pattern can be explained with amenity-based sorting. Large Korean firms offer substantial non-wage benefits (parental leave, regular working hours, and job security) that are particularly valued by women. If women sort into large firms partly for these amenities and are willing to accept employment even with relatively low wage draws, while men in large firms are selected primarily on the wage dimension, then the female workforce at large firms will have systematically lower unobserved productivity than the male workforce. For SMEs the selection correction is small and its sign varies across estimators: the parametric control function attenuates the gap to $-0.028$, while the Sieve estimator slightly widens it to $-0.055$. This is consistent with a weaker amenity bundle at smaller firms that provides less scope for amenity-wage tradeoffs, leaving the SME gap robustly negative at around $-0.05$.

In multinomial field sorting, the OLS female penalty in non-STEM occupations ($-0.061$) is modestly reduced by the MLogit correction ($-0.055$) but left essentially unchanged by the Sieve estimator ($-0.063$). In STEM, the OLS gap of $-0.014$ shifts to $-0.021$ under Sieve and to $-0.026$ under MLogit CF. The selection corrections on this margin are small, and the STEM gap in particular remains small in absolute terms across all estimators. Both men and women in STEM have passed through similar meritocratic screening (technical credentials, quantitative aptitude, degree requirements) that operates comparably regardless of gender, leaving less scope for gender-differential compositional effects. The single-index Exch-$L1$ estimator falls close to MLogit CF, plausibly because the single-index correction does not fully account for the selection structure. 

\begin{table}[t!]
\centering
\caption{Estimated female coefficients with robust standard errors in parentheses}\label{tab: main results}
\setlength{\tabcolsep}{4pt}
\begin{tabular}{ll cccc}
\toprule
& & \multicolumn{2}{c}{Log hourly wage} & \multicolumn{2}{c}{Log monthly wage} \\
\cmidrule(lr){3-4} \cmidrule(lr){5-6}
Model & Estimator & Cat.\ 1 & Cat.\ 2 & Cat.\ 1 & Cat.\ 2 \\
\midrule
Ordered selection & OLS & $-$0.047 & $-$0.038 & $-$0.082 & $-$0.074 \\
(categories: 1. SME / 2. Large) &  & (0.004) & (0.005) & (0.003) & (0.005) \\
& Parametric CF & $-$0.028 & $-$0.008 & $-$0.063 & $-$0.049 \\
&  & (0.006) & (0.010) & (0.005) & (0.009) \\
& Sieve & $-$0.055 & $\phantom{-}$0.027 & $-$0.083 & $-$0.007 \\
&  & (0.006) & (0.007) & (0.005) & (0.006) \\[4pt]
Multinomial selection & OLS & $-$0.061 & $-$0.014 & $-$0.099 & $-$0.037 \\
(categories: 1. non-STEM / 2. STEM) &  & (0.004) & (0.006) & (0.003) & (0.005) \\
& MLogit CF & $-$0.055 & $-$0.026 & $-$0.092 & $-$0.048 \\
&  & (0.004) & (0.006) & (0.003) & (0.005) \\
& Sieve & $-$0.063 & $-$0.021 & $-$0.100 & $-$0.045 \\
&  & (0.004) & (0.006) & (0.004) & (0.005) \\
& Exch-$L1$ & $-$0.055 & $-$0.027 & $-$0.092 & $-$0.048 \\
&  & (0.004) & (0.006) & (0.003) & (0.005) \\[4pt]
Multinomial selection & OLS & $-$0.010 & $-$0.058 & $-$0.033 & $-$0.098 \\
(categories: 1. Public / 2. Private) &  & (0.007) & (0.003) & (0.007) & (0.003) \\
& MLogit CF & $-$0.007 & $-$0.047 & $-$0.027 & $-$0.087 \\
&  & (0.010) & (0.003) & (0.009) & (0.003) \\
& Sieve & $-$0.012 & $-$0.057 & $-$0.031 & $-$0.099 \\
&  & (0.010) & (0.004) & (0.009) & (0.004) \\
& Exch-$L1$ & $-$0.006 & $-$0.047 & $-$0.026 & $-$0.087 \\
&  & (0.010) & (0.003) & (0.009) & (0.003) \\
\bottomrule
\end{tabular}
\end{table}

The sectoral sorting results reflect the wage structure of the Korean public sector. The public-sector OLS female coefficient is close to $0$, consistent with the seniority-based salary schedules that leave little scope for gender-differential pay, and the Sieve estimator barely moves it (to $-0.012$). Women make up 57.9\% of public-sector workers but only 41.4\% of private-sector workers, confirming strong gender-differential sorting on this margin even though it translates into little within-sector wage gap. In the private sector, the OLS estimate of $-0.058$ is essentially unchanged under Sieve ($-0.057$), whereas the MLogit and Exch-$L1$ single-index corrections move it upward to $-0.047$. The remaining gender gap is the largest across all three architectures, consistent with the greater scope for discretionary wage-setting in the private sector. The divergence between the unrestricted Sieve estimate and the single-index MLogit and Exch-$L1$ estimates indicates that the single-index restriction those estimators impose is violated here.\footnote{Table~\ref{tab: cf joint significance} in Appendix~\ref{sec: appendix cf tests} reports joint Wald tests on the second-stage control-function sieve basis terms, confirming that the single-index restrictions imposed by MLogit and Exch-$L1$ are empirically binding.}

For log monthly wage regression, the estimated gender gaps are systematically 3--4 log points larger than the hourly wage gaps. The two are linked by the log identity $\log w^{\text{hourly}} = \log w^{\text{monthly}} - \log h^{\text{total}}$, which implies that the female coefficient in the hourly wage specification differs from that in the monthly wage specification by approximately the gender gap in log total hours. Women in the wage sample work 44.3 hours per week vs.\ 47.1 for men ($\log$-ratio $\approx -0.06$), and the bulk of this gap comes from overtime: men report 5.0 overtime hours per week against women's 3.3, while regular hours are similar (42.1 vs.\ 41.0). Dividing monthly pay by total hours mechanically attributes the hours difference to women's hourly rate and shrinks the gap by roughly the log-hours ratio. Despite the level difference between the two wage measures, the selection-correction patterns are qualitatively the same: the large-firm Sieve correction is strongly upward in both specifications, turning the hourly gap positive ($-0.038 \to +0.027$) and nearly eliminating the monthly gap ($-0.074 \to -0.007$).

The overlap condition (Assumption~\ref{ass: sampling}) is verified in Table~\ref{tab: overlap} in Appendix~\ref{sec: appendix overlap}. Adequate overlap is confirmed and no trimming is required. The first-stage estimates exhibit strong nonlinearity in the continuous covariates. Table~\ref{tab: first stage} in Appendix~\ref{sec: appendix first stage} reports that the marginal sieve terms for age and GPA are strongly nonlinear and the parental-income term is also nonlinear (effective degrees of freedom above one), with several pairwise tensor interactions entering nonlinearly as well, providing sufficient identifying variation in the absence of exclusion restrictions. Two robustness checks, reported in Appendix~\ref{sec: appendix robustness}, confirm the main findings. First, augmenting the first-stage with an additional quasi-continuous covariate (semesters completed) yields similar results (Table~\ref{tab: v2 results}). Second, estimating all three architectures separately for each year produces estimates that are qualitatively consistent with the pooled results (Table~\ref{tab: yearly results}).

\subsection{Decomposition analysis}

The proposed framework permits a decomposition of the raw gender gap into a structural within-category gap, a within-category covariate-composition term, and a between-category sorting term. For gender $g$ in category $k$, let $\bar w_k^g$ denote the mean log wage, $s_k^g$ the share, and $\hat\beta_k$ the estimated female coefficient in the category-$k$ wage regression (so $-\hat\beta_k$ is the regression-adjusted within-category gap). The male-female mean log-wage difference can be written as
$$\bar w^M - \bar w^F = \underbrace{\sum_k s_k^M (-\hat\beta_k)}_{\text{structural within}} + \underbrace{\sum_k s_k^M\bigl[(\bar w_k^M - \bar w_k^F) + \hat\beta_k\bigr]}_{\text{covariate composition}} + \underbrace{\sum_k \bar w_k^F (s_k^M - s_k^F)}_{\text{between-category sorting}}.$$
The structural within term is the regression-adjusted gender differential, weighted by male sorting shares; the covariate-composition term collects the part of the raw within-category gap explained by gender differences in covariates (tenure, major, and the like); and the sorting term is the part of the gap generated by women being concentrated in lower-paying categories, valued at female within-category mean wages. Table~\ref{tab: decomposition} reports the decomposition for each architecture. The raw gender gap is approximately 12 log points. The structural within-category gap accounts for 36--41\% of it, covariate composition for 52--63\%, and \emph{pure} between-category sorting for only 13\% (firm size), 5\% (occupation field), and $-4\%$ (sector). Women do sort disproportionately into lower-paying categories (67\% at SMEs versus 55\% of men, 76\% in non-STEM occupations versus 68\%, and 27\% in the public sector versus 16\%), but because the unconditional wage premia across these categories are modest, the mechanical contribution of that sorting to the aggregate gap is small. Most of the raw gap is a within-category phenomenon.

The bottom panel of Table~\ref{tab: decomposition} isolates the effect of the selection correction on the structural within-category gap. Replacing the OLS coefficient with the selection-corrected Sieve coefficient changes this component appreciably only in the ordered (firm-size) architecture, where it falls from $0.043$ to $0.018$ (a selection component of $0.025$). This reflects the offsetting category-level corrections documented above: the corrected female coefficient is negative at SMEs ($-0.055$) but positive at large firms ($+0.027$), and the two nearly cancel in the male-share-weighted average. In the occupation and sector architectures the selection correction barely moves the structural within-category gap (selection components of $-0.003$ and $0.000$), consistent with the small corrections reported in Table~\ref{tab: main results}.

\begin{table}[t!]
\centering
\caption{Decomposition of the gender wage gap in log hourly wage}\label{tab: decomposition}
\begin{tabular}{l ccc}
\toprule
& Ordered & Field & Sector \\
& (firm size) & (STEM) & (public/private) \\
\midrule
Raw gap (male $-$ female) & 0.121 & 0.121 & 0.122 \\[4pt]
\multicolumn{4}{l}{\textit{Three-way decomposition of the raw gap}} \\
\quad Structural within-category gap & 0.043 (36\%) & 0.046 (38\%) & 0.050 (41\%) \\
\quad Covariate composition & 0.062 (52\%) & 0.068 (56\%) & 0.076 (63\%) \\
\quad Between-category sorting & 0.015 (13\%) & 0.007 (5\%) & $-$0.005 ($-$4\%) \\[4pt]
\multicolumn{4}{l}{\textit{Effect of the selection correction}} \\
\quad Structural within-category gap (Sieve) & 0.018 & 0.049 & 0.050 \\
\quad Selection component (OLS $-$ Sieve) & 0.025 & $-$0.003 & $\phantom{-}$0.000 \\
\bottomrule
\end{tabular}
\end{table}

The between-category sorting term has a direct counterfactual reading: it is how much the gap would change if women sorted like men, holding female within-category mean wages fixed. Equalizing firm-size sorting would reduce the gap by 1.5 log points (13\% of the raw gap), and equalizing occupation-field sorting by only 0.7 log points (5\%). Equalizing sector sorting would slightly \emph{increase} the gap (by 0.5 log points), because the public sector pays less than the private sector on average but exhibits a smaller within-sector gender differential, so reducing women's overrepresentation in public pushes them into a sector with a larger penalty. The modest size of these counterfactuals underscores that cross-category sorting, while real, accounts for a small share of the entry-level gap; the larger pieces are the structural within-category penalty and gender differences in covariates within categories.

\begin{table}[b!]
\centering
\caption{Dynamics of the female coefficient by subperiod}\label{tab: dynamics results}
\setlength{\tabcolsep}{3.5pt}
\begin{tabular}{ll cccccc}
\toprule
& & \multicolumn{3}{c}{Log hourly wage} & \multicolumn{3}{c}{Log monthly wage} \\
\cmidrule(lr){3-5} \cmidrule(lr){6-8}
Model / Category & Estimator & 08--11 & 12--15 & 16--19 & 08--11 & 12--15 & 16--19 \\
\midrule
\multicolumn{8}{l}{\textit{A. Ordered selection (firm size)}} \\[3pt]
SME & OLS & $-$0.062 & $-$0.051 & $-$0.028 & $-$0.111 & $-$0.082 & $-$0.051 \\
& Sieve & $-$0.062 & $-$0.050 & $-$0.018 & $-$0.097 & $-$0.085 & $-$0.031 \\[3pt]
Large & OLS & $-$0.064 & $-$0.012 & $-$0.031 & $-$0.098 & $-$0.052 & $-$0.064 \\
& Sieve & $-$0.015 & $\phantom{-}$0.041 & $\phantom{-}$0.061 & $-$0.052 & $-$0.001 & $\phantom{-}$0.029 \\[6pt]
\multicolumn{8}{l}{\textit{B. Multinomial selection (field)}} \\[3pt]
Non-STEM & OLS & $-$0.074 & $-$0.052 & $-$0.052 & $-$0.123 & $-$0.089 & $-$0.077 \\
& Sieve & $-$0.066 & $-$0.063 & $-$0.053 & $-$0.114 & $-$0.101 & $-$0.076 \\[3pt]
STEM & OLS & $-$0.049 & $-$0.015 & $\phantom{-}$0.012 & $-$0.066 & $-$0.036 & $-$0.015 \\
& Sieve & $-$0.045 & $-$0.022 & $\phantom{-}$0.011 & $-$0.062 & $-$0.042 & $-$0.015 \\[6pt]
\multicolumn{8}{l}{\textit{C. Multinomial selection (sector)}} \\[3pt]
Public & OLS & $-$0.020 & $-$0.005 & $-$0.007 & $-$0.057 & $-$0.025 & $-$0.016 \\
& Sieve & $-$0.009 & $-$0.015 & $\phantom{-}$0.000 & $-$0.043 & $-$0.036 & $-$0.005 \\[3pt]
Private & OLS & $-$0.078 & $-$0.048 & $-$0.045 & $-$0.124 & $-$0.087 & $-$0.079 \\
& Sieve & $-$0.051 & $-$0.050 & $-$0.039 & $-$0.101 & $-$0.095 & $-$0.074 \\
\bottomrule
\end{tabular}
\end{table}

\subsection{Dynamics of the gender wage gap}

To examine whether the selection patterns are stable over time, I re-estimate all specifications on three subperiods: 2008--2011, 2012--2015, and 2016--2019. This approach allows both the structural wage parameters and the selection correction to vary freely across periods. The GOMS is a repeated cross-section, hence the `dynamics' documented here reflect cohort-level changes in the entry-level gender wage gap. Changes across subperiods may be driven by compositional shifts in the graduating population, evolving labor market institutions, or macroeconomic conditions affecting cohort-specific labor demand. These sources cannot be disentangled in the present data. Table~\ref{tab: dynamics results} reports the estimated female coefficient from the Sieve estimator alongside the OLS baseline for each subperiod.

Three patterns emerge from the hourly wage results. First, the OLS gap narrows steadily over time in most categories: the SME gap from $-0.062$ to $-0.028$, the large firm gap from $-0.064$ to $-0.031$, the STEM gap from $-0.049$ to $+0.012$, and the private sector gap from $-0.078$ to $-0.045$. In non-STEM jobs and the public sector, the decline is more modest. Second, the selection-corrected gaps confirm the main findings from Table~\ref{tab: main results}. In the ordered model, the large-firm Sieve estimates are far less negative than OLS, turning positive from 2012--2015 onward, while the SME Sieve estimates track OLS closely. In the multinomial models the Sieve corrections are more modest in every period. Third, the corrected gap narrows or is stable over time. In large firms and STEM jobs, the Sieve estimate becomes positive in recent years, while the SME gap narrows toward zero ($-0.018$ by 2016--2019). The corrected gap is near zero throughout in the public sector, whereas the gap persists in non-STEM jobs and the private sector. The monthly wage results are broadly consistent with hourly wage patterns.

\section{Conclusion}\label{sec: conclusion}

This paper establishes semiparametric identification of multilayered selection models without exclusion restrictions. The theoretical contribution provides a unified framework for correcting selection bias in settings where individuals first decide whether to participate and then sort into one of several categories, either vertically or horizontally. The key insight is that, when enough continuous covariates are available, nonlinearity in the selection structure generates sufficient variation to separate the structural outcome parameters from the selection bias. The empirical application shows that selection into firm size in particular materially reshapes the measured gap: the selection correction reverses the sign of the large-firm gender gap, while the corrections on the occupation and sector margins are more modest. A decomposition shows that most of the raw entry-level gap in Korea is a within-category phenomenon, a structural female penalty plus gender differences in covariates, with pure cross-category sorting contributing only modestly. In the most recent cohorts the corrected gap is closed or reversed in large firms and STEM jobs and near zero in the public sector, though it persists in non-STEM occupations and the private sector, suggesting that gender-equity policy should direct attention to the lifecycle dynamics that emerge after market entry.

Several extensions of the framework are worth pursuing. The current analysis treats the selection architecture as given; developing formal specification tests to discriminate between ordered and multinomial selection would strengthen empirical practice. The exchangeability assumption for multinomial selection, while considerably weaker than IIA, may still be restrictive in settings with strongly asymmetric substitution patterns; extending the framework to accommodate richer correlation structures while maintaining tractability is an open challenge. Finally, augmenting the bounds approach for multilayered selection (as in \citet{kroft2024lee}) with structural restrictions could provide informative bounds in settings where the point identification conditions are not satisfied.

\newpage
\appendix
\renewcommand{\thesection}{\Alph{section}}
\setcounter{section}{0}
\setcounter{page}{1}

\begin{center}
	{\Large\bf Online Supplementary Appendix}
	
	\vspace{0.8em}
	{\large for ``Identification and Estimation of Semiparametric Multilayered Sample Selection Models'' by Dongwoo Kim}
\end{center}

\section{Technical Proofs}\label{sec: appendix proofs}

\begin{proof}[Proof of Proposition~\ref{prop: non-identification}]
Since $H_k$ is injective, it admits a measurable left inverse $H_k^{-1}$ on $H_k(\text{supp}(X_i | D_i = k))$. For any candidate $\beta$, define $\tilde{\lambda}(s) := m_k(H_k^{-1}(s)) - H_k^{-1}(s) \beta$. Then for any $x$ in the support, $x\beta + \tilde{\lambda}(H_k(x)) = x\beta + m_k(x) - x\beta = m_k(x).$
\end{proof}

\begin{proof}[Proof of Proposition~\ref{prop: identification}]
Fix $k$ and suppress the subscript $k$ where no confusion arises. Suppose there exists an observationally equivalent pair $(\beta, \lambda)$ such that $m_k(x) = x\beta + \lambda(H_k(x))$ almost surely on $\text{supp}(X | D = k)$. Define $l(x) := x(\beta_k - \beta)$ and $b(s) := \lambda_k(s) - \lambda(s)$ for $s \in [0,1]^2$. Observational equivalence requires
\begin{equation}\label{eq: obs equiv}
l(x) + b(H_k(x)) = 0 \quad \text{a.s.\ on } \text{supp}(X | D = k).
\end{equation}

\textbf{Step 1: Identification of coefficients on $(X_1, X_2, X_3)$.}
Let $\Delta_c := (\beta_{k1} - \beta_1, \beta_{k2} - \beta_2, \beta_{k3} - \beta_3)' \in \mathbb{R}^3$. Differentiating \eqref{eq: obs equiv} with respect to $x_c = (x_1, x_2, x_3)$ yields
\begin{equation}\label{eq: diff system}
\Delta_c + J_k(x) \nabla b(H_k(x)) = 0,
\end{equation}
where $\nabla b(s) = (\partial b / \partial s_1, \partial b / \partial s_2)' \in \mathbb{R}^2$ and $J_k(x) \in \mathbb{R}^{3 \times 2}$. By Assumption~\ref{ass: identification}(iv), $J_k(x)$ has rank 2, so its left null space is one-dimensional. Define the left-null vector
$$a(x) := \nabla_{x_c} h_k(x) \times \nabla_{x_c} h_{k+1}(x) \in \mathbb{R}^3,$$
where $\times$ denotes the cross product. By construction, $a(x)' J_k(x) = 0$. Pre-multiplying \eqref{eq: diff system} by $a(x)'$ yields
\begin{equation}\label{eq: orthogonality}
a(x)' \Delta_c = 0 \quad \text{a.e.\ on } \text{supp}(X | D = k).
\end{equation}
By Assumption~\ref{ass: identification}(v), there exist $x^{(1)}, x^{(2)}, x^{(3)} \in \operatorname{supp}(X_i \mid D_i = k)$ such that the vectors $a(x^{(1)}), a(x^{(2)}), a(x^{(3)})$ are linearly independent (since the intersection of the column spaces is $\{0\}$, the null vectors span $\mathbb{R}^3$). Evaluating \eqref{eq: orthogonality} at these three points yields three linearly independent equations $a(x^{(t)})' \Delta_c = 0$ for $t = 1, 2, 3$, which implies $\Delta_c = 0$.

\textbf{Step 2: $b(\cdot)$ is constant on the relevant support.}
With $\Delta_c = 0$, equation \eqref{eq: diff system} becomes $J_k(x) \nabla b(H_k(x)) = 0$. Since $J_k(x)$ has rank 2, this implies $\nabla b(H_k(x)) = 0$ almost everywhere, so $b$ is locally constant on $H_k(\operatorname{supp}(X_i \mid D_i = k))$. By Assumption~\ref{ass: identification}(vii), this image is connected, so $b$ is globally constant on it; denote this constant by $C$.

\textbf{Step 3: Identification of $\beta_k$ and $\lambda_k$.}
Substituting $\Delta_c = 0$ and $b \equiv C$ back into \eqref{eq: obs equiv} gives $x(\beta_k - \beta) + C = 0$ for all $x$ in the support of $X_i | D_i = k$. If $\beta_k \neq \beta$, then $x' (\beta_k - \beta) = -C$ for all $x$ in the support, which would confine the support of $X_i$ to a hyperplane orthogonal to $\beta_k - \beta$, contradicting Assumption~\ref{ass: identification}(vi). Hence $\beta = \beta_k$ and $C = 0$. Then $\lambda_k(s)$ is identified on $H_k(\text{supp}(X | D = k))$ by $\lambda_k(s) = E[m_k(X) - X\beta_k | H_k(X) = s, D = k]$.
\end{proof}

\begin{proof}[Proof of Proposition~\ref{prop: CCP inversion}]
Under the additive random utility model $U_{ij} = u_j + \varepsilon_{ij}$, normalizing $u_0 = 0$, the choice probabilities $p_j(u) = P[\varepsilon_j + u_j \geq \varepsilon_\ell + u_\ell, \, \forall \ell \neq j]$ depend on $x$ only through $u = (u_1, \ldots, u_K) \in \mathbb{R}^K$. Define $\Psi: \mathbb{R}^K \to \text{int}(\Delta^K)$ by $\Psi(u) = (p_0(u), \ldots, p_K(u))$, and let $\bar{\Psi}(u) = (p_1(u), \ldots, p_K(u))$ denote the $K$-dimensional restriction (since $p_0 = 1 - \sum_{j \geq 1} p_j$). We show $\bar{\Psi}$ is a diffeomorphism from $\mathbb{R}^K$ onto the set $\{p \in \mathbb{R}^K : p_j>0,\ \sum_{j=1}^{K} p_j < 1\}$.

\emph{Step 1: Local invertibility via the M-matrix structure.}
Consider the Jacobian $J := \partial \bar{\Psi} / \partial u \in \mathbb{R}^{K \times K}$, with entries $J_{jl} = \partial p_j / \partial u_l$ for $j, l = 1, \ldots, K$. Let $f$ denote the joint density of $(\varepsilon_0,\ldots,\varepsilon_K)$ on $\mathbb{R}^{K+1}$, assumed continuous and strictly positive. Conditioning on $\varepsilon_j$ and letting $G_j(\tau_{-j}\mid\varepsilon_j):=P\bigl(\varepsilon_m\le \tau_m,\ \forall m\neq j\mid \varepsilon_j\bigr)$ denote the joint conditional CDF of the non-$j$ shocks, we have
$
p_j(u)=\int f_j(\varepsilon_j)\,G_j\bigl((\varepsilon_j+u_j-u_m)_{m\neq j}\mid\varepsilon_j\bigr)\,d\varepsilon_j,
$
where $f_j$ is the marginal density of $\varepsilon_j$. This representation does \emph{not} rely on conditional independence of the non-$j$ shocks: $G_j$ is a genuine joint conditional CDF, not a product of marginals. Because $f$ is strictly positive and continuous on $\mathbb{R}^{K+1}$, $G_j$ is continuously differentiable in each $\tau_m$ with strictly positive partial derivatives on all of $\mathbb{R}^K$ (the conditional marginal densities $\partial_m G_j$ inherit strict positivity from $f$).

\emph{Diagonal entries are strictly positive.} Differentiating under the integral sign in $u_j$, each of the $K$ arguments $\tau_m=\varepsilon_j+u_j-u_m$ increases by one, so

$$
\frac{\partial p_j}{\partial u_j}=\int f_j(\varepsilon_j)\sum_{m\neq j}\partial_m G_j\bigl((\varepsilon_j+u_j-u_m)_{m\neq j}\mid\varepsilon_j\bigr)\,d\varepsilon_j>0,
$$
since every summand is non-negative and strictly positive on a set of positive $f_j$-measure.

\emph{Off-diagonal entries are strictly negative.} For $l\neq j$, only $\tau_l$ depends on $u_l$, and it decreases by one when $u_l$ increases, so
$$
\frac{\partial p_j}{\partial u_l}=-\int f_j(\varepsilon_j)\,\partial_l G_j\bigl((\varepsilon_j+u_j-u_m)_{m\neq j}\mid\varepsilon_j\bigr)\,d\varepsilon_j<0.
$$

\emph{Column sums are strictly positive.} From $\sum_{j=0}^K p_j(u)\equiv 1$,
$$
\sum_{j=1}^K \frac{\partial p_j}{\partial u_l}=-\frac{\partial p_0}{\partial u_l}>0\qquad\text{for each }l=1,\ldots,K,
$$
where the last inequality follows from the off-diagonal calculation applied with $j=0$.

The Jacobian $J$ is therefore a Z-matrix (non-positive off-diagonals) with strictly positive column sums. By the characterization of non-singular M-matrices,\footnote{A Z-matrix $A$ is a non-singular M-matrix if and only if there exists $v > 0$ with $A'v > 0$. Setting $v = \mathbf{1}$ gives $A'\mathbf{1} = $ column sums of $A > 0$. See \citet{berman1994nonnegative}.} $J$ is a non-singular M-matrix, and in particular is non-singular at every $u \in \mathbb{R}^K$.

\emph{Step 2: Properness and global invertibility via Hadamard.}
Let $u^{(n)}\in\mathbb{R}^K$ be any sequence with $\|u^{(n)}\|\to\infty$. Write $u^{(n)}=t_n v^{(n)}$ with $t_n=\|u^{(n)}\|\to\infty$ and $\|v^{(n)}\|=1$; by compactness, along a subsequence $v^{(n)}\to v$ with $\|v\|=1$. Set $v_0:=0$ and $j^{\star}:=\arg\max_{0\le j\le K}v_j$ (ties allowed). For any $j$ with $v_j<\max_\ell v_\ell$, we have $u_j^{(n)}-u_{j^{\star}}^{(n)}\to-\infty$ along the subsequence, so $p_j(u^{(n)})\to 0$. In particular, at least one component of $\bar{\Psi}(u^{(n)})$ approaches either $0$ or the simplex boundary, so $\bar{\Psi}(u^{(n)})$ leaves every compact subset of the open simplex $\{p>0,\ \sum p_j<1\}$. Hence $\bar{\Psi}$ is a proper map. The target set is a nonempty connected, simply connected open subset of $\mathbb{R}^K$ (a convex open simplex interior). By the Hadamard global inverse function theorem, a $C^1$ proper map between connected smooth manifolds whose target is simply connected and whose Jacobian is everywhere non-singular is a diffeomorphism \citep[e.g.,][Ch.~6]{krantz2013implicit}. This establishes the claim for $\bar{\Psi}$, and hence for $\Psi$ via the simplex constraint $p_0=1-\sum_{j\ge 1}p_j$. Composing $\lambda_k$ with $L_k \circ \Psi^{-1}$, where $L_k: u \mapsto (u_k - u_j)_{j \neq k}$ converts normalized utilities into the pairwise-difference vector on which $\lambda_k$ was originally defined, yields the representation \eqref{eq: bias as CCP} on the image of $\Psi$; by properness this image equals the full open simplex.
\end{proof}

\begin{proof}[Proof of Proposition~\ref{prop: multinomial id general}]
Suppose $(\beta_k,\tilde{\lambda}_k)$ and $(\tilde{\beta},b)$ both rationalize $m_k$ on $\operatorname{supp}(X_i\mid D_i=k)$. Writing $l(x):=x(\beta_k-\tilde{\beta})$ and $B(s):=\tilde{\lambda}_k(s)-b(s)$ on the image of $P_K$,
\begin{equation}\label{eq: obs equiv multinomial}
l(x)+B(P_K(x))=0\qquad\text{a.s.\ on }\operatorname{supp}(X_i\mid D_i=k).
\end{equation}

\emph{Step 1: Identification of the coefficients on $x_c$.}
Differentiating \eqref{eq: obs equiv multinomial} with respect to $x_c$ and writing $\Delta_c:=((\beta_{k,j}-\tilde{\beta}_j))_{j=1}^{K+1}\in\mathbb{R}^{K+1}$,
\begin{equation}\label{eq: diff system multinomial}
\Delta_c+J_P(x)\nabla B(P_K(x))=0.
\end{equation}
By Assumption~\ref{ass: multinomial identification}(iii), the left null space of $J_P(x)$ is one-dimensional; denote a left null vector by $a(x)\in\mathbb{R}^{K+1}$. Pre-multiplying \eqref{eq: diff system multinomial} by $a(x)^\top$ gives $a(x)^\top\Delta_c=0$ almost everywhere. Evaluated at the $K+1$ support points of Assumption~\ref{ass: multinomial identification}(iv), this yields $K+1$ independent linear equations and hence $\Delta_c=0$.

\emph{Step 2: $B$ is constant on the image of $P_K$.}
With $\Delta_c=0$, \eqref{eq: diff system multinomial} reduces to $J_P(x)\nabla B(P_K(x))=0$. Since $J_P(x)$ has column rank $K$, $\nabla B(P_K(x))=0$ a.e., so $B$ is locally constant on $P_K(\operatorname{supp}(X_i\mid D_i=k))$. By Assumption~\ref{ass: multinomial identification}(vi), this image is connected, so $B$ is globally constant on it; call this constant $C$.

\emph{Step 3: Identification of the remaining components and $\tilde{\lambda}_k$.}
Substituting back, \eqref{eq: obs equiv multinomial} becomes $x(\beta_k-\tilde{\beta})+C=0$ on $\operatorname{supp}(X_i\mid D_i=k)$. If $\beta_k\neq\tilde{\beta}$, the support would be contained in the affine hyperplane $\{x:x(\beta_k-\tilde{\beta})=-C\}$, contradicting Assumption~\ref{ass: multinomial identification}(v). Hence $\tilde{\beta}=\beta_k$ and $C=0$, so $B\equiv 0$ on the image of $P_K$. Identification of $\tilde{\lambda}_k$ on $P_K(\operatorname{supp}(X_i\mid D_i=k))$ then follows from $\tilde{\lambda}_k(s)=E[m_k(X_i)-X_i\beta_k\mid P_K(X_i)=s,\,D_i=k]$.
\end{proof}

\subsection{Identification under multinomial logit selection}\label{sec: logit nonlinearity proof}

The conditional mean \eqref{eq: partial linear multinomial logit} has the partial linear structure $E[Y | X = x, D = k] = x\beta_k + \lambda_k(\nu_k(x))$, which is the framework of \citet{kim2025point}. Under standard regularity conditions (continuous variation, smoothness, no multicollinearity), it suffices to verify that the inclusive value $\nu_k(x)$ is nonlinear; identification of $\beta_k$ and $\lambda_k$ then follows from Propositions 1--3 of \citet{kim2025point}. The following lemma establishes this nonlinearity, which holds automatically whenever at least two of the utility coefficients $\gamma_j$ differ.

\begin{lemma}\label{lemma: logit nonlinearity}
Suppose $K \geq 2$ and there exist $j_1, j_2 \in \{0, 1, \ldots, K\} \setminus \{k\}$ such that $\gamma_{j_1} \neq \gamma_{j_2}$ (where $\gamma_0 = 0$). Then:
\begin{enumerate}
    \item[(a)] $\nu_k(x)$ is not an affine function of $x$.
    \item[(b)] For any two continuous covariates $X_1, X_2$ along which the alternative-specific coefficient differences $\{\gamma_j - \gamma_k\}_j$, projected onto the $(x_1, x_2)$ plane, are not all collinear, the ratio $(\partial \nu_k / \partial x_1) / (\partial \nu_k / \partial x_2)$ is nonconstant in $x$.
\end{enumerate}
\end{lemma}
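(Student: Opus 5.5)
The plan is to rewrite $\nu_k$ as a log-sum-exp of affine functions and read both claims off its gradient and Hessian. Set $d_j := \gamma_j - \gamma_k$ for $j \neq k$, with $\gamma_0 = 0$. Then
\[
\nu_k(x) = \ln\Bigl(1 + \sum_{j \neq k} e^{x d_j}\Bigr),
\]
since $\ln A(x) - x\gamma_k = \ln\sum_{j=0}^K e^{x(\gamma_j-\gamma_k)}$ and the $j=k$ term equals $1$. Define the weights $w_j(x) := e^{x d_j}/\bigl(1+\sum_{l\neq k} e^{x d_l}\bigr)$ for $j\neq k$, and $w_k(x) := 1/\bigl(1+\sum_{l\neq k} e^{x d_l}\bigr)$ with $d_k := 0$. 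These weights are exactly the logit choice probabilities $p_j(x)$, so they are strictly positive and sum to one.

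\textbf{Gradient and Hessian.} A direct computation gives
\[
\nabla \nu_k(x) = \sum_{j} w_j(x)\, d_j' ,
\]
a convex combination of the vectors $d_j$. The Hessian is the covariance matrix of a random vector taking value $d_j$ with probability $w_j(x)$:
\[
\nabla^2 \nu_k(x) = \sum_j w_j d_j' d_j - \Bigl(\sum_j w_j d_j\Bigr)'\Bigl(\sum_j w_j d_j\Bigr).
\]

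\textbf{Part (a).} An affine function has zero Hessian everywhere, so it suffices to show the Hessian is nonzero at some $x$; in fact it is nonzero everywhere. The covariance matrix above vanishes only if all $d_j$ with positive weight coincide. All weights are strictly positive, and $d_{j_1} - d_{j_2} = \gamma_{j_1} - \gamma_{j_2} \neq 0$. So for the direction $v := (d_{j_1}-d_{j_2})'$ we get $v'\nabla^2\nu_k(x) v > 0$, and $\nu_k$ is strictly convex along that direction, hence not affine. The one point to check is that $j_1, j_2 \neq k$ is what guarantees two distinct $d$'s. This holds as long as at least two of the vectors in $\{d_j\}_{j\neq k} \cup \{0\}$ differ, which the hypothesis delivers.

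\textbf{Part (b).} Let $a_j := (d_{j,1}, d_{j,2})\in\mathbb{R}^2$ be the projections, including $a_k = 0$. The two-dimensional partial gradient is $g(x) := \sum_j w_j(x) a_j$, and the ratio $\partial_1\nu_k/\partial_2\nu_k$ is constant exactly when $g(x)$ stays on a fixed line through the origin. The hypothesis that the projections are not all collinear gives two indices $j, l$ with $a_j, a_l$ linearly independent; adding $a_k = 0$ as an admissible point, this means the $a$'s are not on a common line through $0$.

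Now move $x$ along $x_1$ or $x_2$ (or a combination) so that the weight of a chosen alternative tends to one. Taking $x = t\,e$ for a direction $e$ in the $(x_1,x_2)$ plane with $t\to\infty$, the weights concentrate on the maximizers of $a_j\cdot e$. Hence $g(te)$ converges to $a_{j^\star(e)}$, or to a convex combination over ties. Choosing $e$ so that $a_j$ is the unique maximizer, and then a second $e$ for $a_l$, the limiting directions of $g$ are $a_j$ and $a_l$, which are not collinear. So the ratio cannot be constant, at least where it is defined.

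\textbf{Main obstacle.} The hard step is that uniqueness of the maximizer requires $a_j$ and $a_l$ to be vertices of the convex hull of $\{a_m\}$. I would handle this by replacing $j, l$ with two exposed vertices of the hull that are not collinear with $0$. Such vertices exist because the hull is not contained in a line through $0$: if every vertex were on one line through $0$, the whole hull would be too. The limit argument also presumes the support of $X_c$ is unbounded. To avoid that, I would fall back on a local argument: constancy of the ratio on an open set forces $\det[g, \partial_e g] = 0$. Computing $\partial_e g$ as the covariance of $a$ with $a\cdot e$ under the weights $w$, and using real-analyticity of $g$, the determinant identity would have to hold on all of $\mathbb{R}^2$. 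That contradicts the asymptotic directions just found, so the local and global arguments combine to give (b).
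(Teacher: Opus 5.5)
Your part (a) is the paper's argument: $\nu_k$ is a log-sum-exp minus an affine term, and its Hessian is the $w(x)$-weighted covariance of the coefficient vectors, which is nonzero once two of them differ.

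For part (b) you take a genuinely different and more complete route. The paper only asserts that, because the projected differences are not collinear, the weighted average $\sum_j w_j(x)d_j$ ``traces a nondegenerate curve'' and so leaves every fixed ray. It never explains why the particular logit weights $w(x)$ cannot keep that average on a line; these weights form only a $d_X$-parameter family inside the simplex. Your argument supplies exactly this missing step. Along rays $x^0+te$ in the $(x_1,x_2)$ plane, the weights concentrate on the maximizers of $a_j\cdot e$. The partial gradient therefore has limits at two linearly independent exposed vertices of $\operatorname{conv}\{a_j\}$, which contradicts confinement to a line. Such vertices exist because $a_k=0$ belongs to the set and the set is not contained in a line through $0$. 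Repeated projections $a_m=a_{m'}$ are harmless, since the limit is still the vertex. What the paper's version buys is brevity; yours buys an actual proof. Your analytic-continuation step also yields the stronger conclusion that the ratio is not constant on any open set, which is what matters when the covariate support is bounded and which the paper does not address.

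One repair is needed in that local step. Passing through $\det[g,\partial_e g]=0$ only makes the direction of $g$ locally constant on each component of $\{g\neq 0\}$. Concluding from there would require an extra connectedness argument, which is delicate when $0$ lies on the boundary of the hull. Instead, take the fixed direction $c$ from the open set and use the real-analytic function $c_2g_1(x)-c_1g_2(x)$. It vanishes on an open subset of the slice, hence on the whole slice, and that directly contradicts your two limiting directions.
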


\begin{proof}
(a) The function $g(x) := \ln(\sum_{j=0}^K e^{x\gamma_j})$ is the log-sum-exp of $K+1$ affine functions. Its Hessian is
$$\nabla^2 g(x) = \sum_{j=0}^K w_j(x) \gamma_j \gamma_j' - \left(\sum_{j=0}^K w_j(x) \gamma_j\right)\left(\sum_{j=0}^K w_j(x) \gamma_j\right)',$$
where $w_j(x) = e^{x\gamma_j} / \sum_{\ell} e^{x\gamma_\ell}$ are the choice probabilities. This is the covariance matrix of $\gamma$ under the probability weights $\{w_j\}$, which is positive semidefinite and equals zero only if all $\gamma_j$ are identical. Since $\gamma_{j_1} \neq \gamma_{j_2}$, the Hessian is nonzero, so $g$ is convex and not affine. Since $\nu_k(x) = g(x) - x\gamma_k$ and $x\gamma_k$ is affine, $\nu_k$ is not affine.

(b) We have $\partial \nu_k / \partial x_\ell = \sum_j w_j(x)(\gamma_{j\ell} - \gamma_{k\ell})$ for $\ell = 1, 2$. Note first that this partial is constant in $x$ whenever the $\ell$-th coefficient is common across alternatives ($\gamma_{j\ell} = \gamma_{k\ell}$ for all $j$), since the weights $w_j(x)$ sum to one. Writing $d_j := (\gamma_{j1} - \gamma_{k1},\, \gamma_{j2} - \gamma_{k2})$, the gradient $(\partial \nu_k/\partial x_1, \partial \nu_k/\partial x_2) = \sum_j w_j(x)\,d_j$ is a $w(x)$-weighted average of the projected coefficient differences. When the $\{d_j\}$ are not all collinear, this weighted average traces a nondegenerate curve as the weights vary with $x$, so the gradient does not remain on a fixed ray and the ratio of its components is nonconstant. If instead all $d_j$ are collinear---for instance when the alternatives differ only in a third covariate direction, so that $\gamma_{j1} = \gamma_{k1}$ and $\gamma_{j2} = \gamma_{k2}$ for all $j$---both partials are constant and the ratio is constant or undefined; this is the case excluded by the hypothesis.
\end{proof}

\begin{proof}[Proof of Proposition~\ref{prop: exchangeability}]
By Assumption~\ref{ass: exchangeability}, $\lambda_k\bigl((\delta_{kj})_{j \neq k}\bigr)$ is symmetric in its non-chosen arguments. By Proposition~\ref{prop: CCP inversion}, $\tilde{\lambda}_k := \lambda_k \circ L_k \circ \Psi^{-1}$ is a well-defined function on $\operatorname{int}(\Delta^K)$, where $L_k: u \mapsto (u_k - u_j)_{j \neq k}$ converts normalized utilities into pairwise differences. The map $L_k \circ \Psi^{-1}$ is equivariant under permutations of non-chosen indices: a joint $\varepsilon$-density that is exchangeable in non-chosen indices yields choice probabilities satisfying $p_{\pi(j)}(\pi u) = p_j(u)$ for any permutation $\pi$ of non-chosen indices, so permuting the non-chosen probabilities corresponds to permuting the non-chosen utility differences. Composing with the symmetry of $\lambda_k$, $\tilde{\lambda}_k(p_0, \ldots, p_K)$ is symmetric in $(p_j)_{j \neq k}$.

Fix the compact $\mathcal{P} \subset \operatorname{int}(\Delta^K)$ from the proposition's hypothesis, and let $\mathcal{P}^{\star} := \bigcup_{\pi} \pi(\mathcal{P})$ denote its orbit under permutations of the non-chosen indices. As a finite union of compact sets, $\mathcal{P}^{\star}$ is compact, and by construction it is invariant under such permutations. By Assumption~\ref{ass: multinomial identification}(ii), $\tilde{\lambda}_k$ is continuously differentiable on $\operatorname{int}(\Delta^K)$, and so its restriction to $\mathcal{P}^{\star}$ is continuous. The Weierstrass approximation theorem then yields a polynomial $P_n$ in the $K$ non-chosen probability coordinates such that $\sup_{p \in \mathcal{P}^{\star}} |\tilde{\lambda}_k(p) - P_n(p)| < \epsilon$. Symmetrize by averaging: define $Q(p) := (K!)^{-1} \sum_{\pi} P_n(\pi p)$, where $\pi$ runs over permutations of non-chosen indices. By the symmetry of $\tilde{\lambda}_k$ in non-chosen coordinates and the permutation invariance of $\mathcal{P}^{\star}$,
$$\sup_{p \in \mathcal{P}} |\tilde{\lambda}_k(p) - Q(p)| \le (K!)^{-1} \sum_{\pi} \sup_{p \in \mathcal{P}^{\star}} \bigl|\tilde{\lambda}_k(\pi p) - P_n(\pi p)\bigr| < \epsilon.$$
By Newton's fundamental theorem of symmetric polynomials, $Q$ can be expressed as a polynomial in the elementary symmetric polynomials $(e_1, \ldots, e_K)$ of the non-chosen probabilities, completing the result.
\end{proof}

\begin{proof}[Proof of Proposition~\ref{prop: exchangeability id}]\label{sec: exchangeability proof}
The proof of (i) follows directly from Propositions 1--3 of \citet{kim2025point} applied to the single-index partial linear model with index $e_1(p(x)) = 1 - p_k(x)$. The proof of (ii) follows the same structure as Proposition~\ref{prop: identification}, generalized from two indices to $L$ indices. For (ii), the argument is identical to that of Proposition~\ref{prop: multinomial id general} under the substitution $(P_K, K) \mapsto (E_L, L)$ posited in (b), with the $L+1$ continuous covariates $x_c = (x_1, \ldots, x_{L+1})$ of condition (a) as the differentiating coordinates. If $(\beta, \breve{\lambda}^{(L)})$ is observationally equivalent, set $\Delta_c := (\beta_{k,j} - \beta_j)_{j=1}^{L+1}$ and $b := \breve{\lambda}_k^{(L)} - \breve{\lambda}^{(L)}$, so that $x_c'\Delta_c + b(E_L(x)) = 0$ a.s. Differentiating in $x_c$ and pre-multiplying by a left-null vector of the rank-$L$ Jacobian $J_E(x)$ gives $\Delta_c = 0$ at the $L+1$ spanning points of the analogue of Assumption~\ref{ass: multinomial identification}(iv); then $\nabla b \equiv 0$ on the connected image $E_L(\operatorname{supp}(X_i \mid D_i = k))$ forces $b \equiv C$; and the no-hyperplane analogue of (v) yields $\beta = \beta_k$, $C = 0$. Identification of $\breve{\lambda}_k^{(L)}$ follows from $\breve{\lambda}_k^{(L)}(s) = E[m_k(X) - X\beta_k \mid E_L(X) = s, D = k]$.
\end{proof}

\begin{proof}[Proof of Lemma~\ref{lemma: first stage rates}]
	(a) follows from Theorem 3.1 of \citet{chen2007large} applied to the sieve MLE of the ordered choice model. The sup-norm convergence rate for sieve estimators with $d_c$ continuous regressors and smoothness $m_h$ is $O_p\bigl(Q_n^{-m_h/d_c} + \sqrt{Q_n \log n / n}\bigr)$, the sum of a sieve-approximation bias and an estimation-error term. Setting $Q_n \asymp n^{d_c/(2m_h + d_c)}$ balances the two terms and yields the rate $O_p(n^{-m_h/(2m_h + d_c)} (\log n)^c)$. This is $o(n^{-1/4})$ when $m_h/(2m_h + d_c) > 1/4$, i.e., $m_h > d_c/2$. In the designs considered, where $d_c \leq 3$, $m_h \geq 2$ suffices.

	(b) The argument parallels (a), now applied to the sieve logistic regression estimator \eqref{eq: sieve logistic} for each $k$. Since the threshold function $h_k(x) = \Lambda(h_k^*(x))$ where $h_k^*(x) = \log(h_k(x)/(1-h_k(x)))$ is the log-odds, the sieve logistic estimator targets $h_k^*$ and converts to $h_k$ via $\Lambda$. The smoothness of $\Lambda$ preserves the sup-norm rate. With the $d_c = 3$ continuous covariates required to identify the nonparametric ordered model, $m_h \geq 2$ satisfies the general condition $m_h > d_c/2 = 3/2$.

	(c) is standard; see \citet{mcfadden1972conditional}. The uniform convergence over $x$ follows from $\nu_k(x)$ being a smooth function of $(\gamma_1, \ldots, \gamma_K)$ and $x$, combined with the compactness of $\mathcal{X}$.

	(d) follows from (c) by the continuous mapping theorem, since the elementary symmetric polynomials of the choice probabilities are smooth functions of the probability vector, which is itself a smooth function of the sieve parameters.
\end{proof}

\begin{lemma}[Tensor product B-spline perturbation]\label{lemma: tensor perturbation}
	For tensor product B-splines of order $r$ with $J_n^{[\ell]}$ interior knots in dimension $\ell$,
	$$\|B^{(L)}(s) - B^{(L)}(s')\| \leq C_r \left(\max_\ell J_n^{[\ell]}\right) \|s - s'\|$$
	for a constant $C_r$ depending only on $r$.
\end{lemma}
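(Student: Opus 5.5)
The bound is obtained by the mean value theorem applied coordinatewise to the tensor-product basis. The first step is to record two standard facts about univariate B-splines of order $r$ with $J$ interior knots on a compact interval with quasi-uniform knots, i.e.\ mesh size of order $J^{-1}$. First, the basis functions are nonnegative and sum to one. Second, by the de Boor derivative formula $B_{j,r}'(t)=(r-1)\bigl[B_{j,r-1}(t)/(t_{j+r-1}-t_j)-B_{j+1,r-1}(t)/(t_{j+r}-t_{j+1})\bigr]$, each derivative satisfies $|B_{j}'(t)|\le 2(r-1)\,c\,J$. Moreover, at any point only $r$ basis functions are nonzero. Combining these gives a vector bound $\|\partial_t B(t)\|\le C'_r J$ uniformly in $t$. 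For order $r\ge2$ the basis is Lipschitz on the whole interval (it is continuous), so the mean value theorem applies along segments. For $r=1$ it does not, but $r\ge m_\lambda\ge2$ under Assumption~\ref{ass: sieve}.

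Next, for the tensor product $B^{(L)}(s)=\bigotimes_{\ell}B^{[\ell]}(s_\ell)$, move from $s$ to $s'$ one coordinate at a time through the intermediate points $s^{(m)}=(s'_1,\dots,s'_m,s_{m+1},\dots,s_L)$. This gives a telescoping sum
\[B^{(L)}(s)-B^{(L)}(s')=\sum_{m=1}^{L}\bigl[B^{(L)}(s^{(m-1)})-B^{(L)}(s^{(m)})\bigr].\]
Each increment factors as a Kronecker product, and $\|a\otimes b\|=\|a\|\,\|b\|$. The increment therefore has norm $\|B^{[m]}(s_m)-B^{[m]}(s'_m)\|\prod_{\ell\neq m}\|B^{[\ell]}(\cdot)\|$. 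The first factor is at most $C'_r J_n^{[m]}|s_m-s'_m|$ by the univariate step. Each remaining factor is at most $1$, since the entries are nonnegative and sum to one, so the Euclidean norm is at most the $\ell_1$ norm, which equals $1$. Summing over $m$ and using $\sum_m|s_m-s'_m|\le\sqrt L\|s-s'\|$ yields the claim with constant $C'_r\sqrt L\,\max_\ell J_n^{[\ell]}$. Since $L$ is fixed, $\sqrt L$ is absorbed into $C_r$.

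The main technical point is the uniform derivative bound $|B_j'|\lesssim J$. It requires a quasi-uniformity (bounded mesh ratio) condition on the knots, which I will state as implicit in the sieve construction, with knots placed at quantiles or equally spaced on the compact $\mathcal G_k$. Without it the constant would depend on the minimal knot spacing. A secondary point is that $s$ and $s'$ may fall outside the support on which the knots are placed; for instance, $\hat g_k$ may fall slightly outside $\mathcal G_k$. I will handle this by extending the basis with the boundary polynomial pieces on a slightly enlarged compact set, or by restricting the statement to points in the knot domain. Both routes preserve the $O(J)$ derivative bound.
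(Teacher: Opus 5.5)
Your argument is correct, and it reaches the bound by a somewhat different route than the paper. The paper works entrywise. It applies the product-rule telescoping to each scalar $B_{\mathbf j}^{(L)}(s)=\prod_\ell B_{j_\ell}^{[\ell]}(s_\ell)$, using only the scalar Lipschitz constant $C_rJ_n^{[\ell]}$ and the bound $|B_{j_\ell}^{[\ell]}|\le 1$. It then squares and sums over the multi-indices active at $s$ or $s'$, so local support enters at the tensor level (at most $(r+1)^L$ active functions per point).

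You instead use local support once, in one dimension, to get a vector Lipschitz bound for $B^{[\ell]}$. Along the coordinate-wise telescoping path you then exploit $\|a\otimes b\|=\|a\|\,\|b\|$ together with $\|B^{[\ell]}(t)\|\le\|B^{[\ell]}(t)\|_1=1$. The payoff is a cleaner constant, $C_r'\sqrt{L}$. The paper's entrywise route actually yields something like $C_r L\sqrt{2(r+1)^L}$, and its final sentence absorbs this exponential-in-$L$ factor silently. The price is that you need nonnegativity and the partition of unity, not just boundedness of each factor.

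You also make explicit three points the paper leaves implicit. The Lipschitz step fails for $r=1$, so $r\ge 2$ is needed, which $r\ge m_\lambda\ge 2$ guarantees. The $O(J)$ derivative bound requires quasi-uniform knots. And evaluation points may leave the knot domain.

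On that last point, be careful with your polynomial-extension remedy. Extrapolating the boundary degree-$(r-1)$ pieces over a fixed-width enlargement $\eta$ can produce derivatives of order $J(J\eta)^{r-2}$, not $O(J)$. For example, $(1-(t-a)/h)^{r-1}$ with $h\asymp 1/J$, evaluated at $t=a-\eta$, behaves this way. So that route only works for enlargements of width $O(1/J)$, which is what the condition $J_n\|\hat g-g_0\|_\infty\to 0$ used in Step 3 of the main proof delivers. Placing the knots on the enlarged set itself keeps quasi-uniformity and avoids the issue entirely.
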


\begin{proof}
	The tensor product basis is $B_\mathbf{j}^{(L)}(s) = \prod_{\ell=1}^L B_{j_\ell}^{[\ell]}(s_\ell)$. For fixed $\ell$, $B_{j_\ell}^{[\ell]}(\cdot)$ is Lipschitz with constant $C_r J_n^{[\ell]}$ (the derivative of a B-spline of order $r$ is bounded by $C_r$ times the inverse knot spacing). Each factor $B_{j_\ell}^{[\ell]}$ is bounded by 1 so the product rule gives
	\begin{align*}
		|B_\mathbf{j}^{(L)}(s) - B_\mathbf{j}^{(L)}(s')| &\leq \sum_{\ell=1}^L |B_{j_\ell}^{[\ell]}(s_\ell) - B_{j_\ell}^{[\ell]}(s_\ell')| \prod_{\ell' \neq \ell} \max(|B_{j_{\ell'}}^{[\ell']}(s_{\ell'})|, |B_{j_{\ell'}}^{[\ell']}(s_{\ell'}')|) \\
		&\leq \sum_{\ell=1}^L C_r J_n^{[\ell]} |s_\ell - s_\ell'| \leq C_r L \left(\max_\ell J_n^{[\ell]}\right) \|s - s'\|.
	\end{align*}
	Squaring and summing over the multi-index $\mathbf{j}$ (noting that at most $(r+1)^L$ tensor product basis functions are nonzero at any point), the result follows. For $L = 1$ this reduces to $\|B^{(1)}(s) - B^{(1)}(s')\| \leq C_r J_n |s - s'|$.
\end{proof}

\begin{proof}[Proof of Theorem~\ref{thm: asymptotic normality}]\label{sec: proof asymptotic normality}

	Throughout, fix category $k$ and write $n$ for $n_k$, $g_0$ for $g_{k0}$, $\beta_0$ for $\beta_{k0}$, and so on. The sum $\sum_i$ is taken over $i \in \mathcal{I}_k$.

	\emph{Step 1: Setup and decomposition.} Write the model as $Y_i = X_i \beta_0 + \lambda_0(g_0(X_i)) + \varepsilon_i$ where $E[\varepsilon_i | X_i, D_i = k] = 0$. By Assumption~\ref{ass: smoothness}, there exists a sieve coefficient vector $\delta_0 \in \mathbb{R}^{\kappa_n}$ such that
	\begin{equation}\label{eq: sieve approx}
		\sup_{s \in \mathcal{G}_k} \left|\lambda_0(s) - B^{(L)}(s)'\delta_0\right| \leq C \cdot \left(\max_\ell J_n^{[\ell]}\right)^{-m_\lambda},
	\end{equation}
	by the standard approximation theory for tensor product B-splines \citep{schumaker2007spline, deBoor2001}. Define the approximation residual $r_i := \lambda_0(g_0(X_i)) - B^{(L)}(g_0(X_i))'\delta_0$, so that $Y_i = X_i \beta_0 + B^{(L)}(g_0(X_i))'\delta_0 + r_i + \varepsilon_i.$ Let $W_i^0 = (X_i, B^{(L)}(g_0(X_i))')' \in \mathbb{R}^{d_X + \kappa_n}$ denote the infeasible augmented regressor, and $\hat{W}_i = (X_i, B^{(L)}(\hat{g}(X_i))')' $ the feasible counterpart. Define $\theta_0 = (\beta_0', \delta_0')'$, $\tilde{\theta} = (\mathbf{W}_0'\mathbf{W}_0)^{-1}\mathbf{W}_0'\mathbf{Y}$ (oracle), and $\hat{\theta} = (\hat{\mathbf{W}}'\hat{\mathbf{W}})^{-1}\hat{\mathbf{W}}'\mathbf{Y}$ (feasible), where $\mathbf{W}_0$, $\hat{\mathbf{W}}$, and $\mathbf{Y}$ are the stacked matrices.

	\emph{Step 2: Oracle estimator asymptotics.} The oracle estimator satisfies
	\begin{equation}\label{eq: oracle expansion}
		\tilde{\theta} - \theta_0 = \left(\frac{1}{n}\sum_i W_i^0 (W_i^0)'\right)^{-1} \frac{1}{n}\sum_i W_i^0(\varepsilon_i + r_i).
	\end{equation}

	Let $P_n^0 := \mathbf{B}_0(\mathbf{B}_0'\mathbf{B}_0)^{-1}\mathbf{B}_0'$ denote the projection onto the column space of the sieve basis $\mathbf{B}_0 = (B^{(L)}(g_0(X_i))')_{i \in \mathcal{I}_k}$, and let $M_n^0 = I_n - P_n^0$. By Frisch--Waugh--Lovell theorem,
	\begin{equation}\label{eq: FWL oracle}
		\tilde{\beta} - \beta_0 = \left(\frac{\mathbf{X}' M_n^0 \mathbf{X}}{n}\right)^{-1} \frac{\mathbf{X}' M_n^0 (\boldsymbol{\varepsilon} + \mathbf{r})}{n},
	\end{equation}
	where $\mathbf{X}$ is the $n \times d_X$ covariate matrix for the selected subsample. I analyze the two terms separately. The approximation bias is $o(n^{-1/2}).$ By \eqref{eq: sieve approx}, $\|r_i\| \leq C (\max_\ell J_n^{[\ell]})^{-m_\lambda}$ for all $i$, so
	$\left\|\frac{\mathbf{X}' M_n^0 \mathbf{r}}{n}\right\| \leq \frac{\|\mathbf{X}\| \|\mathbf{r}\|}{n} \leq C \left(\max_\ell J_n^{[\ell]}\right)^{-m_\lambda}.$
	By Assumption~\ref{ass: sieve}(iv), $\sqrt{n} \cdot (\max_\ell J_n^{[\ell]})^{-m_\lambda} \to 0$. For the stochastic term, standard arguments \citep{newey1997convergence} yield
	$\frac{\mathbf{X}' M_n^0 \boldsymbol{\varepsilon}}{\sqrt{n}} = \frac{1}{\sqrt{n}}\sum_i \tilde{X}_i \varepsilon_i + o_p(1),$
	where $\tilde{X}_i := X_i - \Pi(g_0(X_i))$ is the population projection residual. By the Lindeberg--Feller CLT,
	$\frac{1}{\sqrt{n}}\sum_i \tilde{X}_i \varepsilon_i \xrightarrow{d} N(0, \Omega_k).$ Combining with $n^{-1}\mathbf{X}'M_n^0\mathbf{X} \xrightarrow{p} \Sigma_k$ (by the law of large numbers and the consistency of the sieve projection), we obtain
	\begin{equation}\label{eq: oracle CLT}
		\sqrt{n}(\tilde{\beta} - \beta_0) \xrightarrow{d} N(0, \Sigma_k^{-1}\Omega_k\Sigma_k^{-1}).
	\end{equation}

	\emph{Step 3: Generated regressor negligibility.} I show that $\sqrt{n}(\hat{\beta} - \tilde{\beta}) = o_p(1)$. By the FWL theorem applied to the feasible estimator, with $\hat{P}_n = \hat{\mathbf{B}}(\hat{\mathbf{B}}'\hat{\mathbf{B}})^{-1}\hat{\mathbf{B}}'$ and $\hat{M}_n = I - \hat{P}_n$,
	\begin{equation}\label{eq: FWL feasible}
		\hat{\beta} - \beta_0 = \left(\frac{\mathbf{X}' \hat{M}_n \mathbf{X}}{n}\right)^{-1} \frac{\mathbf{X}' \hat{M}_n (\mathbf{Y} - \mathbf{X}\beta_0)}{n},
	\end{equation}
	where $\mathbf{Y} = \mathbf{X}\beta_0 + \mathbf{B}_0\delta_0 + \mathbf{r} + \boldsymbol{\varepsilon}$. Therefore,
	$$\hat{\beta} - \tilde{\beta} = \left(\frac{\mathbf{X}' \hat{M}_n \mathbf{X}}{n}\right)^{-1} \frac{\mathbf{X}' (\hat{M}_n - M_n^0)(\mathbf{B}_0\delta_0 + \mathbf{r} + \boldsymbol{\varepsilon})}{n} + \text{second-order terms}.$$

	The dominant term is
	\begin{equation}\label{eq: GR dominant}
		R_n := \frac{\mathbf{X}'(\hat{M}_n - M_n^0)\boldsymbol{\varepsilon}}{n}.
	\end{equation}

	To bound $R_n$, note that $\hat{M}_n - M_n^0 = P_n^0 - \hat{P}_n$, and the operator norm of the projection difference is bounded by
	$$\|P_n^0 - \hat{P}_n\|_{\mathrm{op}} \leq C \cdot \frac{\|\mathbf{B}_0 - \hat{\mathbf{B}}\|_F}{\sigma_{\min}(\mathbf{B}_0)},$$
	where $\sigma_{\min}$ denotes the minimum singular value. By Lemma~\ref{lemma: tensor perturbation},
	\begin{equation}\label{eq: basis matrix perturbation}
		\|\mathbf{B}_0 - \hat{\mathbf{B}}\|_F \leq \sqrt{n} \cdot \max_i \|B^{(L)}(g_0(X_i)) - B^{(L)}(\hat{g}(X_i))\| \leq C_r \sqrt{n} \left(\max_\ell J_n^{[\ell]}\right) \|\hat{g} - g_0\|_\infty.
	\end{equation}

	By standard results on the minimum singular value of sieve design matrices \citep{huang2003local}, the restricted eigenvalue condition $\sigma_{\min}(\mathbf{B}_0/\sqrt{n}) \geq c > 0$ holds with probability approaching one under Assumption~\ref{ass: sieve}(iii), which requires that the sieve basis is not nearly collinear on the support of the data. For tensor product B-splines on a compact domain with bounded density, this follows from the local support property of B-splines, which ensures that the Gram matrix $\mathbf{B}_0'\mathbf{B}_0/n$ converges to a banded positive definite matrix. Hence $\sigma_{\min}(\mathbf{B}_0) = \sqrt{n}\,\sigma_{\min}(\mathbf{B}_0/\sqrt{n}) \geq c\sqrt{n}$ with probability approaching one, so the $\sqrt{n}$ factor in \eqref{eq: basis matrix perturbation} cancels against the $\sqrt{n}$ in the denominator:
	$\|P_n^0 - \hat{P}_n\|_{\mathrm{op}} = O_p\left(\left(\max_\ell J_n^{[\ell]}\right) \|\hat{g} - g_0\|_\infty\right).$

	For the term $R_n = \mathbf{X}'(P_n^0 - \hat{P}_n)\boldsymbol{\varepsilon}/n$, a worst-case operator-norm bound is not sharp enough; the required extra $n^{-1/2}$ comes from $\boldsymbol{\varepsilon}$ being conditionally mean zero. Conditioning on $(\mathbf{X}, \hat{g})$, $E[R_n \mid \mathbf{X}, \hat{g}] = 0$ because $E[\varepsilon_i \mid X_i, D_i = k] = 0$, and
	$$\operatorname{Var}(R_n \mid \mathbf{X}, \hat{g}) = \frac{1}{n^2}\mathbf{X}'(P_n^0 - \hat{P}_n)\,\Omega_\varepsilon\,(P_n^0 - \hat{P}_n)'\mathbf{X} \;\preceq\; \frac{\bar{\sigma}^2}{n^2}\,\|\mathbf{X}\|_{\mathrm{op}}^2\,\|P_n^0 - \hat{P}_n\|_{\mathrm{op}}^2,$$
	where $\Omega_\varepsilon = \operatorname{diag}(\sigma_k^2(X_i)) \preceq \bar{\sigma}^2 I$ by the bounded conditional variance in Assumption~\ref{ass: errors}. Hence, using $\|\mathbf{X}\|_{\mathrm{op}}/\sqrt{n} = O_p(1)$ and the projection bound above,
	$$\|R_n\| = O_p\!\left(\frac{\|\mathbf{X}\|_{\mathrm{op}}}{n}\,\|P_n^0 - \hat{P}_n\|_{\mathrm{op}}\right) = O_p\!\left(\frac{1}{\sqrt{n}} \left(\max_\ell J_n^{[\ell]}\right) \|\hat{g} - g_0\|_\infty\right).$$
	Therefore,
	$$\sqrt{n}\|R_n\| = O_p\left(\left(\max_\ell J_n^{[\ell]}\right) \|\hat{g} - g_0\|_\infty\right) = O_p\left(\left(\max_\ell J_n^{[\ell]}\right) \cdot o_p(n^{-1/4})\right).$$ 
	Under Assumption~\ref{ass: sieve}(i)--(iii), $\max_\ell J_n^{[\ell]} = O(n^{1/(2m_\lambda + L) + \epsilon})$ for the optimal rate, and the product $J_n^{[\ell]} \cdot n^{-1/4} \to 0$ when $1/(2m_\lambda + L) < 1/4$, i.e., $m_\lambda > (4-L)/2$. For $L = 2$ and $m_\lambda \geq 2$, this is satisfied. Therefore $\sqrt{n}R_n = o_p(1)$.

	Two terms remain. For the approximation residual, $\|\mathbf{r}\|_\infty = O((\max_\ell J_n^{[\ell]})^{-m_\lambda})$, so the same projection bound gives $\sqrt{n}\,\|\mathbf{X}'(\hat{M}_n - M_n^0)\mathbf{r}/n\| = O_p\bigl(\sqrt{n}\,(\max_\ell J_n^{[\ell]})^{1-m_\lambda}\|\hat g - g_0\|_\infty\bigr) = o_p(1)$ under Assumption~\ref{ass: sieve}(iv). The term in $\mathbf{B}_0\delta_0$ is the genuine generated-regressor bias and requires more than the projection bound. Because $\mathbf{B}_0\delta_0$ lies in the column space of $\mathbf{B}_0$, the oracle projection annihilates it exactly ($M_n^0\mathbf{B}_0\delta_0 = 0$), so the term reduces to $\mathbf{X}'\hat{M}_n\mathbf{B}_0\delta_0/n = \mathbf{X}'(I - \hat{P}_n)(\mathbf{B}_0 - \hat{\mathbf{B}})\delta_0/n$, using $\hat{M}_n\hat{\mathbf{B}} = 0$. Writing $\mathbf{X}'(I-\hat P_n) = (\hat M_n \mathbf{X})' = \hat{\tilde{\mathbf{X}}}'$, this is an inner product of the estimated projection residual $\hat{\tilde X}_i$ with $(B^{(L)}(g_0(X_i)) - B^{(L)}(\hat g(X_i)))'\delta_0 \approx \lambda_0'(g_0(X_i))(g_0(X_i) - \hat g(X_i))$, a near-function of the indices. Its $\sqrt{n}$-negligibility under the sup-norm rate $\|\hat g - g_0\|_\infty = o_p(n^{-1/4})$ together with the orthogonality $E[\tilde X_i \mid g_0(X_i), D_i = k] = 0$ (which makes the leading bias second order) is the standard two-step sieve generated-regressor result \citep{newey1997convergence, chen2007large}; it is this rate-and-orthogonality mechanism, rather than a moment-level Neyman orthogonality, that the Remark below refers to. This establishes $\sqrt{n}(\hat{\beta} - \tilde{\beta}) = o_p(1)$.

	\emph{Step 4: Conclusion.} Combining Steps 2 and 3:
	$$\sqrt{n}(\hat{\beta} - \beta_0) = \sqrt{n}(\tilde{\beta} - \beta_0) + o_p(1) \xrightarrow{d} N(0, \Sigma_k^{-1}\Omega_k \Sigma_k^{-1}).$$

	This completes the proof.
\end{proof}

\begin{remark}
Negligibility of the generated regressor is a sieve-rate phenomenon, not moment-level Neyman orthogonality. The naive Robinson moment $m(\beta, g) = E[\tilde X_i(Y_i - X_i\beta - \lambda(g(X_i)))]$ has Gateaux derivative
$\partial_g m \cdot \Delta g = -E[\tilde X_i \lambda'(g_0(X_i))\Delta g(X_i)],$
which vanishes for $\sigma(g_0)$-measurable directions $\Delta g$ (by the projection property $E[\tilde X_i \mid g_0(X_i)] = 0$) but is non-zero for general $X$-measurable directions. For perturbations of the form $\hat g - g_0$ that lie in $\sigma(X)$ rather than $\sigma(g_0)$, the leading first-order term is bounded by $C\|\hat g - g_0\|_{L^2}$, and Step~3 absorbs it into the $o_p(1)$ remainder using the rate condition $\|\hat g - g_0\|_{\mathrm{sup}} = o_p(n^{-1/4})$ together with sieve growth $J_n = o(n^{1/4})$ and Hölder smoothness of $\lambda_k$.
\end{remark}

\begin{proof}[Proof of Theorem~\ref{thm: variance}]
The consistency of $\hat{\Sigma}_k$ follows from the uniform law of large numbers applied to $\hat{\tilde{X}}_{ik} \hat{\tilde{X}}_{ik}'$, combined with $\hat{\tilde{X}}_{ik} - \tilde{X}_{ik} = o_p(1)$ uniformly, which follows from the consistency of the sieve projection. For $\hat{\Omega}_k$, we additionally need $\hat{\varepsilon}_{ik}^2 \xrightarrow{p} \sigma_k^2(X_i)$ in an appropriate sense. By the consistency of $\hat{\beta}_k$ and $\hat{\delta}_k$, and the uniform convergence of the sieve approximation, $\hat{\varepsilon}_{ik} = \varepsilon_{ik} + o_p(1)$. The bounded fourth moment (Assumption~\ref{ass: errors}(iii)) ensures uniform integrability, completing the argument.
\end{proof}

\newpage
\section{Control Function Linearity and Nonlinearity}\label{sec: appendix cf figures}

\begin{figure}[h!]
	\centering
	\includegraphics[width=0.8\textwidth]{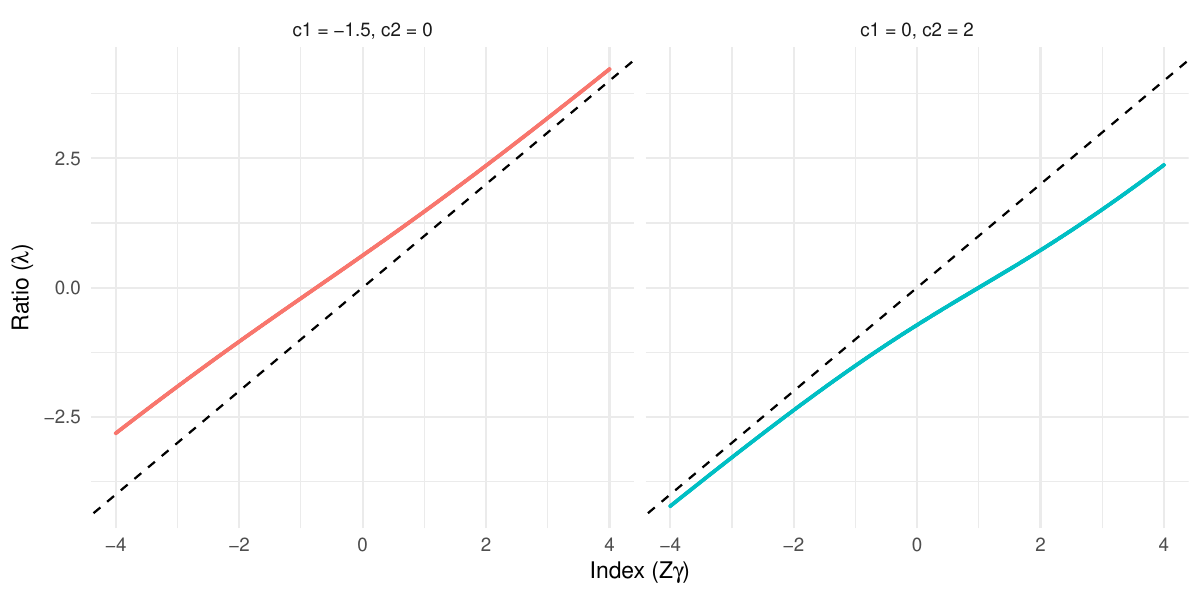}
	\caption{Near-linearity of the ordered probit control function. Each panel plots the generalized inverse Mills ratio $\lambda(z\gamma; c_1, c_2) = [\phi(c_2 - z\gamma) - \phi(c_1 - z\gamma)] / [\Phi(c_2 - z\gamma) - \Phi(c_1 - z\gamma)]$ against the linear index $z\gamma$ (solid) alongside the 45-degree line (dashed) for two threshold configurations. The ratio is approximately linear over much of the support, illustrating why identification without an exclusion restriction is fragile under the Gaussian ordered selection specification with a linear selection index.}\label{fig: ordered linearity}
	\includegraphics[width=0.8\textwidth]{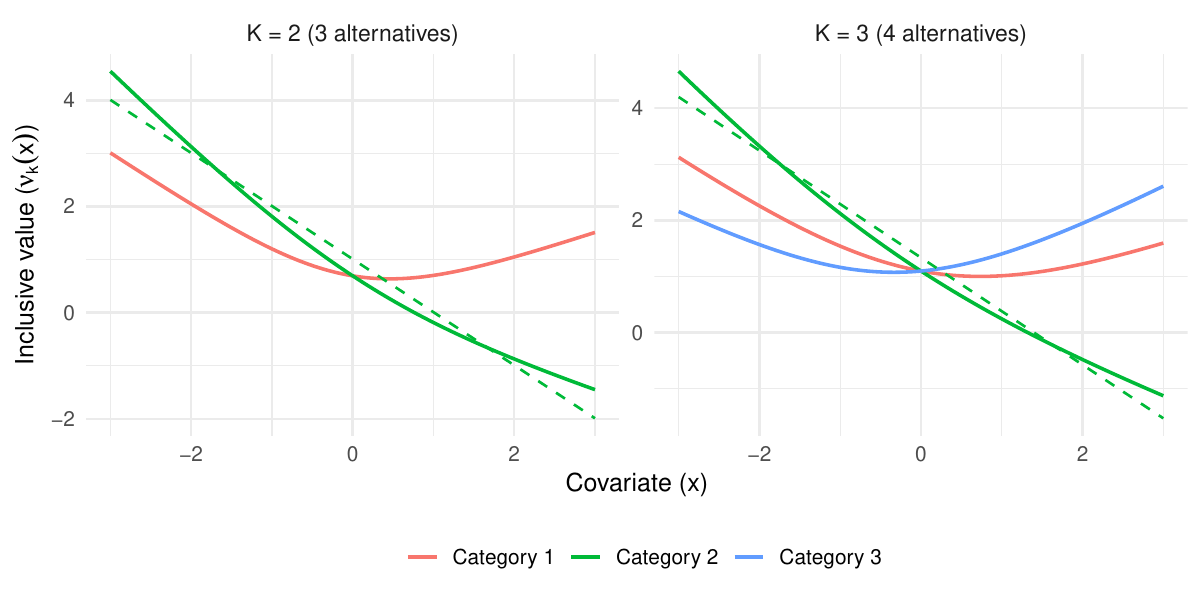}
	\caption{Inherent nonlinearity of the multinomial logit inclusive value. Each panel plots the inclusive value $\nu_k(x) = \log\bigl(\sum_{j \neq k} \exp(-x \cdot \delta_{kj})\bigr)$ against the covariate $x$ for different numbers of alternatives. The log-sum-exp structure produces a convex function that is inherently nonlinear in $x$, providing the identifying variation needed for selection correction without exclusion restrictions. The dashed line shows the best linear approximation for Category~2, whose curvature is less pronounced than the other categories; the visible departure confirms that nonlinearity is present even in this case.}\label{fig: mnl nonlinearity}
\end{figure}

\newpage

\section{Additional Simulation Results}\label{sec: appendix additional sims}

\medskip\noindent\textbf{Ordered DGP3: $K = 3$ with three continuous covariates.} To move beyond the $K = 2$ ordered designs, this DGP uses $K = 3$ (non-participation and 3 occupations) with three continuous covariates $(X, Z, W)$ and a selection index whose higher-order terms are scaled by a nonlinearity parameter $\delta$:
\[
  h(X, Z, W) \;=\; 0.5\, X \;+\; 0.3\, Z \;+\; 0.8\, W \;+\; \delta\cdot(-0.5\, X^2 \;+\; 0.2\, Z^2 \;-\; 0.4\, X Z) \;+\; \varepsilon,
\]
with thresholds $c_1 = -1.0$, $c_2 = 0.5$, $c_3 = 1.5$ and outcome-error correlations $\rho_k \in \{0.3, 0.5, 0.7\}$ with the selection error, so the selection bias is progressively stronger for higher categories. The full-nonlinearity baseline is $\delta = 1$; smaller values scale the higher-order terms toward zero and probe progressively weaker nonlinearity, the regime in which identification without exclusion restrictions is most fragile. At $\delta = 0.1$ the latent index is nearly linear in the covariates, so any smooth function of the choice probabilities, including everything in the cubic B-spline tensor basis, approximates a linear function of $(X, Z, W)$; the CF basis and the linear regressors then occupy nearly overlapping column space in the second-stage OLS, inducing severe multicollinearity that the bivariate basis cannot escape. Table~\ref{tab: ordered K3} reports results across $\delta \in \{1, 0.7, 0.3, 0.1\}$ for all three occupations.

\begin{table}[h!]
\centering
\caption{Ordered selection with $K = 3$: weak-to-strong nonlinearity}\label{tab: ordered K3}
\setlength{\tabcolsep}{4pt}
\resizebox{\textwidth}{!}{%
\begin{tabular}{l cccc cccc cccc}
\toprule
& \multicolumn{4}{c}{Average RMSE} & \multicolumn{4}{c}{Average $|\text{Bias}|$} & \multicolumn{4}{c}{Average Coverage} \\
\cmidrule(lr){2-5} \cmidrule(lr){6-9} \cmidrule(lr){10-13}
& $\delta\!=\!1$ & $\delta\!=\!0.7$ & $\delta\!=\!0.3$ & $\delta\!=\!0.1$ & $\delta\!=\!1$ & $\delta\!=\!0.7$ & $\delta\!=\!0.3$ & $\delta\!=\!0.1$ & $\delta\!=\!1$ & $\delta\!=\!0.7$ & $\delta\!=\!0.3$ & $\delta\!=\!0.1$ \\
\midrule
\multicolumn{13}{l}{\textit{Occupation 1 ($\rho = 0.3$, bivariate CF)}} \\
OLS    & 0.112 & 0.127 & 0.157 & 0.165 & 0.105 & 0.122 & 0.153 & 0.162 & 0.337 & 0.188 & 0.058 & 0.027 \\
Linear & 9.161 & 5.769 & 3.124 & 2.562 & 6.094 & 3.726 & 1.417 & 0.424 & 0.790 & 0.830 & 0.910 & 0.962 \\
Oracle & 0.039 & 0.040 & 0.072 & 0.197 & 0.003 & 0.002 & 0.002 & 0.007 & 0.947 & 0.945 & 0.935 & 0.933 \\
Sieve  & 0.045 & 0.049 & 0.080 & 0.165 & 0.007 & 0.005 & 0.020 & 0.101 & 0.905 & 0.913 & 0.933 & 0.875 \\
\addlinespace
\multicolumn{13}{l}{\textit{Occupation 2 ($\rho = 0.5$, bivariate CF)}} \\
OLS    & 0.188 & 0.217 & 0.269 & 0.296 & 0.181 & 0.212 & 0.266 & 0.294 & 0.233 & 0.095 & 0.002 & 0.000 \\
Linear & 21.705 & 18.997 & 11.545 & 10.043 & 2.754 & 5.577 & 3.820 & 0.073 & 0.905 & 0.910 & 0.910 & 0.953 \\
Oracle & 0.051 & 0.056 & 0.087 & 0.252 & 0.001 & 0.002 & 0.001 & 0.026 & 0.945 & 0.942 & 0.952 & 0.938 \\
Sieve  & 0.055 & 0.064 & 0.107 & 0.296 & 0.001 & 0.004 & 0.020 & 0.122 & 0.935 & 0.937 & 0.932 & 0.923 \\
\addlinespace
\multicolumn{13}{l}{\textit{Occupation 3 ($\rho = 0.7$, univariate CF)}} \\
OLS    & 0.198 & 0.233 & 0.276 & 0.304 & 0.188 & 0.225 & 0.272 & 0.300 & 0.303 & 0.178 & 0.017 & 0.003 \\
Linear & 0.347 & 0.296 & 0.260 & 0.204 & 0.058 & 0.085 & 0.110 & 0.028 & 0.932 & 0.928 & 0.887 & 0.930 \\
Oracle & 0.055 & 0.064 & 0.100 & 0.158 & 0.003 & 0.003 & 0.004 & 0.006 & 0.955 & 0.935 & 0.935 & 0.940 \\
Sieve  & 0.059 & 0.068 & 0.124 & 0.288 & 0.004 & 0.005 & 0.037 & 0.170 & 0.937 & 0.920 & 0.902 & 0.845 \\
\bottomrule
\end{tabular}}
\smallskip

\parbox{\textwidth}{\footnotesize\textit{Note:} $n = 5{,}000$, $R = 200$ replications. Each cell averages the metric across the three slope coefficients within the indicated occupation. The selection index scales its higher-order terms by $\delta$; $\delta = 1$ is the full-nonlinearity baseline and smaller $\delta$ approaches linearity. Sieve uses a bivariate control function for the two interior occupations and a univariate control function for the top occupation. ``Linear'' uses a misspecified linear ordered probit; ``Oracle'' uses the correctly specified nonlinear-index ordered probit.}
\end{table}

At full nonlinearity ($\delta = 1$) the Sieve estimator achieves near-oracle performance across all three occupations and nine coefficients, with coverage between 90\% and 94\%. More generally, for moderate-to-strong nonlinearity ($\delta \in \{0.3, 0.7, 1\}$) Sieve essentially matches the Oracle in every occupation: bias under 0.04 against the Oracle's 0.001--0.004, with RMSE only modestly above. At $\delta = 0.1$, however, Sieve bias inflates by an order of magnitude in every cell (to 0.101 in Occupation 1, 0.122 in Occupation 2, and 0.170 in Occupation 3) while the Oracle's bias remains negligible (0.006--0.026). With $h(X,Z,W)$ nearly linear, the control function basis spans a function class approximately collinear with the linear regressors, so the second-stage design matrix is near-rank-deficient: the partialled-out variation in $(X, Z, W)$ shrinks toward zero and any noise in the first-stage probability estimates is amplified into finite-sample bias on $\hat\beta_k$. The pathology worsens with $\rho$ because the true correction has greater curvature for the higher-bias occupations and an ill-conditioned design captures that curvature poorly. Even the Oracle pays a variance penalty at low $\delta$: its bias is stable across the grid but its RMSE inflates from 0.040--0.064 at $\delta = 0.7$ to 0.158--0.252 at $\delta = 0.1$, because the first-stage coefficients on $X^2$, $Z^2$, and $XZ$ are themselves estimated with noise when those terms are close to zero. The penalty is intrinsic to nonlinearity-based identification.

The parametric Linear correction inherits the shape distinction between interior and top categories. For the top category, the inverse Mills ratio's slope-varying curvature still provides identification away from the linear regressors even when its underlying linear ordered probit is misspecified, and Linear's RMSE in Occupation 3 stays bounded (0.20--0.35 across the $\delta$ grid). For the interior categories, the IMR has nearly constant slope, so its column is near-collinear with the linear regressors at every $\delta$; combined with the misspecification, this produces catastrophic RMSE for Linear (2.6 at $\delta = 0.1$ in Occupation 1, up to 21.7 at $\delta = 1$ in Occupation 2). Sieve avoids the interior pathology: its RMSE is one to two orders of magnitude smaller than Linear's in Occupations 1 and 2 across the entire $\delta$ grid.

\medskip\noindent\textbf{Multinomial DGP4 at $n = 100{,}000$.} Table~\ref{tab: dgp4 results large n} reports the supplementary large-sample simulation summarized in Section~\ref{sec: simulations}. With the variance largely eliminated, the under-identification of the multivariate sieve correction at $K = 3$ with only three continuous covariates becomes clearly visible: Sieve and Oracle RMSE remain roughly twice that of MLogit and Exch-$L2$, confirming that this gap is structural rather than a small-sample phenomenon.

\begin{table}[!htbp]
	\centering
	\caption{Multinomial DGP4 (Non-exchangeable factor model, $K = 3$): Large-sample summary statistics}\label{tab: dgp4 results large n}
	\small
	\setlength{\tabcolsep}{4pt}
	\begin{tabular}{lccccc}
		\toprule
		& OLS & MLogit & Oracle & Sieve & Exch-$L2$ \\
		\midrule
		Average RMSE & 0.737 & 0.072 & 0.108 & 0.115 & 0.066 \\
		Average $|$Bias$|$ & 0.736 & 0.047 & 0.084 & 0.043 & 0.036 \\
		Average Coverage & 0.000 & 0.744 & 0.644 & 0.896 & 0.817 \\
		\bottomrule
	\end{tabular}
	\smallskip

	\parbox{\textwidth}{\textit{Note:} RMSE, absolute bias, and 95\% CI coverage averaged across 3 occupations $\times$ 3 coefficients. $n = 100{,}000$, $R = 100$ replications.}
\end{table}

\medskip\noindent\textbf{Sensitivity to sample size.} To assess small-sample behavior across estimators, Table~\ref{tab: small n} reports results for multinomial DGP1 at $n \in \{1{,}000, \, 2{,}000, \, 5{,}000\}$. At $n = 1{,}000$, the MLogit RMSE (0.120) is comparable to OLS (0.114), reflecting the variance cost of estimating the sieve first stage with limited data; the MLogit bias (0.009), however, is already an order of magnitude smaller than OLS (0.054), and coverage is at the nominal level (0.93 vs.\ 0.93 for OLS). By $n = 2{,}000$ the MLogit RMSE pulls ahead of OLS, and at $n = 5{,}000$ the advantage is clear. MLogit coverage stays at 92--94\% across all sample sizes, while OLS coverage deteriorates from 0.93 at $n = 1{,}000$ to 0.75 at $n = 5{,}000$ as bias becomes dominant relative to the standard error. The Oracle tracks MLogit closely (RMSE 0.099 falling to 0.047 with bias under 0.01 throughout), confirming that the feasible MLogit gives up little efficiency to the infeasible benchmark even at small $n$. Sieve carries a larger variance penalty at small $n$ (RMSE 0.260 at $n = 1{,}000$, falling to 0.109 at $n = 5{,}000$) because the bivariate control function needs more data to be well-conditioned, but its bias remains small at every $n$ and coverage stays near nominal (0.90--0.93).

\begin{table}[h!]
\centering
\caption{DGP1: Sensitivity to sample size}\label{tab: small n}
\small
\setlength{\tabcolsep}{4pt}
\begin{tabular}{lccc ccc ccc}
\toprule
& \multicolumn{3}{c}{$n = 1{,}000$} & \multicolumn{3}{c}{$n = 2{,}000$} & \multicolumn{3}{c}{$n = 5{,}000$} \\
\cmidrule(lr){2-4} \cmidrule(lr){5-7} \cmidrule(lr){8-10}
& RMSE & $|\text{Bias}|$ & Cov. & RMSE & $|\text{Bias}|$ & Cov. & RMSE & $|\text{Bias}|$ & Cov. \\
\midrule
OLS    & 0.114 & 0.054 & 0.93 & 0.093 & 0.048 & 0.85 & 0.074 & 0.054 & 0.75 \\
MLogit & 0.120 & 0.009 & 0.93 & 0.086 & 0.009 & 0.94 & 0.055 & 0.001 & 0.92 \\
Oracle & 0.099 & 0.003 & 0.96 & 0.075 & 0.007 & 0.94 & 0.047 & 0.001 & 0.94 \\
Sieve  & 0.260 & 0.033 & 0.90 & 0.167 & 0.013 & 0.93 & 0.109 & 0.009 & 0.92 \\
\bottomrule
\end{tabular}
\smallskip

\parbox{\textwidth}{\textit{Note:} RMSE, absolute bias, and 95\% CI coverage averaged across coefficients within Occupation 1. $R = 200$ replications.}
\end{table}

\medskip\noindent\textbf{Bootstrap validation.} A bootstrap validation exercise ($R = 200$ Monte Carlo draws, $B = 100$ bootstrap replications within each draw) confirms that the heteroskedasticity robust standard errors used throughout are a reasonable approximation to the bootstrap standard errors. Table~\ref{tab: bootstrap} compares the analytical and bootstrap standard errors for the four MLogit slope coefficients in multinomial DGP1. The average robust SE is uniformly about 91\% of the average bootstrap SE (ratios 0.90--0.91 across coefficients), indicating that the analytical SEs are slightly smaller but of the same order. The bootstrap SE in turn essentially matches the Monte Carlo standard deviation of the point estimates, validating both as faithful measures of sampling dispersion. The 95\% confidence-interval coverage is similar under the two SE estimates (91--96\% under robust SE vs.\ 93--96\% under bootstrap SE).

\begin{table}[h!]
\centering
\caption{Bootstrap validation of robust standard errors (Multinomial DGP1, MLogit estimator)}\label{tab: bootstrap}
\setlength{\tabcolsep}{6pt}
\begin{tabular}{l c c c c c c c}
\toprule
Coefficient & True & MC SD & HC SE & Boot SE & HC/Boot & Cov HC & Cov Boot \\
\midrule
$\beta_{11}$ ($X$, Occ.\ 1) & 0.50 & 0.056 & 0.051 & 0.056 & 0.91 & 0.910 & 0.940 \\
$\beta_{12}$ ($Z$, Occ.\ 1) & 0.70 & 0.054 & 0.051 & 0.056 & 0.91 & 0.930 & 0.955 \\
$\beta_{21}$ ($X$, Occ.\ 2) & 0.80 & 0.079 & 0.070 & 0.077 & 0.90 & 0.910 & 0.930 \\
$\beta_{22}$ ($Z$, Occ.\ 2) & 0.50 & 0.064 & 0.059 & 0.064 & 0.91 & 0.955 & 0.955 \\
\bottomrule
\end{tabular}
\smallskip

\parbox{\textwidth}{\footnotesize\textit{Note:} Multinomial DGP1, MLogit estimator, $n = 5{,}000$, $R = 200$ Monte Carlo draws, $B = 100$ bootstrap replications per draw. ``MC SD'' is the standard deviation of the point estimate across MC draws. ``HC SE'' and ``Boot SE'' are the average analytical and bootstrap standard errors. ``HC/Boot'' is the ratio of average HC to average bootstrap SE. Coverage columns report the empirical coverage of nominal 95\% confidence intervals constructed from each SE estimate.}
\end{table}

%
%
%
%
\section{Robustness Tables}\label{sec: appendix robustness}

\subsection{Enriched selection equation}\label{sec: robustness enriched}

As a further robustness check, I augment the first-stage selection equation with semesters completed as an additional quasi-continuous covariate, exploiting the variation in time to degree across students. This variable is available from the 2010 wave onward, restricting the sample to 2010--2019 cohorts. The additional covariate provides richer nonparametric variation in the selection equation, yielding a more flexible sieve specification with four quasi-continuous covariates (age, GPA, parental income, and semesters completed) entering with quadratic terms and all pairwise interactions. Table~\ref{tab: v2 results} reports the estimated female coefficient for all three architectures. The results on the 2010--2019 subsample with the enriched selection equation are broadly consistent with the full-sample estimates. In the ordered model, the Sieve estimator yields $-0.040$ for SMEs and $+0.018$ for large firms, reproducing the sign reversal documented in Table~\ref{tab: main results}. In the occupation-type model, the Sieve estimate for non-STEM ($-0.060$) is close to the full-sample estimate ($-0.063$), and the STEM estimate is near zero ($-0.007$) as in the full sample. In the sector model, the public-sector gap is again essentially zero under Sieve ($+0.002$) and the private-sector gap remains negative ($-0.045$), preserving the qualitative pattern of the full-sample sector results. The consistency across estimators and sample restrictions confirms the robustness of the selection patterns documented in Table~\ref{tab: main results}.

\begin{table}[h!]
\centering
\caption{Robustness: enriched selection equation (2010--2019 subsample)}\label{tab: v2 results}
\begin{tabular}{llcc}
\toprule
& & \multicolumn{2}{c}{Category} \\
\cmidrule(lr){3-4}
Model & Estimator & SME / non-STEM / Public & Large firm / STEM / Private \\
\midrule
A. Ordered selection & OLS & $-0.046$ & $-0.045$ \\
(firm size) & Parametric CF & $-0.013$ & $-0.023$ \\
& Semiparametric (Sieve) & $-0.040$ & $\phantom{-}0.018$ \\[6pt]
B. Multinomial selection & OLS & $-0.060$ & $-0.013$ \\
(field) & MLogit CF & $-0.056$ & $-0.014$ \\
& Semiparametric (Sieve) & $-0.060$ & $-0.007$ \\
& Exch-$L1$ & $-0.061$ & $-0.014$ \\[6pt]
C. Multinomial selection & OLS & $-0.013$ & $-0.053$ \\
(sector) & MLogit CF & $-0.016$ & $-0.043$ \\
& Semiparametric (Sieve) & $\phantom{-}0.002$ & $-0.045$ \\
& Exch-$L1$ & $-0.019$ & $-0.034$ \\
\bottomrule
\end{tabular}
\smallskip

\parbox{\textwidth}{\textit{Note:} Second-stage wage equations as in Table~\ref{tab: main results}, except that the firm-size and occupation equations omit the public-sector indicator; the first-stage selection equation adds semesters completed. Sample restricted to 2010--2019 GOMS cohorts.}
\end{table}

\subsection{Year-by-year estimation}\label{sec: robustness yearly}

The subperiod analysis in Table~\ref{tab: dynamics results} imposes a common selection mechanism across multiple survey years. To verify that this restriction does not distort the year-specific estimates, I re-estimate all three architectures separately for each year 2008--2019. This approach allows both the selection equation and the outcome equation to vary freely across years, at the cost of smaller within-year samples (approximately 2,500--6,300 wage workers per category per year). Table~\ref{tab: yearly results} reports the OLS and Sieve female coefficients from the year-by-year estimation alongside the pooled estimates.

\begin{table}[h!]
\centering
\caption{Year-by-year estimation: female coefficient (log hourly wage)}\label{tab: yearly results}
\small
\setlength{\tabcolsep}{3pt}
\resizebox{\textwidth}{!}{%
\begin{tabular}{l rrrr rrrr rrrr}
\toprule
& \multicolumn{4}{c}{A. Ordered (firm size)} & \multicolumn{4}{c}{B. Occupation (STEM)} & \multicolumn{4}{c}{C. Sector} \\
\cmidrule(lr){2-5} \cmidrule(lr){6-9} \cmidrule(lr){10-13}
& \multicolumn{2}{c}{SME} & \multicolumn{2}{c}{Large} & \multicolumn{2}{c}{non-STEM} & \multicolumn{2}{c}{STEM} & \multicolumn{2}{c}{Public} & \multicolumn{2}{c}{Private} \\
\cmidrule(lr){2-3} \cmidrule(lr){4-5} \cmidrule(lr){6-7} \cmidrule(lr){8-9} \cmidrule(lr){10-11} \cmidrule(lr){12-13}
Year & OLS & Sieve & OLS & Sieve & OLS & Sieve & OLS & Sieve & OLS & Sieve & OLS & Sieve \\
\midrule
2008 & $-$0.035 & $-$0.039 & $-$0.042 & $-$0.035 & $-$0.051 & $-$0.039 & $-$0.036 & $-$0.029 & $\phantom{-}$0.035 & $\phantom{-}$0.031 & $-$0.070 & $-$0.060 \\
2009 & $-$0.062 & $-$0.072 & $-$0.078 & $-$0.024 & $-$0.083 & $-$0.078 & $-$0.046 & $-$0.046 & $-$0.037 & $-$0.030 & $-$0.080 & $-$0.075 \\
2010 & $-$0.083 & $-$0.120 & $-$0.092 & $-$0.001 & $-$0.099 & $-$0.093 & $-$0.067 & $-$0.085 & $-$0.028 & $-$0.017 & $-$0.104 & $-$0.073 \\
2011 & $-$0.047 & $-$0.038 & $-$0.037 & $\phantom{-}$0.019 & $-$0.047 & $-$0.034 & $-$0.037 & $-$0.015 & $-$0.029 & $-$0.001 & $-$0.044 & $-$0.028 \\
2012 & $-$0.030 & $-$0.017 & $-$0.077 & $-$0.054 & $-$0.052 & $-$0.055 & $-$0.073 & $-$0.075 & $\phantom{-}$0.000 & $\phantom{-}$0.003 & $-$0.065 & $-$0.065 \\
2013 & $-$0.062 & $-$0.082 & $-$0.002 & $\phantom{-}$0.001 & $-$0.056 & $-$0.057 & $-$0.021 & $-$0.029 & $-$0.036 & $-$0.041 & $-$0.045 & $-$0.044 \\
2014 & $-$0.041 & $\phantom{-}$0.016 & $\phantom{-}$0.015 & $\phantom{-}$0.023 & $-$0.046 & $-$0.054 & $\phantom{-}$0.022 & $\phantom{-}$0.024 & $\phantom{-}$0.041 & $\phantom{-}$0.026 & $-$0.043 & $-$0.050 \\
2015 & $-$0.070 & $-$0.048 & $\phantom{-}$0.000 & $\phantom{-}$0.100 & $-$0.060 & $-$0.060 & $-$0.013 & $-$0.029 & $-$0.025 & $-$0.034 & $-$0.046 & $-$0.056 \\
2016 & $-$0.018 & $-$0.004 & $-$0.022 & $\phantom{-}$0.054 & $-$0.050 & $-$0.051 & $\phantom{-}$0.028 & $\phantom{-}$0.023 & $\phantom{-}$0.028 & $\phantom{-}$0.041 & $-$0.049 & $-$0.050 \\
2017 & $-$0.044 & $-$0.026 & $-$0.022 & $\phantom{-}$0.053 & $-$0.059 & $-$0.057 & $\phantom{-}$0.013 & $\phantom{-}$0.015 & $-$0.021 & $-$0.012 & $-$0.051 & $-$0.050 \\
2018 & $-$0.025 & $-$0.028 & $-$0.035 & $\phantom{-}$0.065 & $-$0.047 & $-$0.051 & $\phantom{-}$0.007 & $\phantom{-}$0.007 & $-$0.015 & $-$0.024 & $-$0.037 & $-$0.031 \\
2019 & $-$0.022 & $\phantom{-}$0.023 & $-$0.050 & $\phantom{-}$0.048 & $-$0.050 & $-$0.040 & $-$0.016 & $-$0.007 & $-$0.029 & $-$0.024 & $-$0.038 & $-$0.036 \\
\bottomrule
\end{tabular}}
\smallskip

\parbox{\textwidth}{\footnotesize\textit{Notes:} All specifications are identical to Table~\ref{tab: main results} except that year fixed effects are omitted. Sample sizes per year range from 14,000--17,000 (full) and 2,500--6,300 (per wage category).}
\end{table}

The ordered model results are qualitatively consistent with the subperiod estimates: the Sieve female coefficient for large firms is positive in every year from 2013 onward (and in 2011), with 2010 and 2012 the only exceptions (a near-zero $-0.001$ and a $-0.054$), confirming that the sign reversal documented in Table~\ref{tab: main results} is not driven by a single cohort. The year-by-year estimates are more volatile (ranging from $-0.054$ to $+0.100$) than the subperiod estimates in Table~\ref{tab: dynamics results} ($-0.015$ to $+0.061$), reflecting the smaller within-year samples. For occupation-field sorting, the gender gap estimates tend to shrink toward zero: the non-STEM Sieve estimate moves from $-0.078$ in 2009 to $-0.040$ in 2019, while the STEM coefficient is moderately negative in 2008--2013 (between $-0.015$ and $-0.085$) and oscillates around zero from 2014 onward. The public-sector gap oscillates around zero in every year, confirming that the regulated salary structure eliminates the gender wage gap within the public sector regardless of which year's selection mechanism is used. The private-sector gap narrows from $-0.075$ in 2009 to $-0.036$ in 2019, indicating that the convergence documented in the subperiod analysis is robust to year-by-year estimation.

\section{Overlap Diagnostics}\label{sec: appendix overlap}

Table~\ref{tab: overlap} reports quantiles of the estimated selection probabilities $\hat{P}(D_i = k | X_i)$ from the ordered probit and sieve MNL first stages, separately by gender. For the ordered probit (Panel A), the minimum estimated probability is 0.038 for large firms, and the 5th percentile exceeds 0.12 in all categories, indicating strong overlap. Women have higher non-participation probabilities (median 0.40 vs.\ 0.32 for men) and lower large-firm probabilities (median 0.21 vs.\ 0.28). For the sieve MNL (Panels B and C), the flexible B-spline specification produces some near-zero probabilities at the extremes, but the 5th percentiles are well above zero for the main categories. In the occupation-type model, women have substantially lower STEM probabilities (median 0.04 vs.\ 0.21 for men), reflecting differential major and occupation choices. In the sector model, women have higher public-sector probabilities (median 0.13 vs.\ 0.09), reflecting the attraction of regulated public-sector employment.

\begin{table}[h!]
\centering
\caption{Distribution of estimated selection probabilities}\label{tab: overlap}
\setlength{\tabcolsep}{3pt}
\resizebox{\textwidth}{!}{%
\begin{tabular}{ll ccccc ccccc ccccc}
\toprule
& & \multicolumn{5}{c}{A. Ordered probit} & \multicolumn{5}{c}{B. MNL (field)} & \multicolumn{5}{c}{C. MNL (sector)} \\
\cmidrule(lr){3-7} \cmidrule(lr){8-12} \cmidrule(lr){13-17}
& & Min & $p_5$ & Med & $p_{95}$ & Max & Min & $p_5$ & Med & $p_{95}$ & Max & Min & $p_5$ & Med & $p_{95}$ & Max \\
\midrule
$P(D=0|X)$ & Male   & 0.082 & 0.212 & 0.321 & 0.469 & 0.765 & 0.000 & 0.227 & 0.336 & 0.479 & 0.997 & 0.000 & 0.224 & 0.338 & 0.484 & 1.000 \\
           & Female & 0.060 & 0.253 & 0.398 & 0.537 & 0.759 & 0.000 & 0.218 & 0.385 & 0.516 & 0.856 & 0.000 & 0.221 & 0.387 & 0.518 & 0.824 \\[2pt]
$P(D=1|X)$ & Male   & 0.197 & 0.363 & 0.395 & 0.399 & 0.400 & 0.000 & 0.247 & 0.413 & 0.689 & 0.978 & 0.000 & 0.038 & 0.085 & 0.268 & 0.746 \\
           & Female & 0.201 & 0.336 & 0.386 & 0.399 & 0.400 & 0.062 & 0.128 & 0.511 & 0.672 & 0.982 & 0.000 & 0.054 & 0.131 & 0.490 & 0.779 \\[2pt]
$P(D=2|X)$ & Male   & 0.038 & 0.166 & 0.281 & 0.403 & 0.634 & 0.000 & 0.023 & 0.207 & 0.441 & 0.960 & 0.000 & 0.326 & 0.566 & 0.698 & 1.000 \\
           & Female & 0.040 & 0.127 & 0.215 & 0.351 & 0.694 & 0.000 & 0.020 & 0.039 & 0.647 & 0.824 & 0.089 & 0.180 & 0.463 & 0.617 & 1.000 \\
\bottomrule
\end{tabular}}
\smallskip

\parbox{\textwidth}{\footnotesize\textit{Notes:} Estimated selection probabilities from the ordered probit (Panel A), sieve multinomial logit for field (Panel B), and sieve multinomial logit for sector (Panel C). $D = 0$: non-participation; $D = 1$: SME / non-STEM / public; $D = 2$: large firm / STEM / private. $N = 189{,}589$ (ordered and occupation-type) and $188{,}459$ (sector) after dropping observations with missing covariates.}
\end{table}

\section{First-Stage Selection Estimation}\label{sec: appendix first stage}

This appendix details the first-stage estimation. In the ordered selection model, I use a \emph{generalized additive model} (GAM), which extends a generalized linear model by allowing the linear index to depend on continuous covariates through unknown smooth functions estimated nonparametrically. The probit GAM for $P(D_i \ge k \mid X_i)$ takes the form
\[
\Phi^{-1}\bigl(P(D_i \ge k \mid X_i)\bigr)
\;=\;
\alpha_k + Z_i'\gamma_k
+ \sum_{\ell} f_{k\ell}(X_{i\ell})
+ \sum_{\ell < m} g_{k\ell m}(X_{i\ell}, X_{im}),
\]
where $Z_i$ collects the categorical and binary covariates entering the index linearly (female, four-year university, major category, university founding type, 17 province/city-level school-region indicators, and year fixed effects), all of which are determined at or before graduation and are therefore available for participants and non-participants alike, and each marginal smooth term $f_{k\ell}$ and each tensor smooth term $g_{k\ell m}$ is an unknown nonparametric function approximated by penalized cubic regression splines. A marginal smooth $f(x) = \sum_{b=1}^{k_b} \theta_b B_b(x)$ is a linear combination of $k_b$ cubic B-spline basis functions $B_b$ on the support of $x$. Its spline coefficients $\theta$ are shrunk toward those of a straight line by a roughness penalty $\lambda \int [f''(x)]^2\,dx$ on the integrated squared second derivative. The smoothing parameter $\lambda$ controls flexibility: $\lambda \to \infty$ collapses the smooth to a straight line, $\lambda \to 0$ permits an essentially interpolating curve. A tensor smooth $g(x_1, x_2)$ is the bivariate analogue on a tensor-product B-spline basis with margin-specific penalties.

I implement this in R using the \texttt{mgcv} package. The probit GAM is fit by \texttt{mgcv::gam} with marginal smooths \texttt{s(age, bs="cr", k=10)}, \texttt{s(gpa, bs="cr", k=10)}, and \texttt{s(income, bs="cr", k=5)}, plus pairwise tensor interactions \texttt{ti(age, gpa, bs="cr", k=c(5,5))} and analogues for the other two pairs of continuous covariates; \texttt{bs="cr"} specifies a cubic regression spline basis and \texttt{k} sets the basis dimension. The argument \texttt{family = binomial(link = "probit")} specifies the binary probit GAM, which \texttt{mgcv} fits by penalized iteratively reweighted least squares. The smoothing parameters $\{\lambda_{k\ell}\}$ are selected by restricted maximum likelihood (REML), a marginal-likelihood criterion that profiles the fixed-effect parameters out of the likelihood and chooses each smooth's flexibility to maximize the resulting profile criterion. This gives a fully data-driven balance between bias and variance and avoids ad-hoc tuning.

Table~\ref{tab: first stage} reports the effective degrees of freedom (edf) selected for each smooth term in the ordered selection model. The edf of a fitted smooth equals the trace of the smoother matrix (the linear operator that maps the response to its fitted values) and lies between 0 (smooth penalized to zero) and the maximum basis dimension. An edf of 1 corresponds to a straight line, and values close to the maximum indicate the data demand essentially the full flexibility of the basis. For the ordered architecture, separate probit GAMs estimate $P(D_i \geq 1 \mid X_i)$ and $P(D_i = 2 \mid X_i)$. Both equations exhibit substantial first-stage nonlinearity. The marginal smooths in age and GPA reach essentially the full basis flexibility (edf 6.7--8.7 of 9), and the parental-income marginal saturates its smaller basis (edf 3.0--3.6 of 4). The tensor interactions are estimated at intermediate flexibility, with $\text{ti}(\text{age}, \text{GPA})$ and $\text{ti}(\text{GPA}, \text{income})$ taking moderate edf values; only $\text{ti}(\text{age}, \text{income})$ in $P(D_i \geq 1 \mid X_i)$ is shrunk close to a near-linear surface (edf 1.8). The first stage therefore demands genuine nonparametric flexibility, providing identifying variation in the absence of an exclusion restriction in the ordered selection model.

\begin{table}[h!]
\centering
\caption{First-stage GAM smooth term diagnostics (ordered selection)}\label{tab: first stage}
\small
\setlength{\tabcolsep}{4pt}
\begin{tabular}{l rrrr}
\toprule
& \multicolumn{2}{c}{$P(\text{employed})$} & \multicolumn{2}{c}{$P(\text{large firm})$} \\
\cmidrule(lr){2-3} \cmidrule(lr){4-5}
Smooth term & edf & max & edf & max \\
\midrule
$s(\text{age})$                                    & 8.0  & 9 & 8.7 & 9 \\
$s(\text{GPA})$                                    & 6.7  & 9 & 8.6 & 9 \\
$s(\text{income})$                                  & 3.6  & 4 & 3.0 & 4 \\
$\text{ti}(\text{age}, \text{GPA})$                & 10.1 & 16 & 8.9 & 16 \\
$\text{ti}(\text{age}, \text{income})$             & 1.8  & 16 & 8.6 & 16 \\
$\text{ti}(\text{GPA}, \text{income})$             & 8.5  & 16 & 6.4 & 16 \\
\bottomrule
\end{tabular}
\smallskip

\parbox{\textwidth}{\textit{Note:} ``edf'' is the effective degrees of freedom selected by REML; ``max'' is the maximum edf allowed by the basis dimension ($k - 1$ for marginal terms, $(k_1-1)(k_2-1)$ for tensor interactions). First-stage GAMs use $k = 10$ for age and GPA, $k = 5$ for parental income (which has limited variation from ordinal brackets), and $k = 5$ for tensor interactions. An edf close to the maximum indicates the smooth is constrained by the basis dimension. $N = 189{,}589$.}
\end{table}

For multinomial selection models, the first stage uses a sieve MNL, not a penalized GAM, so a comparable edf summary does not apply. The sieve MNL first stage is estimated by \texttt{nnet::multinom} in R with cubic B-spline bases for age, GPA, and parental income, and pairwise tensor product interactions. Because there is no penalization, the natural nonlinearity diagnostic is the joint significance of each smooth-term group rather than an edf. Table~\ref{tab: mnl first stage} reports joint Wald tests on each marginal basis and each tensor interaction. Each test asks whether the indicated subset of multinomial logit coefficients is jointly zero across both non-base equations; under the null, the statistic is asymptotically $\chi^2_q$ with $q$ equal to the number of restrictions.

\begin{table}[H]
\centering
\caption{First-stage joint Wald tests on the multinomial sieve basis (Models B and C)}\label{tab: mnl first stage}
\small
\setlength{\tabcolsep}{6pt}
\begin{tabular}{l c rl rl}
\toprule
& & \multicolumn{2}{c}{STEM/non-STEM} & \multicolumn{2}{c}{public/private} \\
\cmidrule(lr){3-4} \cmidrule(lr){5-6}
Smooth term & $q$ & Wald & $p$ & Wald & $p$ \\
\midrule
age (marginal)         & 10  &    4.8 & 0.90                  &    4.2 & 0.94 \\
GPA (marginal)         & 10  &    3.6 & 0.97                  &    4.8 & 0.91 \\
income (marginal)      &  8  &    3.7 & 0.88                  &    3.3 & 0.91 \\
age $\times$ GPA       & 50  &   79.5 & 0.005                 &  107.3 & $4.7\!\times\!10^{-6}$ \\
age $\times$ income    & 40  &   76.0 & $5.2\!\times\!10^{-4}$  &   72.6 & 0.001 \\
GPA $\times$ income    & 40  &   39.3 & 0.50                  &   34.1 & 0.73 \\
\midrule
All marginals          & 28  &   13.1 & 0.99                  &   13.6 & 0.99 \\
All tensors            & 130 &  200.7 & $6.9\!\times\!10^{-5}$  &  221.3 & $1.1\!\times\!10^{-6}$ \\
All spline\,$+$\,tensor   & 158 & 1360.0 & $<10^{-16}$            & 1604.8 & $<10^{-16}$ \\
\bottomrule
\end{tabular}
\smallskip

\parbox{\textwidth}{\footnotesize\textit{Note:} Joint Wald statistic on the indicated subset of MNL coefficients across both non-base equations, with $q$ degrees of freedom from the asymptotic $\chi^2_q$ null. ``Marginal'' refers to the $\mathrm{bs}(x, df)$ basis terms (df $= 5$ for age and GPA, df $= 4$ for parental income); ``$x \times y$'' refers to the corresponding pairwise tensor interaction basis. The bottom three rows aggregate the per-term tests. The aggregate ``All spline\,$+$\,tensor'' Wald rejects the linear-categorical-only baseline at the $\chi^2_{158}$ floor of machine precision in both architectures.}
\end{table}

The joint test on \emph{all} 158 spline and tensor basis coefficients rejects the linear-categorical-only baseline at the floor of machine precision in both architectures: the multinomial first stage is unambiguously and highly nonlinear. The variation lives in the tensor interactions rather than in the marginal smooths. The marginal $\mathrm{bs}(\cdot)$ basis terms alone are not jointly significant after controlling for the pairwise tensors, while the tensor block is significant at $p < 10^{-4}$. This is expected: tensor-product splines span functions that include the main effects, so the marginal-only Wald in the presence of the tensors loses power. The aggregate test is the appropriate diagnostic for first-stage flexibility.

\section{Second-Stage Control Function Diagnostics}\label{sec: appendix cf tests}

The second stage uses fixed-basis OLS so I report joint heteroskedasticity-robust Wald tests on the second-stage control-function basis. Table~\ref{tab: cf joint significance} reports three nested Wald tests for each architecture-cell pair. Five of the six cells use a bivariate ($L = 2$) CF: the SME (interior) cell of the ordered architecture and all four cells of the two multinomial architectures. The remaining cell, Large (top) of the ordered architecture, uses a univariate ($L = 1$) CF by construction.

\textit{(i) Full CF basis.} For the bivariate-CF cells the basis is $\mathrm{bs}(\hat p_1, df=6) + \mathrm{bs}(\hat p_2, df=6) + \mathrm{bs}(\hat p_1, df=4)\!:\!\mathrm{bs}(\hat p_2, df=4)$ (a 28-dimensional basis). For the Large cell it is just $\mathrm{bs}(\hat p_2, df=6)$ (a 6-dimensional basis). The null is that every basis coefficient is zero; rejection says selection bias is empirically present in the cell.

\textit{(ii) Orthogonal subspace} (bivariate-CF cells only). The null is that the coefficients on all bivariate-basis terms other than those of $\mathrm{bs}(\hat p_{\text{own}}, df=6)$ are jointly zero ($q = 22$), i.e., that the bias depends only on the own-category probability through a single index. In the multinomial architectures rejection refutes the single-index restrictions used by MLogit and Exch-$L1$. In the ordered SME cell, rejection refutes the analogous single-index restriction underlying the parametric ordered-probit correction. The Large cell is omitted from this test because its CF is already a single index in $\hat p_{\text{own}}$ by construction.

\textit{(iii) Tensor interaction} (bivariate-CF cells only). The null is that the bivariate tensor $\mathrm{bs}(\hat p_1, df=4)\!:\!\mathrm{bs}(\hat p_2, df=4)$ is jointly zero ($q = 16$), i.e., that the bias correction is additively separable in the two probabilities as $f_1(\hat p_1) + f_2(\hat p_2)$. Rejection says the bias is genuinely two-dimensional and that an additive correction would not suffice. Again, the test does not apply to the Large cell since its CF has no tensor component.

The tests are nested: rejection of (iii) implies rejection of (ii), which in turn implies rejection of (i). The full CF basis is jointly significant at the floor of machine precision in every cell of every architecture, confirming that selection bias is empirically present everywhere. The orthogonal subspace is also jointly significant at $p < 10^{-16}$ in every cell, indicating that the single-index restrictions are empirically binding in all three architectures. The tensor interactions are also strongly significant across all cells, indicating the selection bias is not additively separable in the control functions.

\begin{table}[h!]
\centering
\caption{Joint heteroskedasticity-robust Wald tests on the second-stage Sieve basis}\label{tab: cf joint significance}
\setlength{\tabcolsep}{5pt}
\begin{tabular}{l ccc ccc}
\toprule
& \multicolumn{3}{c}{Log hourly wage} & \multicolumn{3}{c}{Log monthly wage} \\
\cmidrule(lr){2-4} \cmidrule(lr){5-7}
& Full CF & Orth.\ & Tensor & Full CF & Orth.\ & Tensor \\
& ($q\!=\!28$) & ($q\!=\!22$) & ($q\!=\!16$) & ($q\!=\!28$) & ($q\!=\!22$) & ($q\!=\!16$) \\
\midrule
\multicolumn{7}{l}{\textit{A. Ordered selection (firm size)}} \\[2pt]
SME (interior)              & 522.8 & 249.3 & 78.0  & 824.3 & 357.8 & 99.1 \\
Large (top)\textsuperscript{a} & 527.4 & ---   & ---   & 749.8 & ---   & --- \\
\addlinespace
\multicolumn{7}{l}{\textit{B. Multinomial selection (occupation)}} \\[2pt]
non-STEM                    & 664.2 & 443.2 & 249.0 & 900.6 & 564.5 & 268.2 \\
STEM                        & 227.4 & 186.2 & 92.5  & 353.8 & 299.6 & 100.1 \\
\addlinespace
\multicolumn{7}{l}{\textit{C. Multinomial selection (sector)}} \\[2pt]
Public                      & 228.3 & 163.4 & 48.0  & 358.1 & 221.0 & 50.8 \\
Private                     & 536.8 & 293.3 & 158.2 & 823.9 & 497.5 & 286.9 \\
\bottomrule
\end{tabular}
\smallskip

\parbox{\textwidth}{\footnotesize\textit{Note:} Each cell reports the joint Wald statistic for the indicated subset of the second-stage Sieve basis; $q$ in the column header is the number of restrictions. ``Full CF'' tests that all second-stage CF basis terms are jointly zero. ``Orth.'' tests the subspace of the bivariate basis orthogonal to a single-index function of the own-category probability. ``Tensor'' tests the bivariate tensor interaction in isolation. $^{\mathrm{a}}$The Large (top) category uses a univariate ($L = 1$) control function, so the orthogonal-subspace and tensor tests do not apply.}
\end{table}

\end{document}